\date{}
\newtheorem{theorem}{Theorem}
\newtheorem{lemma}[theorem]{Lemma}
\newtheorem{proposition}[theorem]{Proposition}
\newtheorem{corollary}[theorem]{Corollary}
\theoremstyle{remark}
\newcommand{\Ai}{\mathop{\rm Ai}}
\begin{document}

\title{Resonances in the one dimensional Stark effect in the limit of small field}

\author{Richard Froese and Ira Herbst}

\maketitle

\begin{abstract}
We discuss the resonances of Hamiltonians with a constant electric field in one dimension in the limit of small field.  These resonances occur near the real axis, near zeros of the analytic continuation of a reflection coefficient for potential scattering, and near the line $\arg z = - 2\pi/3$.  We calculate their asymptotics.  In conclusion we make some remarks about the higher dimensional problem.
\end{abstract}

\tableofcontents

\section{Introduction}
The purpose of this paper is to calculate the asymptotics of the resonances of the operator $H_f = -d^2/dx^2 + V(x) + fx$ for small $f> 0$.  Here $V$ is a real bounded potential of compact support.  The potential $fx$ represents an electric field of strength $f$ in the negative $x$ direction acting on a particle of charge $1$ and mass $1/2$. There have been several simple models considered which contribute to the understanding of the existence and non-existence of resonances with an emphasis on the fate of pre-existing resonances of $H =  -d^2/dx^2 + V(x) $  (see \cite {HR, HM, JY}) as $f \to 0$.  None of these references treat the model given by $H_f$ above although in \cite{HM} a specific model is treated where $V $ is replaced by the sum of two delta functions.  In this paper we are not only interested in pre-existing resonances which mostly disappear in the limit $f \to 0$, rather we attempt to calculate the limiting behavior of all the resonances of  $H_f$ for small $f > 0$ (at least those in a compact set of the plane).  We would like to mention papers \cite{EK1, EK2} which treat the resonances of $H_f$ for $f=1$ at high energy with a compact support $V$ satisfying some assumptions.  There is a relation between high energy with $f=1$ and small $f$ with fixed energy but with a much changed potential $V$.

\vspace {.1in}

We define a resonance as a pole in the meromorphic continuation of the resolvent $(H_f - z)^{-1}$ in the open lower half plane.  Sometimes we will refer to $k= \sqrt z$ as a resonance when $z$ is a resonance.  Our results show that as $f \to 0 $ resonances are of four different types in any compact set of the closed lower half plane:\\

\begin{flushleft}
1) Resonances crowd densely along the positive real axis converging to this axis. We compute these in Section \ref{pra}.\\
\vspace{.1in}

2) Resonances crowd densely along the line $\arg z = - 2\pi/3$.  These are computed in Section \ref{line}.\\
\vspace{.1in}

3) Resonances converge to points $z$ where an analytic continuation of a certain reflection coefficient for scattering with the potential $V$ vanishes.  This occurs only when in addition $-2\pi/3 < \arg z < 0$ and these points are the only limits of resonances of $H_f$ in this sector.  This is discussed in Section \ref{rf}. \\
\vspace{.1in}
4) Resonances move continuously into the lower half plane from the negative eigenvalues of $H$ as $f$ increases from $0$. The imaginary parts are exponentially small as $f \to 0$.  They are computed in Section \ref{rnnra}.  For small $f$ there are no resonances in the region $ - \pi + \delta < \arg z < -2\pi/3 - \delta$ for $\delta > 0$ (see Theorem \ref{noresonances} for a finer description of this region).   
\end{flushleft}
We also have some results for higher dimensions but they represent only a start on the problem.  See Section \ref{ndimensions}.

\section{Resonance - free regions}

We write $V = V_1V_2$ with $V_j $ real and bounded and $ H_{f,0} = H_f - V$.  An easy computation verifies that for $\text{Im} z \ne 0$,

$$ I - V_1 (H_f -z)^{-1}V_2 = (I + V_1 (H_{f,0} -z)^{-1}V_2)^{-1}.$$
It is easy to see that $K_{0,f}(z): = V_1(H_{f,0}-z)^{-1}V_2$ has an entire analytic continuation from the upper half plane to $\mathbb{C}$ (see (\ref{K}) and the discussion below) and thus since the analytic continuation of $K_{0,f}(z)$ is compact $I - V_1 (H_f -z)^{-1}V_2$ has a meromorphic continuation to  $\mathbb{C}$. We define a resonance as a pole of this  meromorphic continuation. \\ 
Note the resolvent equation 
\begin{equation}\label{resolventeqn}
(H_f-z)^{-1} = (H_{f,0} -z)^{-1} - (H_{f,0} -z)^{-1}V_2(1+  K_{0,f}(z))^{-1}V_1 (H_{f,0} -z)^{-1}
\end{equation}
from which we can see, in particular, that if there is a pole in the meromorphic continuation of $ V_1 (H_f -z)^{-1}V_2$ then there is a pole in  the meromorphic continuation of the integral kernel of $(H_f -z)^{-1}$ and vice versa.\\
Using the Airy function defined as $\Ai(x) = \lim_{R \to \infty} (2\pi)^{-1}\int _{-R}^R e^{i(k^3/3 + kx)}dk$,  it follows that the unitary operator, $e^{\pm i p^3/3}$ ($p=-id/dx$), has an integral kernel:

\begin{equation}
\exp(\pm ip^3/3)f(x) = \int \Ai(\pm(x-y)) f(y)dy.
\end{equation}
We obtain the kernel $K_{0,f}(x,y;z)$ of the operator $K_{0,f}(z)$  for $\text{Im} z \neq 0$ using the identity $e^{ip^3/3f} fx e^{-ip^3/3f} = p^2 + fx$,
\begin{align} \label{K}
K_{0,f}(x,y;z)= f^{-1/3}V_1(x) \Big(\int \Ai(f^{1/3}(x-t))(t-z/f)^{-1}\Ai(f^{1/3}(y-t))dt\Big) V_2(y).
\end{align}
Using the fact that the Airy function is entire we can analytically continue the operator $K_{0,f}(z) = V_1(H_{f,0} - z)^{-1}V_2$ into the lower half plane.  This can be accomplished by distorting the contour in (\ref{K}) locally into the lower half plane in a neighborhood of $ \text{Re} (z/f)$ and picking up a pole term so that the analytic continuation  of $V_1(H_{f,0} - z)^{-1}V_2$ into the lower half plane has kernel $K_{0,f,c}(x,y;z)$ given by  

\begin{align} \label{K_c}
K_{0,f,c}(x,y;z) = 2\pi i f^{-1/3}g_1(x;z)g_2(y;z) + K_{0,f}(x,y;z)
\end{align}
where in (\ref{K_c}) $\text {Im z} < 0$ and 

\begin{equation}
g_j(x;z) = V_j(x) \Ai(f^{1/3}(x-z/f))
\end{equation}
With the notation  $(V_1(H_{f,0} - z)^{-1}V_2)_c$ for the analytic continuation of  $V_1(H_{f,0} - z)^{-1}V_2$ into the lower half plane we have,

\begin{align*} 
 & I + (V_1(H_{f,0} - z)^{-1}V_2)_c = I+  Q  + V_1(H_{f,0} - z)^{-1}V_2  \label {OP_c}\\ 
& Q =  (e_2, \cdot )e_1 \\
&  e_1 = 2\pi i f^{-1/3} g_1\\
&  e_2 = \overline g_2.
\end{align*}
The operator with integral kernel $K_{0,f,c}$ has eigenvalue $-1$ when $z=z_0$ in the lower half plane if and only if $z_0$ is a resonance.  This follows from the Fredholm alternative.
Thus the resonances are points $z$ in the lower half plane for which there is a non-trivial solution to 
\begin{align}
(I + Q + K_{0,f}(z))\psi = 0.
\end{align}
This equation has a solution only if $(I + K_{0,f}(z))\psi =-Q\psi = ce_1$.  Substituting back into
\begin{align*}
&\Big((I + Q(I + K_{0,f}(z))^{-1}\Big)(I+K_{0,f}(z))\psi = 0\\
\end{align*} 
shows that we can have $c\ne 0$ if and only if $1+ (e_2,(I+K_{0,f}(z))^{-1}e_1) = 0$  which gives 

\begin{theorem}\label{G}
Define $A_2(x) = \Ai(f^{1/3}(x-z/f)), e_1(x) = 2\pi i f^{-1/3}V_1(x)A_2(x), e_2(x) = V_2(x)\overline{A_2(x)}$ and
\begin{equation}\label{Gfunction}
G(z,f) = 1 +  (e_2,(I-V_1(H_f -z)^{-1}V_2)e_1).
\end{equation}
The equation $G(z,f) = 0$ determines the resonances of $H_f$ in the lower half plane .\\  
\end{theorem}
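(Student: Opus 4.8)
The plan is to convert the operator-level condition recorded just above --- that $z$ in the open lower half plane is a resonance if and only if $-1$ is an eigenvalue of the continued operator $K_{0,f,c}(z)$, i.e.\ $I+K_{0,f,c}(z)$ has nontrivial kernel --- into the scalar equation $G(z,f)=0$ by peeling off the rank-one perturbation $Q$. First I would note that for $\mathrm{Im}\,z<0$ the \emph{uncontinued} operator $I+K_{0,f}(z)$ is invertible, with $\big(I+K_{0,f}(z)\big)^{-1}=I-V_1(H_f-z)^{-1}V_2$; this is exactly the identity displayed near the beginning of this section, valid for $\mathrm{Im}\,z\ne 0$ because $H_{f,0}$ and $H_f$ are self-adjoint and, for $\mathrm{Im}\,z<0$, the undistorted kernel (\ref{K}) already represents the genuine resolvent, so $K_{0,f}(z)=V_1(H_{f,0}-z)^{-1}V_2$ there with no contour distortion (the distortion is only needed to continue \emph{across} the real axis, and is what produces the extra rank-one term in $K_{0,f,c}$). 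Writing $R(z)=\big(I+K_{0,f}(z)\big)^{-1}$ and using $K_{0,f,c}(z)=K_{0,f}(z)+Q$ from (\ref{K_c}), this gives the exact factorization
\[ I+K_{0,f,c}(z)=\big(I+Q\,R(z)\big)\big(I+K_{0,f}(z)\big), \]
whose right-hand factor is invertible; hence $I+K_{0,f,c}(z)$ has nontrivial kernel if and only if $I+Q\,R(z)$ does, the two kernels corresponding under $\psi\mapsto\big(I+K_{0,f}(z)\big)\psi$.

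Next I would exploit that $Q=(e_2,\cdot\,)e_1$ has range in $\mathrm{span}\{e_1\}$. If $\phi\ne 0$ lies in the kernel of $I+Q\,R(z)$, then $\phi=-Q\,R(z)\phi=-\big(e_2,R(z)\phi\big)e_1$, so $\phi=\lambda e_1$ with $\lambda=-\big(e_2,R(z)\phi\big)$, and $\lambda\ne 0$ since $\lambda=0$ would force $\phi=0$. Substituting $\phi=\lambda e_1$ and cancelling $\lambda e_1$ yields $1+\big(e_2,R(z)e_1\big)=0$; conversely, whenever this scalar equation holds, $\phi=e_1$ is a nonzero kernel vector. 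Hence $z$ is a resonance if and only if $1+\big(e_2,R(z)e_1\big)=0$, and inserting $R(z)=I-V_1(H_f-z)^{-1}V_2$ turns this into $1+\big(e_2,(I-V_1(H_f-z)^{-1}V_2)e_1\big)=G(z,f)=0$, which is the assertion.

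The two points that need a word of care --- and both are already in place --- are the legitimacy of the factorization, which rests entirely on invertibility of $I+K_{0,f}(z)$ on the open lower half plane, and the scalar bookkeeping in the rank-one reduction, i.e.\ checking that the kernel vector $\phi$ must be a \emph{nonzero} multiple of $e_1$, so that no resonance is missed and none is spuriously introduced. I expect the only (mild) obstacle to be phrasing the rank-one step so that the equivalence runs cleanly in both directions; everything else is the Fredholm and analytic-continuation machinery already assembled before the statement.
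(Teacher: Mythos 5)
Your proposal is correct and follows essentially the same route as the paper: the factorization $I+K_{0,f,c}(z)=\bigl(I+Q\,(I+K_{0,f}(z))^{-1}\bigr)\bigl(I+K_{0,f}(z)\bigr)$, the invertibility of the right-hand factor via $(I+K_{0,f}(z))^{-1}=I-V_1(H_f-z)^{-1}V_2$ for $\mathrm{Im}\,z<0$, and the rank-one reduction of $I+Q R(z)$ to the scalar condition $1+(e_2,R(z)e_1)=0$. You merely spell out the converse direction (that $e_1$ itself is a kernel vector when $G=0$) more explicitly than the paper does.
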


It is easy to find a formula for the continued resolvent of $H_f$ using a simple formula for $(I -S)^{-1}$ where $S$ is a rank one operator satisfying $S^2 = \alpha S$ for some $\alpha \ne 1$. We have $$ (I-S)^{-1} = I + (1-\alpha)^{-1} S.$$

With the notations $K_f = V_1(H_f -z)^{-1}V_2$ and $K_{f,c} = (V_1(H_f -z)^{-1}V_2)_c$ we obtain $$ K_{f,c} =K_f + \frac{(I - K_f)|e_1\rangle \langle e_2|(I - K_f)}{G(z,f)},$$
where a subscript $c$ indicates meromorphic continuation across the real axis from the upper half plane to the lower half plane.  

Let $k = \sqrt z$  and $\arg k \in [-\pi/2, 0]$.   We need to obtain the asymptotics of $G(z,f)$ for small $f$.  Using expansions in Abramowitz and Stegun \cite{AS} ($10.4.59, 10.4.60$), for $x$ in a compact set and $|k| > \delta_1 > 0$ we have

\begin{align}
\Ai(f^{1/3}(x-z/f)) &= \frac{f^{1/6}}{2\sqrt {\pi k}}e^{(i\zeta_x - i\pi/4)}(1 + O(f)) ; \arg k \in [-\pi/2, -\delta) \label {A_2}\\
= &\frac{f^{1/6}}{2\sqrt {\pi k}}e^{i(\zeta_x -\pi/4)}(1 + O(f) + i e^{-2i\zeta_x}); \arg k \in (-\pi/3 + \delta, 0] \label {A'_2}
\end{align} 
where $\zeta_x = \frac{2k^3}{3f} -kx$. 
We can use (\ref{A'_2}) to conclude that if $ \arg k \in [-\pi/6,0]$ and $\text {Im} k < - \frac{f\log(1/f)}{8|k|^2} +Cf/|k|^2$ then the equality in (\ref{A_2}) also holds.  This follows from the simple estimate $|e^{-4ik^3/3f}| \le e^{-8|k|^2|\text{Im}k|/3f}$ for $\arg k \in [-\pi/6,0]$.

\begin{lemma}\label{rdiff}
Suppose $0< \delta < |k| < \delta^{-1}$ and $\arg k  \in (-\pi/6, 0)$.  Then there is a constant $C>0$ so that if both $f>0$ and $|e^{-4ik^3/3f}|$ are small enough 
$$||V_1(H_f-z)^{-1}V_2 - V_1(H-z)^{-1}V_2|| \le C|e^{-4ik^3/3f}| + O(f).$$
Here $z=k^2$.
\end{lemma}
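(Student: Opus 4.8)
The plan is to compare the two resolvents $(H_f-z)^{-1}$ and $(H-z)^{-1}$ through their respective Birman–Schwinger kernels, exploiting the resolvent identity already set up in the excerpt. First I would write $K_{0,f,c}(z) = K_{0,f}(z) + Q$ for the continued free Stark kernel and $K_0(z) = V_1(H_0-z)^{-1}V_2$ for the ordinary free kernel (with $H_0 = -d^2/dx^2$, continued in the standard way to $\arg k \in(-\pi,0]$). Using \eqref{resolventeqn} and its analogue for $f=0$, the difference $V_1(H_f-z)^{-1}V_2 - V_1(H-z)^{-1}V_2$ is expressed, via the second resolvent identity applied twice, in terms of: (i) the difference $K_{0,f,c}(z) - K_0(z)$ of the free kernels, and (ii) the inverses $(I+K_{0,f,c}(z))^{-1}$ and $(I+K_0(z))^{-1}$, which are uniformly bounded on the stated compact parameter region provided $z$ stays away from the (finitely many) resonances of $H$ — and there are none in the open sector $\arg k\in(-\pi/6,0)$ for $f$ small, by the discussion preceding the lemma. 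So the whole estimate reduces to controlling $\|K_{0,f,c}(z)-K_0(z)\|$.

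The key step is therefore the asymptotics of the free Stark kernel. From \eqref{K} and \eqref{K_c}, $K_{0,f,c}(x,y;z)$ splits into the rank-one ``pole'' term $2\pi i f^{-1/3}g_1(x;z)g_2(y;z)$ and the principal-value-type integral $K_{0,f}(x,y;z)$. For the rank-one term I would insert the Airy asymptotics \eqref{A_2}: since $\arg k\in(-\pi/6,0)$ one may use the form \eqref{A_2} once $\mathrm{Im}\,k$ is below the stated threshold, which forces $|e^{-4ik^3/3f}|$ small; this gives $2\pi i f^{-1/3} g_1 g_2 = \frac{i}{2k}\,V_1(x)e^{i\zeta_x}V_2(y)e^{i\zeta_y}(1+O(f))$, and because $\zeta_x+\zeta_y = \tfrac{4k^3}{3f}-k(x+y)$ carries the factor $e^{4ik^3/3f}$, the norm of this rank-one operator is $O(|e^{-4ik^3/3f}|) + O(f)$. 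For the integral term $K_{0,f}(x,y;z)$ one expects, again by the Airy asymptotics together with a contour/stationary-phase analysis of the $t$-integral, that it converges as $f\to0$ to the free kernel $\frac{i}{2k}V_1(x)e^{ik|x-y|}V_2(y)$ (the kernel of $K_0(z)$ on the physical sheet), with an $O(f)$ remainder on compact sets in $x,y$; since $V_1,V_2$ have compact support this is a Hilbert–Schmidt (hence operator-norm) estimate. Combining, $\|K_{0,f,c}(z)-K_0(z)\| \le C|e^{-4ik^3/3f}| + O(f)$, and feeding this into the resolvent-difference identity of the previous paragraph yields the claim.

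The main obstacle I expect is the uniform control of the integral term $K_{0,f}(x,y;z)$: one must show that the oscillatory integral $\int \Ai(f^{1/3}(x-t))(t-z/f)^{-1}\Ai(f^{1/3}(y-t))\,dt$, after the scaling $t\mapsto z/f + s$, really does reproduce the free Green's function uniformly for $x,y$ in the support of $V$ and $k$ in the stated compact set, with an honest $O(f)$ error. This requires the Airy asymptotics to be valid uniformly along the relevant part of the contour (not just pointwise for $t$ in a compact set), so one has to be careful near the turning region $t\approx z/f$ and to track how the pole of $(t-z/f)^{-1}$ interacts with the oscillation; the $f\to0$ limit here is essentially a semiclassical/WKB computation. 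Everything else — the two applications of the resolvent identity, the uniform invertibility of $I+K_0(z)$ and $I+K_{0,f,c}(z)$ away from resonances, and the bookkeeping of the exponentially small terms — is routine once this convergence is in hand.
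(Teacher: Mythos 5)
Your overall architecture (estimate the difference of the free Birman--Schwinger kernels, then transfer to the full resolvents via a perturbation identity for $(1+A)^{-1}-(1+B)^{-1}$ with a limiting-absorption bound on $I-V_1(H-z)^{-1}V_2$) is the same as the paper's. But the execution of the key step contains a genuine error. For $\arg k\in(-\pi/6,0)$ one has $\mathrm{Re}(4ik^3/3f)>0$, so $|e^{4ik^3/3f}|$ is exponentially \emph{large} and $|e^{-4ik^3/3f}|$ is the small quantity. Your rank-one jump term $Q=2\pi i f^{-1/3}g_1g_2$ carries, by \eqref{A'_2}, the factor $e^{i(\zeta_x+\zeta_y)}=e^{4ik^3/3f}e^{-ik(x+y)}$, so $\|Q\|\sim |e^{4ik^3/3f}|/2|k|\to\infty$ as $f\downarrow 0$; it is not $O(|e^{-4ik^3/3f}|)$ as you claim (this is consistent with Proposition \ref{Gest}, where the matrix element of essentially this operator produces the exponentially large term $e^{4ik^3/3f}(2k)^{-1}F(k)$). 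The underlying confusion is between the continued and the honest resolvents: in this sector $\arg z\in(-\pi/3,0)$, so $\mathrm{Im}\,z<0$ and both $(H_f-z)^{-1}$ and $(H-z)^{-1}$ in the lemma are ordinary bounded resolvents. The relevant free kernel is therefore $K_{0,f}$ \emph{without} the jump term $Q$, and the exponentially small error $|e^{-4ik^3/3f}|$ does not come from $Q$ at all --- it comes from the subleading correction $ie^{-2i\zeta_x}$ in the Airy expansion \eqref{A'_2}. Comparing the continued kernels $K_{0,f,c}$ and $K_{0,c}$, as you propose, would give a difference that blows up, not one that is small.

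The paper's proof sidesteps both this issue and the oscillatory-integral analysis you flag as your main obstacle: it writes the Green's function of $H_{0,f}$ in Sturm--Liouville form, $\tilde A_1(x)A_2(y)/W$ for $x<y$ with $\tilde A_1(x)=\Ai(e^{-2\pi i/3}f^{1/3}(x-z/f))$ the solution decaying at $-\infty$ and $W$ the Wronskian, and reads off directly that this equals the free kernel times $\bigl(1+O(f)+ie^{-i(4k^3/3f-2yk)}\bigr)(1+O(f))$, whence $\|V_1(H_{0,f}-z)^{-1}V_2-V_1(p^2-z)^{-1}V_2\|\le C|e^{-4ik^3/3f}|+O(f)$. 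No stationary-phase argument on the representation \eqref{K} is needed. To repair your proposal you would need to (i) drop $Q$ and work with the honest kernels throughout, and (ii) extract the $|e^{-4ik^3/3f}|$ error from the second term of the Airy asymptotics rather than from the jump term; at that point you would in effect be reproducing the paper's computation.
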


\begin{proof}
The integral kernel of the resolvent $(H_{0,f}-z)^{-1}$ is for $x<y$ given by 
$$\tilde {A_1}(x)A_2(y)/W$$ where $\tilde A_1(x) = \Ai(e^{-2\pi i/3}f^{1/3}(x-z/f))\in L^2(-\infty,0)$, $A_2(x) = \Ai(f^{1/3}(x-z/f)) \in L^2(0,\infty)$ and the Wronskian $W = f^{1/3}e^{i\pi/6}/2\pi$. We calculate for $x<y$, and for $x$ and $y$ in a compact set 
$$\tilde{A_1}(x)A_2(y)/W = (e^{-ik|x-y|}/2k)[1+O(f) + ie^{-i(4k^3/3f -2yk)}](1+O(f)).$$
It follows that 
\begin{equation}\label{perturb}
    ||V_1(H_{0,f}-z)^{-1}V_2 - V_1(p^2 - z)^{-1}V_2|| \le C_1|e^{-4ik^3/3f}| +O(f).
\end{equation}
We have 
$$(1-V_1(H_f-z)^{-1}V_2) -(1- V_1(H-z)^{-1}V_2) = (1+V_1(H_{0,f}-z)^{-1}V_2)^{-1} -(1+ V_1(p^2-z)^{-1}V_2)^{-1}.$$
We use the perturbation formula 
$$(1+A)^{-1} -(1+B)^{-1} = - (1+B)^{-1}(A-B)(1 + (1+B)^{-1}(A-B))^{-1} (1+B)^{-1}.$$
The result now follows from Eq.(\ref{perturb}) and a bound on $(1+B)^{-1} = 1-V_1(H-z)^{-1}V_2$ near the real axis which follows from limiting absorption estimates.

\end{proof}

Let
$$F(k) =   (V_2,e^{-ikx}(I - V_1 (H-z)^{-1}V_2)e^{-ikx}V_1), \hspace{.1cm} z = k^2.$$
In the next proposition we get a handle on the behavior of $G(z,f)$ not too far from the positive real axis when $|\text{Im}k|/f$ is large enough.  The next proposition will be effective only when $F(k)$ does not vanish. 
\begin{proposition}\label{Gest}
Suppose $0< \delta < |k| < \delta^{-1}$ and $\arg k  \in (-\pi/6, 0)$.  Then
$$G(z,f) = 1 + (e_2,(1-V_1(H_f-z)^{-1}V_2)e_1) = 1 + e^{4ik^3/3f} (2k)^{-1}F(k)+E$$
$$|E|\le C + O(f)|e^{4ik^3/3f}|$$
for some constant $C>0$.
\end{proposition}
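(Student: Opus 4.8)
The plan is to substitute the large-$\zeta_x$ Airy asymptotics (\ref{A'_2}) directly into the expression defining $G$. Since $V_1,V_2$ are real, bounded and compactly supported, one has $G(z,f)-1 = 2\pi i f^{-1/3}\int\!\!\int V_2(x)A_2(x)\,T_f(x,y)\,V_1(y)A_2(y)\,dx\,dy$, where $T_f(x,y)$ is the integral kernel of the operator $T_f := 1-V_1(H_f-z)^{-1}V_2$ and the integration runs over $\mathrm{supp}\,V\times\mathrm{supp}\,V$, a fixed compact set. On that set (\ref{A'_2}) gives, uniformly for $\delta<|k|<\delta^{-1}$ and $\arg k\in(-\pi/6,0)$,
$$A_2(x)=\frac{f^{1/6}}{2\sqrt{\pi k}}\,e^{-i\pi/4}\Big(e^{i\zeta_0}e^{-ikx}(1+O(f))+i\,e^{-i\zeta_0}e^{ikx}\Big),\qquad \zeta_0=\frac{2k^3}{3f}.$$

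Multiplying out $A_2(x)A_2(y)$, collecting the prefactor $2\pi i f^{-1/3}$, and using $e^{2i\zeta_0}=e^{4ik^3/3f}$, a short computation yields
$$2\pi i f^{-1/3}A_2(x)A_2(y)=\frac{1}{2k}\Big(e^{4ik^3/3f}u(x)u(y)+i\big(u(x)v(y)+v(x)u(y)\big)-e^{-4ik^3/3f}v(x)v(y)\Big),$$
with $u(x)=e^{-ikx}(1+O(f))$ and $v(x)=e^{ikx}$. Thus $G-1$ splits into three pieces. For the middle and last pieces I will use the uniform operator bound $\|T_f\|\le C$: by Lemma \ref{rdiff}, $\|V_1(H_f-z)^{-1}V_2\|\le\|V_1(H-z)^{-1}V_2\|+C|e^{-4ik^3/3f}|+O(f)$ (legitimate since, in the regime at issue, $f$ and $|e^{-4ik^3/3f}|$ are small), while $\|V_1(H-z)^{-1}V_2\|$ is bounded on the compact parameter range by the limiting absorption estimate quoted in the proof of that lemma. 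Since $V_2e^{\pm ikx}$ and $V_1e^{\pm iky}$ have uniformly bounded $L^2$ norms on $\mathrm{supp}\,V$, the corresponding bilinear forms are $O(1)$; together with $|e^{-4ik^3/3f}|\le1$ and $|k|>\delta$ this shows the middle and last pieces are $\le C$, so they go into $E$.

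The dominant piece is $\frac{e^{4ik^3/3f}}{2k}\int\!\!\int V_2(x)u(x)\,T_f(x,y)\,V_1(y)u(y)\,dx\,dy$. Writing $u=e^{-ik\cdot}+r$ with $\|V_j r\|_{L^2}=O(f)$, the $r$-terms contribute $O(f)\|T_f\|=O(f)$, which after multiplication by $e^{4ik^3/3f}/2k$ is $O(f)|e^{4ik^3/3f}|$; what remains is $\frac{e^{4ik^3/3f}}{2k}\int\!\!\int V_2(x)e^{-ikx}\,T_f(x,y)\,e^{-iky}V_1(y)\,dx\,dy$. Evaluated as a double integral this is precisely $F(k)$ with $H$ replaced by $H_f$; replacing $H_f$ by $H$ via Lemma \ref{rdiff} costs $\int\!\!\int V_2e^{-ik\cdot}[\text{resolvent difference}]e^{-ik\cdot}V_1$, bounded by $C\big(C|e^{-4ik^3/3f}|+O(f)\big)$. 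Multiplying by $e^{4ik^3/3f}/2k$ and using $|e^{4ik^3/3f}|\,|e^{-4ik^3/3f}|=1$, $|k|>\delta$, this error is $O(1)+O(f)|e^{4ik^3/3f}|$. Assembling the three pieces gives $G(z,f)=1+e^{4ik^3/3f}(2k)^{-1}F(k)+E$ with $|E|\le C+O(f)|e^{4ik^3/3f}|$.

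The \emph{crux} is getting the error at the sharp order: the crude estimate $|G-1|\le C(1+|e^{4ik^3/3f}|)$ one would get from $\|e_1\|\,\|e_2\|$ is too weak, because its coefficient of $|e^{4ik^3/3f}|$ is a constant rather than $O(f)$. One must therefore genuinely extract the coefficient $F(k)$ and check that every remaining contribution is either $O(1)$ or carries an explicit factor $f$ in front of the growing exponential. The three delicate points are (i) the $O(f)$ corrections inside the Airy expansion (\ref{A'_2}), which multiply $e^{4ik^3/3f}$ and hence give the acceptable $O(f)|e^{4ik^3/3f}|$; (ii) the replacement $H_f\to H$, whose error $|e^{-4ik^3/3f}|$ is annihilated on multiplication by $|e^{4ik^3/3f}|$; and (iii) the uniform estimates $\|T_f\|\le C$ and $\|V_je^{\pm ikx}\|_{L^2}\le C$ over the compact $k$-range, where Lemma \ref{rdiff} and the limiting absorption bound for $H$ enter. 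None of these is deep in isolation, but all must be tracked simultaneously.
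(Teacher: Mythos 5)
Your argument is exactly the paper's (one-line) proof carried out in full: substitute the Airy asymptotics \eqref{A'_2} into the definition of $G$ from Theorem \ref{G}, extract the $e^{4ik^3/3f}$ coefficient, and use Lemma \ref{rdiff} to replace $H_f$ by $H$, with the bookkeeping of the $O(1)$ and $O(f)|e^{4ik^3/3f}|$ errors done correctly. The proposal is correct and takes essentially the same route.
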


\begin{proof}
The proposition follows by inserting the estimates in (\ref{A'_2}) and in Lemma \ref{rdiff} into the definition of $G(z,f)$ given in Theorem \ref{G}.
\end{proof}
Let $d(z) = \text{dist}(z,\sigma(H))$

\begin{proposition}\label{Gest'}
\begin{align}
G(z,f) &= 1 +  (2k)^{-1}e^{\frac{4ik^{3}}{3f}}\left [ F(k)+ O(f)\right ] \label{nopole}\\
&\text{if Im}z < -\delta, |k| < \delta^{-1}. \label{easyG} \\
G(z,f) &= 1 +  (2k)^{-1}e^{\frac{4ik^{3}}{3f}}\left [F(k)+ O(f)\right ] \nonumber\\
& \text {if} \arg k \in [-\pi/6, 0), \text{Im} k \le  -\frac{3f \log1/f}{8|k|^2},\text{and} \ \delta < |k| < \delta^{-1}.  \label{nearposreal} \\
G(z,f) &= 1 +  (2k)^{-1}e^{\frac{4ik^{3}}{3f}}\left [ F(k) + O(f/d(z)^2)\right ]  \label{GnearR_} \\
& \text { if } \arg k \in [-\pi/2, -5\pi/12), \delta < |k| < \delta^{-1}, \text{and} \label{nearnegreal} \nonumber \\
&  d(z) > C_1f \ \text{where} \ C_1 \ \text{is large enough}. 
\end{align} 
\end{proposition}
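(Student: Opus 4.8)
Plan of proof. All three identities follow one scheme, which I would carry out in parallel. Starting from Theorem \ref{G}, write $M := I - V_1(H_f-z)^{-1}V_2 = (I+K_{0,f}(z))^{-1}$, so that $G(z,f) = 1 + (e_2, Me_1)$, and set $\Phi_f := (V_2, e^{-ikx}M e^{-ikx}V_1)$, i.e.\ the expression defining $F(k)$ but with $H$ replaced by $H_f$. The plan is (i) to insert the Airy expansion (\ref{A_2}) into $e_1,e_2$ so as to extract the factor $(2k)^{-1}e^{4ik^3/3f}$ and reduce $(e_2,Me_1)$ to $\tfrac{1}{2k}e^{4ik^3/3f}\Phi_f$ up to controlled errors; (ii) to replace $M$ by $M_H := I - V_1(H-z)^{-1}V_2 = (I+K_0(z))^{-1}$, where $K_0(z):=V_1(p^2-z)^{-1}V_2$, turning $\Phi_f$ into $F(k)$; and (iii) to estimate both errors separately in each region.

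For step (i) one uses $\overline{e_2(x)} = V_2(x)A_2(x)$ and $e_1(x) = 2\pi i f^{-1/3}V_1(x)A_2(x)$ with $A_2(x) = \Ai(f^{1/3}(x-z/f))$. The clean asymptotic (\ref{A_2}) is legitimate in each region: in (\ref{easyG}), $\mathrm{Im}\,z < -\delta$ together with $|z| = |k|^2 < \delta^{-2}$ forces $|k| > \sqrt\delta$ and $\arg k$ bounded away from $0$ and $-\pi/2$; in (\ref{GnearR_}), $\arg k < -5\pi/12$ directly; in (\ref{nearposreal}) the remark after (\ref{A'_2}) applies, since $\mathrm{Im}\,k \le -\tfrac{3f\log(1/f)}{8|k|^2}$ more than suffices, and via $|e^{-4ik^3/3f}| \le e^{-8|k|^2|\mathrm{Im}\,k|/3f}$ this same inequality gives $|e^{-4ik^3/3f}| \le f$ there. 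Substituting (\ref{A_2}) into $\overline{e_2(x)}\,e_1(y)$ the scalar prefactors collapse to exactly $\tfrac{1}{2k}e^{4ik^3/3f}$ and the remaining factor is $V_2(x)e^{-ikx}V_1(y)e^{-iky}$, whence
$$(e_2,Me_1) = \tfrac{1}{2k}e^{4ik^3/3f}\,\Phi_f + R,\qquad |R| \le C\,f\,\|M\|\,\big|e^{4ik^3/3f}\big|,$$
the bound on $R$ coming from the $(1+O(f))$ tails in (\ref{A_2}) and the norm bounds $\|e_1\| = O(f^{-1/6}|e^{2ik^3/3f}|)$, $\|e_2\| = O(f^{1/6}|e^{2ik^3/3f}|)$, $\|V_je^{-ik\cdot}\| = O(1)$ on $\mathrm{supp}\,V_j$.

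For steps (ii)–(iii) I would compare $M$ with $M_H$ through $(I+K_{0,f}(z))^{-1}$ versus $(I+K_0(z))^{-1}$. The Airy/Green's-function computation behind (\ref{perturb}), run in the sector relevant to each region, gives $\|K_{0,f}(z)-K_0(z)\| \le C|e^{-4ik^3/3f}| + O(f)$; in (\ref{easyG}) and (\ref{GnearR_}) the $\arg k$-sector is bounded away from $0$, so (\ref{A_2}) rather than (\ref{A'_2}) governs that comparison and the exponential term is absent, while in (\ref{nearposreal}) it is $\le f$, so in all three cases $\|K_{0,f}(z)-K_0(z)\| = O(f)$ with no dependence on $d(z)$. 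For $\|M_H\|$ I would use $O(1)$ in (\ref{easyG}) (where $d(z) \ge |\mathrm{Im}\,z| \ge \delta$) and in (\ref{nearposreal}) (limiting absorption for $H$, legitimate because $\mathrm{Re}\,z = |k|^2\cos(2\arg k) \ge |k|^2/2 > \delta^2/2$ keeps $z$ in a compact subset of $(0,\infty)$, which contains no eigenvalue of $H$), and $\|M_H\| \le 1 + C/d(z)$ in (\ref{GnearR_}). A Neumann series then gives that $I+K_{0,f}(z)$ is invertible with $\|M-M_H\| \le \|M_H\|^2\|K_{0,f}(z)-K_0(z)\|/(1-\|M_H\|\,\|K_{0,f}(z)-K_0(z)\|)$, which is $O(f)$ in the first two regions and $O(f/d(z)^2)$ in (\ref{GnearR_}) — the convergence condition $\|M_H\|\,\|K_{0,f}(z)-K_0(z)\| = O(f/d(z)) < 1$ being precisely what forces the hypothesis $d(z)>C_1 f$ with $C_1$ large. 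Hence, since $|\Phi_f - F(k)| \le C\|M-M_H\|$ and $\|M\| \le \|M_H\| + \|M-M_H\|$, one gets $\Phi_f = F(k)+O(f)$ and $\|M\|=O(1)$ in cases (\ref{easyG}), (\ref{nearposreal}), and $\Phi_f = F(k)+O(f/d(z)^2)$, $\|M\|=O(1/d(z))$ in case (\ref{GnearR_}). Feeding these into the displayed identity and absorbing the bounded factors $|2k|^{\pm1}$ and $d(z)\le\delta^{-2}$ into the $O$-constants yields the three assertions.

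The hard part is step (iii) in region (\ref{GnearR_}). Near the negative real axis one cannot route the comparison through $\|(H_f-z)^{-1}\|$, which blows up as $\mathrm{Im}\,z\to 0$ (the endpoint $\arg k = -\pi/2$), nor absorb a whole factor $1/d(z)$ from $(H-z)^{-1}$ by itself; the $f\to 0$ smallness (which lives in $\|K_{0,f}(z)-K_0(z)\|$ and does not see $\sigma(H)$) and the spectral loss (which lives in $\|M_H\|$) must be kept apart so that they combine only through the Neumann series and produce the announced $f/d(z)^2$. Making this precise requires rerunning the Airy asymptotics of the proof of Lemma \ref{rdiff} in the sector $\arg k$ near $-\pi/2$, where (\ref{A_2}) is in force in place of (\ref{A'_2}), and checking the resulting $O(f)$ bound on $\|K_{0,f}(z)-K_0(z)\|$ is uniform for $z$ near the negative real axis with $\delta<|k|<\delta^{-1}$. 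A lesser point needing care is the validity of (\ref{A_2}) throughout (\ref{nearposreal}), which is supplied by the remark following (\ref{A'_2}).
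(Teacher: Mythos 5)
Your proposal is correct and follows essentially the same route as the paper: insert the Airy asymptotics \eqref{A_2} (justified in the near-real-axis region by the remark after \eqref{A'_2}) into $e_1,e_2$ to pull out the factor $(2k)^{-1}e^{4ik^3/3f}$, then compare $I-V_1(H_f-z)^{-1}V_2$ with $I-V_1(H-z)^{-1}V_2$ via the perturbation formula for $(I+K_{0,f})^{-1}$ versus $(I+K_0)^{-1}$, which in the sector $\arg k\in[-\pi/2,-5\pi/12)$ yields exactly the paper's $O(f/d(z)^2)$ under the hypothesis $d(z)>C_1f$. You have merely written out in detail the steps the paper's proof compresses into references to \eqref{A_2}, Lemma \ref{rdiff}, and the $B_f=(I+K_1)^{-1}(K_1-K_f)$ Neumann-series argument.
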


\begin{proof}
The first estimate follows directly from (\ref{A_2}) and the fact that if $\text {Im} z < -\delta$, $||V_1((H_f - z)^{-1} - (H-z)^{-1} )V_2|| \le Cf$ follows easily.  
The second estimate again follows from the equation in (\ref{A_2}) which is valid when $\arg k \in [-\pi/6,0]$ as long as $ \text{Im} k \le -\frac{f \log1/f}{8|k|^2}$ and $|k| > \delta$ and from Lemma \ref{rdiff}.  The latter gives  $$ ||(I - V_1(H_f - z)^{-1}V_2) - (I - V_1(H-z)^{-1}V_2)|| \le Cf.$$ for these $k$.
The third estimate follows from the fact that $$||V_1(H_{0,f} - z)^{-1}V_2 - V_1(p^2 -z)^{-1}V_2|| \le Cf$$ for $\text {Re}z < - \delta$, $\arg k \in [-\pi/2, -5\pi/12]$ (one can use the explicit asymptotic behavior of Airy functions or see Section \ref{resconv}).  Writing $K_f = V_1(H_{0,f}-z)^{-1}V_2$ and $K_1 = V_1(p^2 -z)^{-1}V_2$, We have $$I + K_f = (I + K_1)(I- (I+K_1)^{-1} (K_1 - K_f)).$$  So if we let $B_f = (I+K_1)^{-1} (K_1 - K_f)= (I - V_1(H-z)^{-1}V_2)(K_0 - K_f)$, we have $||B_f|| \le Cfd(z)^{-1}$ (where we have assumed for simplicity that $|k| < \delta^{-1}$).  It follows that $V_1(H-z)^{-1}V_2 - V_1(H_f - z)^{-1}V_2 = B_f(I-B_f)^{-1} (I - V_1(H-z)^{-1}V_2)$.  It follows that if $fd(z)^{-1}$ is small enough (so that $||B_f|| <1/2$ for example), we have  $||V_1(H-z)^{-1}V_2 - V_1(H_f - z)^{-1}V_2|| \le Cf/d(z)^2$. 
\end{proof}
The next theorem is a corollary of  Proposition \ref{Gest'}, Proposition \ref{Gest}, and the behavior of $|e^{4ik^3/3f}|$ as $f\downarrow 0$.  Note the definition $d(z) = \text{dist}(z,\sigma(H))$. 

\begin{theorem} \label{noresonances}
\begin{enumerate}
\item[(i)]
Given $\delta > 0$ and $C_1>0$, if $C_0$ is large enough then for small $f>0$, $H_f$ has no resonances in the set $\{z: \delta^{-1} > |z|>\delta >0, \arg k \in [-\pi/6 , 0), \text{Im}k \le - C_0f, |F(k)| > C_1 \}$.
\item [(ii)] 
Given $\delta>0$, if $C_0$ is large enough then for small $f>0$, $H_f$ has no resonances in the set $\{z: \delta^{-1} > |z|>\delta >0, \arg k \in [-\pi/6 , 0), \text{Im}k \le -\frac{3f \log1/f}{8|k|^2}, |F(k)| > C_0f \}$.

\item [(iii)] 
Given $\delta > 0$,  if $C_0$ is large enough then for small $f > 0$, $H_f$ has no resonances in the set $\{z: \delta^{-1} > |z|>\delta >0, \arg z  \in [-\pi, -5\pi/6], d(z) >C_0f \}$.
\item [(iv)] 
Given $\delta > 0$, if $C_0$ is large enough then for small $f>0$, $H_f$ has no resonances in the set $\{k: |k| <\delta^{-1}, \arg k  \in [-\pi/2 + \delta, -\pi/3], k = e^{-i\pi/3}(k_0 - i\kappa), k_0 > \delta, \kappa > C_0f\}$. 
\item [(v)] 
Given $\delta > 0$, if $C_0$ is large enough then for small $f>0$, $H_f$ has no resonances in the set $\{k: |k| <\delta^{-1}, \arg k  \in [-\pi/3, -\delta], k = e^{-i\pi/3}(k_0 + i\kappa), k_0 > \delta >0, \kappa > C_0f, |F(k)| > \delta \}$. 
\item [(vi)]
Given $\delta > 0$, if  $C_0$ is large enough then for small $f>0$, $H_f$ has no resonances in the set $\{k: |k| <\delta^{-1}, \arg k  \in [-\pi/3, -\delta], k = e^{-i\pi/3}(k_0 + i\kappa), k_0 > \delta >0, \kappa > \frac{f \log(1/f)}{8|k|^2}, |F(k)| > C_0f \}$. 
\end{enumerate}
\end{theorem}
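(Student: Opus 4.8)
The plan is to derive each of the six statements as a direct consequence of the three asymptotic formulas for $G(z,f)$ collected in Proposition \ref{Gest'} (supplemented by Proposition \ref{Gest} where $|F(k)|$ is only assumed large compared to a constant), together with the elementary behavior of the exponential factor $e^{4ik^3/3f}$ in the relevant sectors. Recall that resonances are exactly the zeros of $G(z,f)$. In each region we will show that $G(z,f)$ is bounded away from zero for small $f$, which precludes resonances there. The common mechanism: in every case $G(z,f) = 1 + (2k)^{-1}e^{4ik^3/3f}[F(k) + (\text{error})]$, so we must show the product $e^{4ik^3/3f}\cdot[F(k)+\text{error}]$ is either so small that $G\approx 1$, or so large that $G$ is dominated by a nonvanishing term.

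For part (iii), which governs the sector $\arg z \in [-\pi,-5\pi/6]$ (equivalently $\arg k \in [-\pi/2,-5\pi/12]$), I would use (\ref{GnearR_}): here $\operatorname{Re}(4ik^3/3f) \le 0$ with a strictly negative real part bounded away from $0$ once $\arg k$ is bounded away from $-\pi/2$, so $|e^{4ik^3/3f}|$ decays; combined with $d(z) > C_0 f$ the bracket $F(k)+O(f/d(z)^2)$ stays bounded, and the whole perturbation term is $o(1)$, forcing $G \to 1$. For parts (i) and (ii), which sit in the sector $\arg k \in [-\pi/6,0)$ just below the positive real axis, the exponential $|e^{4ik^3/3f}| = e^{-(4/3f)\operatorname{Re}(ik^3)}$ is \emph{large}: since $\operatorname{Im} k < 0$ there, $\operatorname{Re}(ik^3) < 0$, and under the hypothesis $\operatorname{Im} k \le -C_0 f$ (resp. $\le -\tfrac{3f\log(1/f)}{8|k|^2}$) one gets $|e^{4ik^3/3f}| \gtrsim e^{cC_0}$ (resp. $\gtrsim f^{-c}$) using the estimate $|e^{-4ik^3/3f}| \le e^{-8|k|^2|\operatorname{Im}k|/3f}$ quoted before Lemma \ref{rdiff}. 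In case (i), with $|F(k)| > C_1$, Proposition \ref{Gest} gives $G = 1 + (2k)^{-1}e^{4ik^3/3f}F(k) + E$ with $|E| \le C + O(f)|e^{4ik^3/3f}|$; choosing $C_0$ large makes $|e^{4ik^3/3f}|$ large enough that the term $(2k)^{-1}e^{4ik^3/3f}F(k)$ dominates both $1$ and $E$, so $G\ne 0$. In case (ii) one uses (\ref{nearposreal}) instead: there the error inside the bracket is only $O(f)$, so $|F(k)| > C_0 f$ already guarantees $|F(k)+O(f)| \gtrsim |F(k)|$, and the large factor $|e^{4ik^3/3f}| \gtrsim f^{-c}$ again dominates.

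Parts (iv), (v), (vi) are the same analysis rotated to a neighborhood of the ray $\arg k = -\pi/3$ (i.e. $\arg z = -2\pi/3$), using the parametrization $k = e^{-i\pi/3}(k_0 \mp i\kappa)$. The key computation is that $ik^3 = i k_0^3$ is purely imaginary on the ray itself, and moving off it by $\mp i\kappa$ changes $\operatorname{Re}(ik^3)$ linearly in $\kappa$: for $k = e^{-i\pi/3}(k_0 - i\kappa)$ one finds $\operatorname{Re}(ik^3) > 0$, so $|e^{4ik^3/3f}|$ is \emph{small} and $G\to 1$ — this is (iv), needing only $\kappa > C_0 f$ and no condition on $F$. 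For $k = e^{-i\pi/3}(k_0 + i\kappa)$ the sign flips: $|e^{4ik^3/3f}|$ is large, so we are back in the dominating-term regime and need $F(k)$ bounded below, either by a constant (part (v), via Proposition \ref{Gest}) or, when $\kappa \ge \tfrac{f\log(1/f)}{8|k|^2}$ so the factor is as large as a negative power of $f$, merely by $C_0 f$ (part (vi), via the sharper $O(f)$ bracket). In each case I would verify the relevant hypothesis region lies inside the domain of validity of the asymptotic formula being invoked — this bookkeeping of sectors and the matching of the two boundary rays $\arg k = -\pi/6$ and $\arg k = -5\pi/12$ against $\arg k = -\pi/3 \pm$ is the main thing to get right, but there is no analytic obstacle: the estimates are all already in hand, and the proof is a case-by-case verification that in each listed set either $1$ or the oscillatory term provides a lower bound on $|G(z,f)|$.

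The step I expect to require the most care is not any single estimate but ensuring the quantitative thresholds match up: in the "large exponential" cases one must check that the \emph{lower} bound on $|e^{4ik^3/3f}|$ coming from the hypothesis on $\operatorname{Im} k$ (or $\kappa$) genuinely beats the \emph{additive} error terms $C$ and $O(f)|e^{4ik^3/3f}|$ in Proposition \ref{Gest} — the latter is harmless since it is a small multiple of the dominant term, but the former forces the "$C_0$ large enough" quantifier and, in the borderline parts (ii) and (vi), forces the logarithmic sharpening of the threshold on $\operatorname{Im} k$ so that $|e^{4ik^3/3f}| \gtrsim f^{-c}$ overwhelms the constant $C$. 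Making these comparisons uniform over the compact annulus $\delta < |k| < \delta^{-1}$ is routine once the sectorial bounds on $\operatorname{Re}(ik^3)$ are written out.
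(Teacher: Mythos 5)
Your proposal is correct and follows essentially the same route as the paper: in each region one inserts the appropriate formula from Propositions \ref{Gest} and \ref{Gest'}, bounds $|e^{4ik^3/3f}|$ above or below according to the sign of $\text{Re}(ik^3)$ (e.g.\ $|e^{4ik^3/3f}|\ge e^{8|k|^2|\text{Im}k|/3f}$ for (i), (ii), (v), (vi), and $|e^{4ik^3/3f}|\le e^{-2\sqrt{2}|k|^3/3f}$ in (iii), $\le e^{-8|k|^2\kappa/3f}$ in (iv)), and concludes that $|G(z,f)|$ is bounded away from zero. One tiny correction: in (iii) the bracket $F(k)+O(f/d(z)^2)$ need not stay bounded (it can be as large as $O(1/(C_0^2 f))$ since only $d(z)>C_0f$ is assumed), but this is harmless because the factor $e^{-2\sqrt{2}|k|^3/3f}$ kills any power of $1/f$, and your conclusion that the perturbation term is $o(1)$ stands.
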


\begin{proof}
(i) and (ii) follow from Proposition \ref{Gest} using $|e^{4ik^3/3f}| \ge e^{8|k|^2|\text{Im}k|/3f}.$ 

In the proof of (iii), we use (\ref{GnearR_}) and the fact that for $k$ in the stated set $|e^{\frac{4ik^{3}}{3f}}|  <  e^{-2\sqrt{2}|k|^3/3f}$ so if $C_0 $ is large enough we have $|G(z,f)| > 1/2$.

In the proof of (iv) we have  $|e^{\frac{4ik^{3}}{3f}}| \le  e^{-8|k|^2\kappa/3 f}$ so that if $C_0$ is large enough $|G(z,f)| > 1/2$ for small $f>0$.

The proofs of  (v) and (vi) are very similar to those of (i) and (ii).

\end{proof}

\section{Vanishing of the reflection coefficient}\label{rf}

In theorem \ref{noresonances} in the region $\{k: \arg k \in (-\pi/3,0) \}$ we demanded the non-vanishing of $F(k)$ to conclude the absence of resonances.
According to \cite{T}, p.139, the S-matrix for potential scattering from momentum $k \in \mathbb{R}$ to momentum $k' \in \mathbb{R}$ is given as 
\begin{equation}\label{STdef}
\langle{k'}|S|k\rangle = \delta(k'-k) -2\pi i\delta(k'^2 - k^2) \langle{k'}|T(k^2 + i 0)|k\rangle
\end{equation}
where 
$$T(z) = V-V(H-z)^{-1}V$$
is the ``off-shell" T-matrix.  Noting that $|k\rangle = e^{ikx}/\sqrt{2\pi}$ we see that $F(k) = (V_2,e^{-ikx}(I-V_1(H-k^2)^{-1}V_2e^{-ikx})$ is the analytic continuation 
of $2\pi \langle{-k}|T(k^2 + i 0)|k\rangle$ from $k = \sqrt z$ around the branch point $z=0$ to the region we are interested in, $-2\pi/3 < \arg z < 0$. Note that as $z$ makes a $2\pi$ revolution, $k$ changes sign. Thus the vanishing of $F(k)$ is the vanishing of the analytic continuation of the amplitude for reflection for an incoming particle with momentum  $k$.

We have learned that if $k_0 = \sqrt z_0 $ with $\arg k \in (-\pi/3, 0)$ and $F(k_0) \ne 0$, then $z_0$ is not a limit of resonances of $H_f$ as $f \to 0$.  On the other hand, if $F(k_0) = 0$ and  $\arg k \in (-\pi/3, 0)$ then generically $k_0$ will be the limit of resonances of $H_f$.  We state this as a theorem:

\begin{theorem} If $k_0 \ne 0$ with $-\pi/3< \arg k_0 < 0$,  and $$F(k) : = (V_2, e^{-ikx}\left(I- V_1(H-k^2)^{-1}V_2\right)e^{-ikx}V_1)$$ is $0$ at $k_0$ while the derivative $F'(k_0) \ne 0$ then as $f \downarrow 0 $, $k_0$ is a limit of resonances of $H_f$.  If  $k_0 \ne 0$ with $-\pi/3 < \arg k_0 < 0$ and $F(k_0) \ne 0$ or if $k_0 \ne 0$ with $-\pi/2 < \arg k_0 < -\pi/3$ then there are no resonances of $H_f$ near $k_0$ for $f>0$ small.  
\end{theorem}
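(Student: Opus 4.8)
\emph{Proof idea.}
The plan is to reduce both assertions to the location of the zeros of $G(\,\cdot\,,f)$. By Theorem~\ref{G} the resonances of $H_f$ in the open lower half plane are exactly these zeros, and $G(\,\cdot\,,f)$ is holomorphic in $\{\text{Im}\,z<0\}$ (there $\sigma(H_f)=\mathbb{R}$, so the resolvent $V_1(H_f-z)^{-1}V_2$ entering its definition is holomorphic, while $e_1$ and $e_2$ are entire in $z$). Fix $k_0\neq0$ and put $z_0=k_0^2$, so $\arg z_0=2\arg k_0\in(-\pi,0)$ and $z_0$ lies in the open lower half plane. The lever is the elementary identity
\[
|e^{4ik^3/3f}|=\exp\!\Big(-\tfrac{4|k|^3}{3f}\,\sin(3\arg k)\Big),
\]
which blows up exponentially as $f\downarrow0$ when $\arg k\in(-\pi/3,0)$ (then $\sin(3\arg k)<0$) and decays exponentially when $\arg k\in(-\pi/2,-\pi/3)$ (then $\sin(3\arg k)>0$). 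A preliminary observation: for $k_0$ with $\arg k_0\in(-\pi/3,0)$, one of the regimes \eqref{easyG}, \eqref{nearposreal} of Proposition~\ref{Gest'} holds throughout a fixed neighbourhood of $k_0$ for all small $f$ — if $\arg k_0\le-\pi/6$ then $\text{Im}\,z<-\delta$ there, and if $\arg k_0\in(-\pi/6,0)$ then $\text{Im}\,k$ stays bounded away from $0$ while $\tfrac{3f\log(1/f)}{8|k|^2}\to0$ — so on that neighbourhood
\[
G(z,f)=1+\frac{1}{2k}\,e^{4ik^3/3f}\big(F(k)+O(f)\big).
\]

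I would treat the two non-existence claims separately. If $\arg k_0\in(-\pi/3,0)$ and $F(k_0)\neq0$: shrink the neighbourhood $U$ of $k_0$ so that $\arg k$ stays in a compact subinterval of $(-\pi/3,0)$ and $|F|\ge\tfrac12|F(k_0)|$ on $U$; then $|e^{4ik^3/3f}|\ge e^{c/f}$ on $U$ for some $c>0$, so the last display gives $|G(z,f)|\ge\tfrac{1}{4|k|}e^{c/f}|F(k_0)|-1-O(f)e^{c/f}\to\infty$ uniformly on $U$, whence $G$ has no zero in $U$ for $f$ small. If $-\pi/2<\arg k_0<-\pi/3$: then $z_0\notin\sigma(H)$ (indeed $z_0\notin\mathbb{R}$), so $\|V_1(H_f-z)^{-1}V_2\|$ is bounded on a fixed neighbourhood of $z_0$, while \eqref{A_2} (applicable since $\arg k<-\delta$) shows that $A_2$ is exponentially small in $1/f$ on the support of $V$ (because $\sin(3\arg k)>0$ there), so $\|e_1\|\,\|e_2\|=O(e^{-c/f})$, the $f^{\pm1/6}$ powers cancelling. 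Hence $|G(z,f)-1|\le C\,\|e_1\|\,\|e_2\|\to0$, so again $G\neq0$ near $k_0$ for $f$ small.

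For the existence claim, suppose $\arg k_0\in(-\pi/3,0)$, $F(k_0)=0$, $F'(k_0)\neq0$. Since $2ke^{-4ik^3/3f}$ is nonzero near $k_0$, the zeros of $G$ there coincide with those of the holomorphic function
\[
\Phi(k):=2k\,e^{-4ik^3/3f}\,G(z,f)=F(k)+g(k,f),\qquad g(k,f)=2k\,e^{-4ik^3/3f}+O(f).
\]
I would apply Rouch\'e's theorem on the circle $|k-k_0|=Kf$ for a large constant $K$: there $|F(k)|\ge\tfrac12|F'(k_0)|\,Kf$ for $f$ small (since $F$ has a simple zero at $k_0$), whereas $|g(k,f)|\le 2|k|\,|e^{-4ik^3/3f}|+O(f)=O(e^{-c/f})+O(f)$ (the exponential smallness coming from $\arg k\in(-\pi/3,0)$ again), which is $<\tfrac12|F'(k_0)|\,Kf$ once $K>4C/|F'(k_0)|$ and $f$ is small. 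Thus $\Phi$, hence $G$, has exactly one zero $k_f$ in $\{|k-k_0|<Kf\}$, so $z_f=k_f^2$ is a resonance of $H_f$ with $|k_f-k_0|<Kf\to0$, and $k_0$ is a limit of resonances.

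I expect the main obstacle to be the bookkeeping behind the uniformity in Proposition~\ref{Gest'}: one has to check that a single one of its regimes covers an entire neighbourhood of $k_0$ — fixed in the non-existence case, of radius $O(f)$ in the existence case — for all sufficiently small $f$, and (when $-\pi/2<\arg k_0<-\pi/3$) that $z_0$ stays away from $\sigma(H)$ so the resolvent estimates apply there. Granting this, the two directions come down to the exponentially large term swamping everything else and to a single Rouch\'e count.
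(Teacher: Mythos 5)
Your proposal is correct, and for the existence half it takes a genuinely different route from the paper. The non-existence claims are handled in the paper by the ``discussion above'' the theorem, i.e.\ by Theorem \ref{noresonances} (parts (i)--(iv)), which is exactly the dichotomy you exploit: $|e^{4ik^3/3f}|$ blows up when $\arg k\in(-\pi/3,0)$ (so $G$ is dominated by the $F(k)$ term) and decays when $\arg k\in(-\pi/2,-\pi/3)$ (so $G\to 1$); your direct bound via $\|e_1\|\,\|e_2\|$ in the second case is a harmless repackaging of the paper's estimates \eqref{easyG} and \eqref{GnearR_}. For existence, the paper works with essentially your function $\Phi$ (called $H(k,f)=2ke^{-4ik^3/3f}G(z,f)$ there) but finds its zero by the implicit function theorem: it proves $H$ is $C^\infty$ jointly in $(k,f)$ up to $f=0$ (using the Avron--Herbst formula \eqref{AvronHerbstformula} for smoothness of the resolvent in $f$, the convention that $e^{-4ik^3/3f}$ and all its derivatives vanish at $f=0$, and a substitution $f\mapsto f^2$ to invoke the standard $C^1$ IFT), then uses $H(k_0,0)=F(k_0)=0$, $\partial_k H(k_0,0)=F'(k_0)\neq 0$. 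Your Rouch\'e count on $|k-k_0|=Kf$ replaces all of that smoothness machinery with the single uniform estimate of Proposition \ref{Gest'} plus holomorphy of $G$ in $k$ for each fixed $f>0$ (which holds since $H_f$ is self-adjoint, so the resolvent entering $G$ is analytic in the open lower half plane); the one point to make explicit is that the $O(f)$ in Proposition \ref{Gest'} is uniform on a fixed compact neighbourhood of $k_0$, hence on the shrinking disk. What each buys: your argument is more elementary, localizes the resonance to within $O(f)$ of $k_0$, gives an exact count, and extends verbatim to a zero of $F$ of order $p$ (yielding $p$ resonances with multiplicity); the paper's IFT argument yields in addition that the resonance $k(f)$ is a smooth function of $f$ down to $f=0$, which is what feeds the asymptotic expansions used elsewhere.
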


Note that there is more detailed information about the absence of resonances in Theorem \ref{noresonances}.

\begin{proof}
After the discussion above what is left to prove is that if $k_0 \ne 0, -\pi/3 < \arg k_0 < 0, F(k_0) =0, F'(k_0) \ne 0$, then $k_0$ is a limit of resonances of $H_f$.

Let 

\begin{align}  
&H(k,f) = 2ke^{-4ik^3/3f} + 2ke^{-4ik^3/3f}(e_2,(I - V_1(H_f-z)^{-1}V_2)e_1) \nonumber \\
& = 2ke^{-4ik^3/3f} + (V_2, a(x,k,f)(I- V_1(H_f-z)^{-1}V_2)a(x,k,f)V_1).
\end{align}
where 

\begin{align}
&a(x,k,f) = \sqrt{4\pi k}e^{i\pi/4}f^{-1/6}e^{-2ik^3/3f}\Ai(f^{1/3}(x-z/f)) \to e^{-ikx}
\end{align}
as $f \downarrow 0$.  We want to solve $H(k,f) = 0$ for $k$ near $k_0$ given $f> 0$ and small. We can write 

\begin{align}\label{AvronHerbstformula}
V_1(p^2 + fx -k^2)^{-1}V_2 =-i \int_0^\infty V_1e^{itfx/2}e^{i(p^2 - k^2)t}e^{itfx/2}V_2e^{it^3f^2/12}dt
\end{align}
if $-\pi/2< \arg k < 0$.  We thus see that this quantity is $\mathcal {C}^{\infty}$ in the variable $f$ for $f \in \mathbb{R}$.  Since  $I - V_1 (H_f -z)^{-1}V_2 = (I + V_1 (H_{f,0} -z)^{-1}V_2)^{-1}$, it follows that $I - V_1 (H_f -z)^{-1}V_2$ is   $\mathcal {C}^{\infty}$ for small $f$.  The function $a(x,k,f)$ is at least $\mathcal {C}^\infty$ in $(k,f)$
in a region of the form $B_r(k_0) \times [0,f_0)$ where $B_r(k_0)$ is a small ball centered at $k_0$ where $H(k_0,0)=0$ (see the Appendix). Here and in the following we define $e^{-4ik^3/3f}$ and all its derivatives to be zero at $f=0$.  This makes the latter function $\mathcal {C}^{\infty}$ for $f\in [0,\infty)$ and $k$ near $k_0$.  To make reference to the standard implicit function theorem directly we can consider $H(k,f^2)$ instead which, as is required, is $\mathcal {C}^1$ in $(k,f)$ for $(k,f)$ in an open set containing $(k_0,0)$.  Our assumption is that $H(k_0,0) =  F(k_0) = 0$, but its $k$ derivative $F'(k_0)$ is non-zero at this point.  Thus by the implicit function theorem, for small $f$, $H_f$ has a resonance near $k_0$.  
\end{proof}

For reference we have 

\begin{align} 
F'(k_0) =& -2i \int V(x) e^{-2ik_0x}xdx\nonumber\\
&+ i (V, e^{-ik_0x}R_0e^{-ik_0x}xV) + i (xV, e^{-ik_0x}R_0e^{-ik_0x}V)\nonumber\\
&-2k_0 (V, e^{-ik_0x}R_0^{2} e^{-ik_0x}V)
\end{align}
where $R_0 = (H-k_0^2)^{-1}$.

\section{Resolvent Convergence}\label{resconv}

The purpose of this section is to prove that in a complex neighborhood of a point on the negative real axis we have convergence of the analytically continued resolvent $(p^2  + fx  - z)^{-1}$ to $(p^2-z)^{-1}$ with exponential weights.  

We use the method of E. Mourre \cite{M} as expounded in Perry-Sigal-Simon \cite{PSS} in proving bounds on $(p^2  + fx  - z)^{-1}$ with weights for $\text{Re}z $ in some compact set of the negative reals, uniformly for $\text{Im} z \ne 0$.  

\begin{lemma}
Let $D = (|p|^2 + 1)^{-1/2}$.  Then for $f > 0$ and $\text{Im} z \ne 0$,
$$||D(p^2 + fx - z)^{-1}D|| \le Cf^{-1}$$ where $C$ is a universal constant.
\end{lemma}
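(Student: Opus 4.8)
The plan is to derive the estimate from Kato's theory of $H$-smooth operators, with $A=-p$ in the role of the conjugate operator of the Mourre method (\cite{M,PSS}): here $i[H_{0,f},A]=fI$ is a constant, so the objects entering the commutator argument can be evaluated exactly and the bound comes out with the stated $f^{-1}$ scaling and a universal constant (throughout, $H_{0,f}=p^{2}+fx$). Concretely, I will show that $D$ is $H_{0,f}$-smooth with smoothness bound $\pi/f$, i.e.\ $\int_{\mathbb R}\|D e^{-itH_{0,f}}\psi\|^{2}\,dt\le (\pi/f)\,\|\psi\|^{2}$ for every $\psi\in L^{2}$, and then apply Kato's theorem to pass to the weighted resolvent.

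The first step is the Heisenberg identity
$$e^{itH_{0,f}}\,p\,e^{-itH_{0,f}} = p - ft,\qquad t\in\mathbb R ,$$
obtained by integrating $\frac{d}{dt}\bigl(e^{itH_{0,f}}\,p\,e^{-itH_{0,f}}\bigr)=e^{itH_{0,f}}\,i[H_{0,f},p]\,e^{-itH_{0,f}}=-fI$, where $i[H_{0,f},p]=i[fx,p]=-fI$ because $[x,p]=i$; this computation is legitimate on the Schwartz class, which is a core for $p$ and is left invariant by $e^{-itH_{0,f}}$ (clear from the Avron--Herbst factorization underlying (\ref{AvronHerbstformula})). Consequently $e^{itH_{0,f}}(p^{2}+1)^{-1}e^{-itH_{0,f}}=\bigl((p-ft)^{2}+1\bigr)^{-1}$, so that for $\psi\in L^{2}$
$$\|D e^{-itH_{0,f}}\psi\|^{2} = \bigl\langle\psi,\,\bigl((p-ft)^{2}+1\bigr)^{-1}\psi\bigr\rangle .$$
Integrating in $t$ (the integrand is nonnegative, so Tonelli applies) and using the spectral theorem for $p$ together with $\int_{\mathbb R}\bigl((\xi-ft)^{2}+1\bigr)^{-1}\,dt=\pi/f$ yields the exact identity $\int_{\mathbb R}\|D e^{-itH_{0,f}}\psi\|^{2}\,dt=(\pi/f)\,\|\psi\|^{2}$.

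Since $D=D^{*}$ is bounded, Kato's theorem on smooth operators now gives $\sup_{\text{Im}\,z\ne 0}\|D(H_{0,f}-z)^{-1}D\|\le C_{0}/f$, where $C_{0}$ is the universal constant from the equivalence between the two characterizations of $H$-smoothness; this is the assertion of the lemma. There is no genuine obstacle: the only points requiring care are the justification of the Heisenberg identity as an identity of unbounded operators (routine, via the core above) and the observation --- made transparent by the exact computation --- that every constant entering is independent of $f$ and of $H_{0,f}$. (Alternatively one may run the Mourre differential inequality of \cite{M,PSS} directly: the translation covariance $e^{isA}H_{0,f}e^{-isA}=H_{0,f}-fs$ with $A=-p$ makes all regularity hypotheses automatic and the Mourre estimate holds globally with constant $f$, so that only the dependence of the output on this constant has to be tracked.)
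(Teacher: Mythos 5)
Your computation of the Kato smoothness bound is correct and clean: the Heisenberg identity $e^{itH_{0,f}}pe^{-itH_{0,f}}=p-ft$ does give the exact identity $\int_{\mathbb R}\|De^{-itH_{0,f}}\psi\|^2\,dt=(\pi/f)\|\psi\|^2$, so $D$ is $H_{0,f}$-smooth with constant proportional to $f^{-1}$. The gap is in the final step. Kato's theorem does \emph{not} assert that $H$-smoothness of $A$ implies $\sup_{\mathrm{Im}\,z\ne0}\|AR(z)A^*\|<\infty$. The equivalent characterizations of smoothness (Reed--Simon XIII.25, Kato's 1966 paper) control only the \emph{imaginary part} of the sandwiched resolvent, i.e.\ quantities such as $\sup\|A[R(z)-R(\bar z)]A^*\|$, $\sup_\epsilon \epsilon\|AR(\lambda+i\epsilon)\phi\|^2$, or $\sup_{a<b}\|AE_{(a,b)}\|^2/(b-a)$. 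The implication you need is false in general: take $H$ to be multiplication by $\lambda$ on $L^2(\mathbb R)$ and $A=A^*=\langle g,\cdot\rangle g$ with $g\in L^2\cap L^\infty$; then $A$ is $H$-smooth, but $\|AR(z)A^*\|=\|g\|_2^2\,\bigl|\int|g(\lambda)|^2(\lambda-z)^{-1}d\lambda\bigr|$, whose real part is essentially the Hilbert transform of $|g|^2$ and is unbounded for, say, $|g|^2=1_{(0,1)}$. So from your time-decay identity you may legitimately conclude a uniform bound on $\|D[(H_{0,f}-z)^{-1}-(H_{0,f}-\bar z)^{-1}]D\|$ of order $f^{-1}$, but not on $\|D(H_{0,f}-z)^{-1}D\|$ itself; the real part requires a separate argument.

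This is precisely why the paper does not argue through smoothness. Its proof runs the Mourre/Perry--Sigal--Simon differential inequality on the full object $F_\epsilon=D(p^2+fx+i\epsilon f-z)^{-1}D$: using $f=[ip,p^2+fx]$ one gets $\|dF_\epsilon/d\epsilon\|\le 2(\epsilon f)^{-1/2}\|F_\epsilon\|^{1/2}$, which, integrated from the trivial bound $\|F_\epsilon\|\le(\epsilon f)^{-1}$, yields $\|F_\epsilon\|\le Cf^{-1}$ uniformly in $\epsilon$ and hence the lemma. That argument tracks $F_\epsilon$ itself, not only its imaginary part, which is the essential point your route misses. Your closing parenthetical remark about running the Mourre inequality directly is in fact the paper's proof; if you carry that out and track the $f$-dependence you recover the lemma, but as written your main argument does not establish it.
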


\begin{proof}
We take $\text{Im} z \le  0$. Let $G_{\epsilon}(z) = (p^2 + fx + i\epsilon f - z)^{-1}$ and $F_{\epsilon}(z) = DG_{\epsilon}(z)D$.  Clearly   $||G_{\epsilon}(z)|| \le (\epsilon f)^{-1}$.  Note that $f =[ip,p^2 + fx]$.  We have
$$||G_{\epsilon}(z)\phi||^2 = (\phi, G_{\epsilon}(z)^ *2\epsilon f G_{\epsilon}(z)\phi)/2\epsilon f \le  (\phi, G_{\epsilon}(z)^* (2\epsilon f  - 2 \text{Im}z)G_{\epsilon}(z)\phi)/2\epsilon f$$
$$=(-i/2\epsilon f)(\phi, (G_{\epsilon}(z)^*- G_{\epsilon}(z))\phi) \le |(\phi, G_{\epsilon}(z)\phi)|/\epsilon f$$

Thus $$||G_{\epsilon}D|| \le (\epsilon f)^{-1/2} ||DG_{\epsilon}D||^{1/2} = (\epsilon f)^{-1/2} ||F_{\epsilon}||^{1/2}.$$  Similarly $$ ||DG_{\epsilon}||\le (\epsilon f)^{-1/2} ||F_{\epsilon}||^{1/2}.$$ Then
 $$dF_{\epsilon}/d\epsilon  = D (p^2 + fx + i\epsilon f - z)^{-1}[p,p^2 + fx + i\epsilon f - z] (p^2 + fx + i\epsilon f - z)^{-1}D.$$ Thus
 $$||dF_{\epsilon}/d\epsilon|| \le ||DG_{\epsilon}|| + ||G_{\epsilon}D|| \le 2 (\epsilon f)^{-1/2} ||F_{\epsilon}||^{1/2}$$
 Starting from $||F_{\epsilon}|| \le (\epsilon f)^{-1}$ and iterating, we get the result for some universal constant $C$.

\end{proof}

We can now show the convergence of $\langle x \rangle^{-1}(p^2 +fx -z)^{-1}\langle x \rangle^{-1}$ to  $\langle x \rangle^{-1}(p^2 -z)^{-1}\langle x \rangle^{-1}$ uniformly for $\text{Im} z \ne 0$ and $\text{Re} z \le -\delta$ for $\delta >0$.

\begin{lemma}\label{resolventconvergence}
Suppose $\text{Im} z \ne 0$ and $\text{Re} z \le -\delta$ with $\delta >0$.Then 
$$||\langle x \rangle^{-1}(p^2 +fx -z)^{-1}\langle x \rangle^{-1} - \langle x \rangle^{-1}(p^2-z)^{-1}\langle x \rangle^{-1}|| \le Cf $$
where C depends only on $\delta$.

\end{lemma}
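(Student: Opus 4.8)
The plan is to use the second resolvent identity together with the $L^2$-operator bound from the previous lemma. Write $H_f = p^2 + fx$ and $H = p^2$, and set $R_f(z) = (p^2+fx-z)^{-1}$, $R(z) = (p^2-z)^{-1}$. The resolvent identity gives $R_f(z) - R(z) = -R_f(z)\, fx\, R(z)$, so that
\begin{equation*}
\langle x\rangle^{-1}(R_f(z)-R(z))\langle x\rangle^{-1} = -f\,\langle x\rangle^{-1}R_f(z)\,x\,R(z)\langle x\rangle^{-1}.
\end{equation*}
The right-hand side carries an explicit factor of $f$, so it suffices to show the remaining operator is bounded uniformly in $z$ (for $\operatorname{Re}z \le -\delta$, $\operatorname{Im}z \ne 0$) and in $f$ on, say, $0 < f \le 1$.

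First I would dispose of the easy factor: $x\langle x\rangle^{-1}$ is a bounded multiplication operator with norm $\le 1$, and $\langle x\rangle^{-1}R(z)$ is bounded uniformly for $\operatorname{Re}z \le -\delta$ since the free resolvent $R(z)$ itself is bounded by $1/\operatorname{dist}(z,[0,\infty)) \le \delta^{-1}$ there; in fact $\|R(z)\| \le \delta^{-1}$ suffices, so no weight is even needed on that side. That reduces matters to bounding $\langle x\rangle^{-1}R_f(z)$, equivalently (after inserting $\langle x\rangle \langle x\rangle^{-1}$ and using the factor $f$ more carefully) to bounding $\langle x\rangle^{-1}R_f(z)\langle p\rangle$-type expressions. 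Here is where the previous lemma enters: it gives $\|D R_f(z) D\| \le Cf^{-1}$ with $D = (|p|^2+1)^{-1/2} = \langle p\rangle^{-1}$. To convert the $\langle p\rangle^{-1}$ weights into the $\langle x\rangle^{-1}$ weights we need, I would commute: write $\langle x\rangle^{-1}R_f(z) = \langle x\rangle^{-1}\langle p\rangle \cdot \langle p\rangle^{-1}R_f(z)$ — but $\langle x\rangle^{-1}\langle p\rangle$ is unbounded, so instead I would keep one resolvent factor to absorb the momentum. The cleaner route: from the lemma, $\|\langle p\rangle^{-1}R_f(z)\langle p\rangle^{-1}\| \le Cf^{-1}$, and also $(p^2+fx-z)\langle p\rangle^{-2}$ is a relatively nice operator; multiplying $\langle p\rangle^{-1}R_f(z)\langle p\rangle^{-1}$ on the right by $\langle p\rangle(p^2 - z)\langle p\rangle^{-1}$ lets one trade $p^2$-decay for control, but this still leaves the $fx$ term.

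The genuinely clean argument, which I would ultimately present, combines the identity above with the lemma applied once more. Rewrite, using $R_f(z) = R(z) - f R_f(z) x R(z)$ inside, so
\begin{equation*}
\langle x\rangle^{-1}(R_f(z)-R(z))\langle x\rangle^{-1} = -f\,\langle x\rangle^{-1} R(z)\, x\, R(z)\langle x\rangle^{-1} + f^2\,\langle x\rangle^{-1}R_f(z)\,x\,R(z)\,x\,R(z)\langle x\rangle^{-1}.
\end{equation*}
The first term is $O(f)$ and $z$-uniform because $\langle x\rangle^{-1}R(z)\langle x\rangle$ and $R(z)$ are each bounded by $\delta^{-1}$ (the weighted free resolvent bound on the negative axis away from the spectrum follows from the explicit kernel $e^{-\sqrt{-z}\,|x-y|}/(2\sqrt{-z})$, which decays and is integrable against $\langle x\rangle^{-1}$ on both sides uniformly in $z$ with $\operatorname{Re} z\le -\delta$). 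For the second term, the two inner factors $x R(z)\langle x\rangle^{-1}$ and $\langle x\rangle^{-1} x R(z)$... — in fact one then needs $\|\langle x\rangle^{-1}R_f(z)\langle x\rangle^{-1}\| \le C/f$ to close a bootstrap, at which point the $f^2$ beats the $1/f$. So the actual proof is: (1) establish $\|\langle x\rangle^{-1}R_f(z)\langle x\rangle^{-1}\| \le C/f$ from the momentum-weighted lemma (commuting $\langle p\rangle^{-1}$ past $\langle x\rangle^{-1}$ up to bounded error, since $[\langle x\rangle^{-1},\langle p\rangle^{-1}]$ and the relevant products $\langle x\rangle^{-1}\langle p\rangle^{-1}$, $\langle x\rangle^{-1}(p^2-z)\langle p\rangle^{-2}$ are bounded uniformly), and then (2) feed this into the twice-iterated resolvent identity to get the gain of $f^2 \cdot f^{-1} = f$.

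The main obstacle is step (1): converting the $\langle p\rangle^{-1} R_f(z)\langle p\rangle^{-1}$ bound of the preceding lemma into a $\langle x\rangle^{-1}R_f(z)\langle x\rangle^{-1}$ bound. The subtlety is that $\langle x\rangle^{-1}$ and $\langle p\rangle^{-1}$ do not commute, and $\langle x\rangle^{-1}\langle p\rangle$ is unbounded, so one cannot simply insert $\langle p\rangle\langle p\rangle^{-1}$. I expect to handle this by writing $\langle x\rangle^{-1} = \langle x\rangle^{-1}(p^2+fx-z)^{1/2}\cdot(p^2+fx-z)^{-1/2}$ is too singular; more robustly, use that $\langle x\rangle^{-1}R_f(z) = \langle x\rangle^{-1}\langle p\rangle^{-1}\cdot \langle p\rangle R_f(z)$ and note $\langle p\rangle R_f(z)\langle p\rangle^{-1}$ is bounded by elliptic regularity for $p^2+fx$ (the operator $\langle p\rangle R_f(z)$ maps into a fixed Sobolev-type space with norm controlled by $\|DR_f(z)D\|$ and lower-order corrections from $fx$, which are themselves relatively $\langle p\rangle$-bounded only mildly — here one uses $\operatorname{Re}z \le -\delta$ so the "energy" is bounded below away from the continuum of $fx$). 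Alternatively, and perhaps most transparently, one bounds $\|\langle x\rangle^{-1}R_f(z)\langle x\rangle^{-1}\|$ directly via $\|\langle x\rangle^{-1}R_f(z)^{1/2}\|^2$-type quadratic form estimates mirroring the Mourre computation in the previous lemma, now with the conjugate operator and weights chosen to match $\langle x\rangle$. I would present whichever of these is shortest given the tools already assembled; the rest of the argument is routine once the $C/f$ bound is in hand.
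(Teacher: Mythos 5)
Your overall strategy is the paper's: expand to second order in the resolvent identity and use the Mourre bound $\|D(p^2+fx-z)^{-1}D\|\le Cf^{-1}$, $D=\langle p\rangle^{-1}$, so that the $f^2$ in front of the quadratic term beats the $f^{-1}$. But your execution has a genuine gap, and you have correctly located it yourself: everything hinges on your step (1), the claim $\|\langle x\rangle^{-1}(p^2+fx-z)^{-1}\langle x\rangle^{-1}\|\le C/f$, which you never actually prove. None of the routes you sketch closes it. The preceding lemma gives momentum weights, not position weights; $\langle x\rangle^{-1}\langle p\rangle$ is unbounded, as you note; Mourre theory with conjugate operator $p$ naturally produces weights in the conjugate variable (i.e.\ $\langle p\rangle^{-s}$), not in $x$; and the "elliptic regularity" claim that $\langle p\rangle R_f(z)\langle p\rangle^{-1}$ is bounded uniformly in $f$ and $z$ is dubious, since controlling $\|pR_f\phi\|$ produces the term $(R_f\phi, fx\,R_f\phi)$, which is exactly the kind of quantity you have no uniform handle on. So as written the proof is not complete.

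The gap is an artifact of how you iterated. You substituted into the identity $R_f-R=-fR_f xR$ so as to leave $R_f$ at the \emph{outer end} of the product, $f^2\langle x\rangle^{-1}R_f\,x\,R\,x\,R\langle x\rangle^{-1}$, where its left neighbour is $\langle x\rangle^{-1}$ and no momentum weight can be manufactured. The paper instead uses both orderings of the resolvent identity, $R_f-R=-R(fx)R_f=-R(fx)R+R(fx)R_f(fx)R$, so that the second-order term is
\begin{equation*}
f^2\,\langle x\rangle^{-1}R\,x\;R_f\;x\,R\,\langle x\rangle^{-1},
\end{equation*}
with $R_f$ \emph{in the middle}, flanked on both sides by $x\,R\,\langle x\rangle^{-1}$ (and its adjoint). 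The elementary bound $\|(p+i)\,x\,(p^2-z)^{-1}\langle x\rangle^{-1}\|\le C'$ (uniform for $\mathrm{Re}\,z\le-\delta$) then lets you write $x\,R\,\langle x\rangle^{-1}=(p+i)^{-1}B$ with $B$ bounded, and dually on the left, so the middle factor becomes $(p\pm i)^{-1}R_f(p+i)^{-1}$, which is controlled directly by $\|DR_fD\|\le Cf^{-1}$ since $\langle p\rangle(p+i)^{-1}$ is bounded. This removes the need for any position-weighted version of the Mourre estimate. Your treatment of the first-order term and the final bookkeeping $f^2\cdot f^{-1}=f$ are fine once this rearrangement is made.
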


\begin{proof}
We use the resolvent equation twice to obtain
\begin{align}
&\langle x \rangle^{-1} \big((p^2 +fx -z)^{-1} - (p^2-z)^{-1}\big)\langle x \rangle^{-1} =  - \langle x \rangle^{-1} (p^2-z)^{-1}fx(p^2-z)^{-1}\langle x \rangle^{-1} \nonumber\\
& +  \langle x \rangle^{-1} (p^2-z)^{-1}fx(p^2 + fx - z)^{-1}fx(p^2-z)^{-1}\langle x \rangle^{-1}.\nonumber
\end{align}
Using the easily established bound
$$||(p + i)x(p^2 -z)^{-1}\langle x \rangle^{-1}|| \le C',$$
with $C'$ dependent only on $\delta$, and the previous lemma we obtain the result.  
\end{proof}

By iterating the equation above we can get an asymptotic expansion for $(p^2 +fx -z)^{-1}$ with appropriate weights.  To shorten our equations we use the abbreviations $W = -x, R_0 = (p^2 -z)^{-1}, R_{0,f} = (p^2 + fx -z)^{-1}$.

\begin{lemma}\label{asymptoticexpansion}
Suppose the functions $\zeta_1$ and $\zeta_2$ are bounded and measurable with $|x^n\zeta_j(x)|\le a_n < \infty$ for all $n$.  Suppose $\text{Re}z < - \delta$ with $\delta > 0$ and $\text{Im}z \ne 0$.  Then for any $N >0$
$$\zeta_1R_{0,f}\zeta_2 = \sum_{n = 0}^{2N-1}(\zeta_1R_0(WR_0)^n\zeta_2) f^n+ (\zeta_1(R_0W)^NfR_{0,f}(WR_0)^N\zeta_2) f^{2N-1}$$
with the remainder $||\zeta_1(R_0W)^NfR_{0,f}(WR_0)^N\zeta_2|| \le b_N $ where $b_N$ depends only on $N, \delta, \zeta_1, and  \ \zeta_2$. 
\end{lemma}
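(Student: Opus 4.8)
The plan is to prove Lemma~\ref{asymptoticexpansion} by induction on $N$, using the second-order resolvent identity already displayed above (in the proof of Lemma~\ref{resolventconvergence}) as the single inductive step, and invoking Lemma~\ref{resolventconvergence} together with the weighted boundedness estimates established there to control the remainder. First I would record the base case: the identity
\[
\zeta_1 R_{0,f}\zeta_2 = \zeta_1 R_0 \zeta_2 + \zeta_1 R_0 (fW) R_{0,f} (fW) R_0 \zeta_2 \cdot f^{-1}
\]
Wait --- more precisely, applying the resolvent equation twice with perturbation $fx = -fW$ gives
\[
\zeta_1 R_{0,f}\zeta_2 = \zeta_1 R_0 \zeta_2 + f\,\zeta_1 R_0 W R_0 \zeta_2 + f^2\,\zeta_1 R_0 W R_{0,f} W R_0 \zeta_2,
\]
which is exactly the asserted formula with $N=1$, the remainder term being $\zeta_1(R_0W) f R_{0,f} (W R_0)\zeta_2 \cdot f^{2\cdot 1-1}=f\,\zeta_1 R_0 W R_{0,f} W R_0\zeta_2$. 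So the base case is just the double resolvent identity.

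For the inductive step, suppose the formula holds for some $N$. I would take the remainder term $\zeta_1 (R_0 W)^N f R_{0,f} (W R_0)^N \zeta_2$ and apply the same double resolvent identity to the inner factor $\widetilde\zeta_1 R_{0,f} \widetilde\zeta_2$, where now $\widetilde\zeta_1$ plays the role of the function obtained by absorbing $(R_0 W)^N$ acting to the left and similarly on the right --- but since $R_0 W$ is not multiplication by a function, the cleaner bookkeeping is to write $R_{0,f} = R_0 + f R_0 W R_0 + f^2 R_0 W R_{0,f} W R_0$ as an operator identity and substitute it into the $N$th remainder. This converts the $f^{2N-1}$ remainder into $f^{2N-1}\zeta_1(R_0W)^N R_0 (WR_0)^N\zeta_2 + f^{2N}\zeta_1(R_0W)^{N+1}R_0(WR_0)^N\zeta_2 + f^{2N+1}\zeta_1(R_0W)^{N+1}R_{0,f}(WR_0)^{N+1}\zeta_2$, i.e.\ it produces exactly the two new explicit terms with indices $n=2N$ and $n=2N-1$ and a new remainder of the correct order $f^{2(N+1)-1}$. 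Tracking the factor $f$ that the lemma inserts into the remainder (the lemma writes the remainder as $\zeta_1(R_0W)^N f R_{0,f}(WR_0)^N\zeta_2 \cdot f^{2N-1}$, so one power of $f$ is explicit and one is hidden inside, totalling $f^{2N}$ times a bounded operator) requires care, but it is routine; the key point is that each application of the identity raises the explicit power of $f$ by two and shifts the error term inward.

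The remaining task is the uniform bound $\|\zeta_1(R_0W)^N f R_{0,f}(WR_0)^N\zeta_2\|\le b_N$. Here I would write $(R_0 W)^N = (R_0 W) \langle x\rangle \cdot \langle x\rangle^{-1}(R_0 W)\langle x\rangle \cdots$, inserting weights $\langle x\rangle^{\pm 1}$ between consecutive factors so that each interior block is of the form $\langle x\rangle^{-1} R_0 W \langle x\rangle = \langle x\rangle^{-1} R_0 (-x)\langle x\rangle$, which is bounded uniformly in $z$ with $\mathrm{Re}\,z<-\delta$ by the estimate $\|(p+i)x(p^2-z)^{-1}\langle x\rangle^{-1}\|\le C'$ used above (together with boundedness of $\langle x\rangle(p+i)^{-1}$ --- actually one uses $\|\langle x\rangle^{-1} R_0 x \langle x\rangle^{-1}\langle x\rangle \langle x\rangle^{-1}\|$-type rearrangements). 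The innermost factor $\langle x\rangle^{-1} f R_{0,f} \langle x\rangle^{-1}$ is bounded by $Cf\cdot f^{-1}=C$... no: by Lemma~\ref{resolventconvergence} we have $\|\langle x\rangle^{-1} R_{0,f}\langle x\rangle^{-1}\|\le \|\langle x\rangle^{-1} R_0 \langle x\rangle^{-1}\| + Cf \le C_\delta$ uniformly, so $\|\langle x\rangle^{-1} f R_{0,f}\langle x\rangle^{-1}\|\le C_\delta f$, giving one spare power of $f$; but actually the lemma's remainder already carries $f^{2N-1}$ separately, and the displayed remainder operator has one explicit $f$, so a uniform-in-$f$ bound on $\langle x\rangle^{-1} R_{0,f}\langle x\rangle^{-1}$ (no spare power needed) suffices. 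The outer factors $\zeta_1(R_0W)$ and $(WR_0)\zeta_2$ are handled by the hypothesis $|x^n\zeta_j(x)|\le a_n$, which lets each $\zeta_j$ absorb the weights $\langle x\rangle$ needed to close the chain; the resulting constant $b_N$ depends only on $N$, $\delta$, and finitely many of the $a_n$. The main obstacle --- really the only non-mechanical point --- is organizing the weight insertions so that every factor in the telescoped product is one of the uniformly bounded operators already available (namely $\langle x\rangle^{-1} R_0 x\langle x\rangle^{-1}$, $\langle x\rangle^{-1} R_0\langle x\rangle^{-1}$, $\langle x\rangle^{-1} R_{0,f}\langle x\rangle^{-1}$, and the weighted $\zeta_j$), and verifying that no uncontrolled growth in $N$ or in $f$ sneaks in; once the bookkeeping convention is fixed, the estimate is immediate.
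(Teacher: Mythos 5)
Your approach is exactly the one the paper intends: the paper offers no proof of this lemma beyond the remark that it follows ``by iterating the equation above,'' i.e.\ the symmetric second resolvent identity $R_{0,f}=R_0+fR_0WR_0+f^2R_0WR_{0,f}WR_0$ from the proof of Lemma \ref{resolventconvergence}, which is precisely your base case and inductive step. The induction is algebraically correct modulo two slips you should fix. First, a typo: the two new explicit terms produced at each step have indices $n=2N$ and $n=2N+1$ (not $2N-1$), which is consistent with the sum at level $N+1$ running to $2(N+1)-1$; relatedly, once the hidden factor of $f$ inside the remainder is included, the three terms carry $f^{2N}$, $f^{2N+1}$, $f^{2N+2}$. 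Second, and more substantively, the block $\langle x\rangle^{-1}R_0W\langle x\rangle=-\langle x\rangle^{-1}R_0x\langle x\rangle$ is \emph{not} bounded, since $x\langle x\rangle\sim\langle x\rangle^2$; a fixed weight $\langle x\rangle^{\pm1}$ between consecutive factors cannot telescope because each $W$ adds a power of $x$. The correct bookkeeping uses weights that grow along the chain, e.g.\ factors $\langle x\rangle^{-2j}R_0x\langle x\rangle^{2j-2}=(\langle x\rangle^{-2j}R_0\langle x\rangle^{2j})(x\langle x\rangle^{-2})$, each bounded uniformly for $\mathrm{Re}\,z<-\delta$ by a commutator expansion, with the accumulated weight $\langle x\rangle^{2N}$ (or $\langle x\rangle^{N+1}$, depending on the splitting) absorbed at the ends by $\zeta_1,\zeta_2$ using the hypothesis that \emph{all} moments $|x^n\zeta_j(x)|\le a_n$ are finite --- this is exactly why that hypothesis is stated. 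With that correction, and the uniform bound $\|\langle x\rangle^{-1}R_{0,f}\langle x\rangle^{-1}\|\le C_\delta$ from Lemma \ref{resolventconvergence} for the middle factor as you say, the argument closes.
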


\begin{lemma}\label{jumpconvergence}
Suppose $z=-\alpha - i\beta$ with $\alpha \ge \delta > 0$ and $\alpha \ge 8|\beta|$.  Choose $\epsilon > 0$ so that $\epsilon \sqrt \alpha \ge 8|\beta|$.  Let $g_{\epsilon, z} (x) = e^{-\epsilon |x|} \Ai(f^{1/3}(x-z/f))$.  Then
$$||g_{\epsilon, z}||_2 \le C_{\epsilon}e^{-\epsilon\alpha/4f}$$ for small $f>0$.

\end{lemma}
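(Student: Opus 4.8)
The plan is to reduce the $L^2$ bound on $g_{\epsilon,z}$ to a pointwise estimate on the Airy function at points of its argument with large positive real part, and then integrate. First I would write out explicitly $\|g_{\epsilon,z}\|_2^2 = \int_{-\infty}^\infty e^{-2\epsilon|x|}\,|\Ai(f^{1/3}(x-z/f))|^2\,dx$ and change variables to $u = f^{1/3}(x-z/f)$, so that $x = f^{-1/3}u + z/f$ and $dx = f^{-1/3}\,du$; the integral becomes $f^{-1/3}\int e^{-2\epsilon|f^{-1/3}u + z/f|}\,|\Ai(u)|^2\,du$ over the appropriate horizontal line in the $u$-plane shifted by the (small) imaginary part of $z/f$. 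The point is that $z = -\alpha - i\beta$ with $\alpha$ large and positive, so $-z/f = (\alpha + i\beta)/f$ has large positive real part, and $\Ai$ is evaluated along a ray going out toward $+\infty$ in its argument, where it decays like $\exp(-\tfrac23 w^{3/2})$.

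The key step is the pointwise bound. For $w$ with $|\arg w| \le \pi/3 - \delta'$ (which is guaranteed here since $\alpha \ge 8|\beta|$ forces the relevant arguments to stay in a sector well inside $|\arg w| < \pi/3$), one has from \cite{AS} (10.4.59) the uniform estimate $|\Ai(w)| \le C|w|^{-1/4}\exp\!\big(-\tfrac23 \operatorname{Re}(w^{3/2})\big)$, and along the positive real direction $\operatorname{Re}(w^{3/2}) \ge c|w|^{3/2}$. Combining this with the observation that along the contour the argument $w = f^{1/3}x - z f^{-2/3}$ has real part at least of order $\alpha f^{-2/3}$ for $x$ in a neighborhood of the origin, we get $|\Ai(w)|^2 \lesssim \exp(-\tfrac43 c (\alpha/f)^{3/2} + \dots)$ — but this is a far better bound than claimed; the claimed bound $e^{-\epsilon\alpha/4f}$ only asserts exponential decay in $\alpha/f$, not $(\alpha/f)^{3/2}$. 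So in fact the dominant contribution to the $L^2$ norm must come not from the Airy decay near $x=0$ but from the interplay with the weight $e^{-\epsilon|x|}$ as $x \to \pm\infty$, or more precisely the worst point is where $x$ is taken large and negative so that $f^{1/3}x$ partially cancels the large real part coming from $-z/f$. I would therefore split the integral into the region $|x| \le \alpha/(2f)$ and $|x| > \alpha/(2f)$: on the first region the Airy argument still has real part $\gtrsim \alpha f^{-2/3}/2 > 0$, giving Airy decay that beats everything; on the second region the weight $e^{-2\epsilon|x|} \le e^{-\epsilon\alpha/f}$ directly, and the Airy factor is bounded (polynomially in its argument, integrated against the Gaussian-type tail) — the condition $\epsilon\sqrt\alpha \ge 8|\beta|$ is what ensures the Airy argument does not wander into the sector where $\Ai$ grows. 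Putting the two pieces together yields $\|g_{\epsilon,z}\|_2 \le C_\epsilon e^{-\epsilon\alpha/4f}$, with the factor $1/4$ comfortably absorbing constants from the change of variables and the sub-dominant terms.

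The main obstacle I anticipate is bookkeeping the sector condition carefully: one must check that for every $x\in\mathbb R$ appearing with non-negligible weight, the complex number $f^{1/3}(x - z/f) = f^{1/3}x + (\alpha+i\beta)/f^{2/3} \cdot$ (sign issues aside) lies in the region $|\arg w| < \pi/3$ where the standard asymptotic expansion with exponential decay is valid, rather than the Stokes sector where $\Ai$ is oscillatory-and-only-polynomially-small or the anti-Stokes region where it grows. This is exactly what the hypotheses $\alpha \ge 8|\beta|$ and $\epsilon\sqrt\alpha \ge 8|\beta|$ are designed to guarantee, and verifying it amounts to an elementary but slightly fiddly estimate on $\arg(f^{1/3}x + (\alpha + i\beta)/f^{2/3})$ as $x$ ranges over the two subregions above; once that geometric fact is pinned down, the rest is a routine splitting-and-integrating argument.
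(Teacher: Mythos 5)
Your split into $|x|\le \alpha/(2f)$ and $|x|>\alpha/(2f)$ handles the inner region correctly: there $a:=x+\alpha/f\ge\alpha/(2f)$, $|\arg(a+i\beta/f)|\le\arctan(2|\beta|/\alpha)$, and the argument of $\Ai$ lies deep in the decay sector, giving a bound $e^{-c\alpha^{3/2}/f}$ that beats the claimed rate. But the outer region is where the lemma actually lives, and your treatment of it contains a genuine gap. For $x<-\alpha/f$ the argument $w=f^{1/3}(a+ib)$ (with $b=\beta/f$) has $a<0$ and fixed nonzero imaginary part, so $w$ heads toward the negative real axis, where $|\Ai(w)|\sim|w|^{-1/4}e^{|\operatorname{Im}\frac23 w^{3/2}|}$ and $|\operatorname{Im}(a+ib)^{3/2}|\approx \tfrac32|b|\,|a|^{1/2}$ for $|a|\gg|b|$. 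Hence $|\Ai(w)|$ grows like $\exp\bigl(c|\beta|\,|a|^{1/2}f^{-1/2}\bigr)$, unboundedly as $x\to-\infty$. Your assertion that the Airy factor there is ``bounded (polynomially in its argument)'' is false whenever $\beta\ne0$, and discarding the weight beyond the crude bound $e^{-2\epsilon|x|}\le e^{-\epsilon\alpha/f}$ cannot close the argument: one must keep the full weight $e^{-\epsilon|x|}=e^{-\epsilon(|a|+\alpha/f)}$ and minimize $\epsilon|a|/2-c|\beta|\,|a|^{1/2}f^{-1/2}$ over $|a|$, which produces a loss of order $\beta^{2}/(\epsilon f)$ in the exponent that has to be absorbed by $\epsilon\alpha/(4f)$.

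This also means your reading of the hypothesis $\epsilon\sqrt{\alpha}\ge 8|\beta|$ as a sector condition is not right: no relation among $\epsilon,\alpha,\beta$ prevents $w$ from entering the growth region as $x\to-\infty$; the hypothesis is exactly what makes the exponential weight win the competition after the minimization just described (together with $\alpha\ge 8|\beta|$, which controls the milder growth $e^{O(|\beta|^{3/2}/f)}$ that already occurs for $0\le a\lesssim|b|$, i.e.\ when $\arg(a+ib)$ exceeds $\pi/3$). The paper's proof is organized around precisely this point: it sets $a=x+\alpha/f$, $b=\beta/f$, treats separately bounded $|w|$, the half-line $a\ge0$, the half-line $a\le0$ (with the explicit minimization over $|a|$), and the wedge $|\arg(a+ib)|\ge\pi-\delta_1$ via the oscillatory asymptotics of $\Ai(-u)$. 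To repair your proposal you need to add the $a\le0$ analysis; the rest of your outline is compatible with it.
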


\begin{proof}
We first write $x+\alpha/f = a, \beta/f = b$.  We obtain 

$$\text{Re}(a+ib)^{3/2} = (a + \sqrt{a^2 + b^2})^{1/2} (2a -  \sqrt{a^2 + b^2})/\sqrt2, |\text{Im}(a+ib)^{3/2}| = ( \sqrt{a^2 + b^2}- a)^{1/2}( \sqrt{a^2 + b^2} +2a)/\sqrt2$$
If the argument $w= f^{1/3}(a+ib)$ of $Ai$ is bounded, then for example we must have 
$$|x| \ge \alpha/f - |a| \ge f^{-1/3}(f^{1/3}\alpha/f -  f^{1/3}|a|)\ge f^{-1/3}(f^{1/3} \alpha/f -C) = \alpha/f - f^{-1/3}C\ge \alpha/2f$$
for small $f$.   Thus $||g_{\epsilon, z}1_{\{|w| \le C\}}||_2 \le k_{\epsilon} e^{-\epsilon \alpha/4f}$. Therefore in the following we assume $|w|$ is large so we can use asymptotic expansions of the Airy function.   First assume that $|\arg(a+ib)| < \pi - \delta_1$ for some small $\delta_1 > 0$.  We write $1_{\delta_1}$ for the indicator function of this set.  Then we can use the asymptotic behavior of $\Ai(w)$,  $|\Ai(w)| \le C|w|^{-1/4} e^{-\text{Re}\zeta}$, where $\zeta = 2w^{3/2}/3$.  If $a\ge 0$ 
$$-\text{Re}w^{3/2} \le f^{1/2} (\sqrt{a^2 + b^2} +a)^{1/2} (\sqrt{a^2 + b^2} -a -a)/\sqrt 2 \le f^{1/2} (\sqrt{a^2 + b^2}^2 - a^2)(\sqrt{a^2 + b^2} -a)^{1/2}/\sqrt 2$$ 
$$= f^{1/2}|b|(\sqrt{a^2 + b^2}-a)^{1/2}/\sqrt 2 \le f^{1/2}|b|^{3/2}/\sqrt 2 = |\beta|^{3/2}/\sqrt 2 f.$$
Thus $$\epsilon|x|/2 + (2/3) \text{Re}w^{3/2}  \ge \epsilon (a + \alpha/f)/2 - \sqrt 2 |\beta|^{3/2}/3f.$$  If we demand that $\alpha \ge 4 |\beta|$ and $\epsilon \sqrt {\alpha} \ge |\beta| $ then $$\epsilon|x|/2 + (2/3) \text{Re}w^{3/2}  \ge \epsilon \alpha/4f + \epsilon \alpha/4f - \sqrt 2 /6f \alpha^{1/2}|\beta| \ge \epsilon \alpha/4f.$$
Thus we have $||g_{\epsilon, z}1_{\{|w| \ge C, a \ge 0\}}1_{\delta_1}||_2 \le k_{\epsilon} e^{-\epsilon \alpha/4f}$.
\vskip .1in
Suppose now that $a\le 0$.  Then $$-\text{Re}(a+ ib)^{3/2} =  (\sqrt{a^2 + b^2} -|a|)^{1/2}(\sqrt{a^2 + b^2} +2|a|)/\sqrt 2\le \sqrt 2|b| (\sqrt{a^2 + b^2} +|a|)^{1/2}$$
$$\le 2|b|(a^2 + b^2)^{1/4}\le 2|b|^{3/2} + 2|b||a|^{1/2}$$ and thus since $|x| = |a| + \alpha/f$
$$\epsilon |x| + 2\text{Re}w^{3/2}/3 \ge \epsilon|x|/2 + \epsilon \alpha/4f + \epsilon\alpha/4f - 4|b|^{3/2}f^{1/2}/3  + \epsilon|a|/2 - 4|b| |a|^{1/2}f^{1/2}/3.$$
Minimizing the last two terms over $|a|$ we find
$$\epsilon |x| + 2\text{Re}w^{3/2}/3 \ge (\epsilon|x|/2 + \epsilon \alpha/4f) +  \epsilon\alpha/4f - 4|\beta|^{3/2}/3f - 8\beta^2/9\epsilon f$$ 
We now require $\alpha \ge 8|\beta|$ and $\epsilon \sqrt{\alpha} \ge 8|\beta|$.  Then one can check that the last three terms add to something positive.  Thus $||g_{\epsilon, z}1_{\{|w| \ge C, a \le 0\}}1_{\delta_1}||_2 \le k'_{\epsilon}e^{-\epsilon \alpha/4f}$. 
\vskip .1in
Finally, consider the region where $\pi \ge |\arg(a+ib)| \ge \pi - \delta_1$ where $\delta_1$ is small.  Then we can use the asymptotics $|\Ai(-u)| \le C|u|^{-1/4} e^{|\text{Im}2u^{3/2}/3|}$ where $-u = w = f^{1/3}(a+ib)$.  Thus $u = f^{1/3}(|a| - ib)$ and from above $|\text{Im}u^{3/2}| = ( \sqrt{a^2 + b^2} - |a|)^{1/2}( \sqrt{a^2 + b^2} + 2|a|)/\sqrt2$.  But this has just been estimated and we thus find   $||g_{\epsilon, z}1_{\{|w| \ge C \}}(1-1_{\delta_1})||_2 \le k_{\epsilon} e^{-\epsilon \alpha/4f}$.  This proves the lemma.
\end{proof}

\begin{lemma}\label{analyticcont} \label{f_1}
For $f> 0$ the weighted resolvent $\zeta_1(x)(H_{f,0} - z)^{-1}\zeta_2(x)$ has an analytic continuation from the upper half plane to $\mathbb{C}$ as a bounded operator as long as $|\zeta_j(x)| \le C_{\gamma}e^{-\gamma |x|^{1/2}}$ for all $\gamma > 0$.  If $\text{Im} f >0$, $(H_{f,0} - z)^{-1}$ is analytic in $f$ and entire in $z$ and if $f_1>0$ then $$\lim_{f\to f_1, \text{Im}f \downarrow 0}|| \zeta_1(x)\Big((H_{f,0} - z)^{-1} - (H_{f_1,0} - z)_c^{-1}\Big) \zeta_2(x)|| = 0$$ where the subscript $c$ indicates the analytic continuation of $\zeta_1(x)(H_{f_1,0} - z)^{-1}\zeta_2(x)$ from the upper half $z$ plane.
\end{lemma}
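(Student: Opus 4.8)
The plan is to exploit the explicit formula \eqref{K} for the kernel together with the fact that $\Ai$ is entire, and then to combine this with the weighted estimates of Lemma \ref{jumpconvergence} (and the Mourre bound of the preceding lemmas) to get the analytic continuation as a \emph{bounded} operator. First I would treat the case $\mathrm{Im}\,f>0$: writing $H_{f,0}=p^2+fx$, the operator $e^{ip^3/(3f)}$ formally conjugates $H_{f,0}$ to multiplication by $fx$, and since for $\mathrm{Im}\,f>0$ the kernel $\Ai(f^{1/3}(x-t))$ decays superexponentially in $t$ for $x$ in a compact set, the contour integral in \eqref{K} converges for all $z\in\mathbb{C}$; analyticity in $z$ and in $f$ follows by differentiating under the integral sign, using the rapid decay of $\zeta_1,\zeta_2$ to control the $x,y$ integrations. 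This establishes that $\zeta_1(H_{f,0}-z)^{-1}\zeta_2$ is entire in $z$ and analytic in $f$ on $\{\mathrm{Im}\,f>0\}$.

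Next I would construct the continuation to real $f>0$ by the contour-distortion argument already sketched after \eqref{K}: push the $t$-contour into the lower half plane near $\mathrm{Re}(z/f)$, picking up the residue term, so that the continued kernel is $K_{0,f,c}$ of \eqref{K_c}, i.e. the original (convergent) principal-value-type integral plus the rank-one piece $2\pi i f^{-1/3} g_1(x;z)g_2(y;z)$. With the weights $\zeta_j$ replacing $V_j$, the rank-one term is $2\pi i f^{-1/3}\,(\zeta_2 A_2,\,\cdot\,)\,\zeta_1 A_2$, and boundedness of this term is exactly an $L^2$ bound on $\zeta_j(x)\Ai(f^{1/3}(x-z/f))$ — which is where Lemma \ref{jumpconvergence} (or rather the cruder bound that $\zeta_j$, decaying faster than any $e^{-\gamma|x|^{1/2}}$, beats the at-most-$e^{c|x|^{3/2}}$ growth of the Airy function in the relevant sector) enters. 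The remaining integral term is handled by the same asymptotic bookkeeping on $\Ai$ used throughout Section \ref{resconv}, together with the resolvent bound $\|D(p^2+fx-z)^{-1}D\|\le Cf^{-1}$ from the Mourre lemma to absorb the $(t-z/f)^{-1}$ singularity in a Cauchy-principal-value sense.

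For the convergence statement as $f\to f_1>0$ with $\mathrm{Im}\,f\downarrow 0$, I would note that for $\mathrm{Im}\,f>0$ no contour distortion is needed, while at $f=f_1$ the continued operator $(H_{f_1,0}-z)^{-1}_c$ is given by the distorted contour; writing both as integrals over the \emph{same} distorted contour $\Gamma$ (legitimate for $\mathrm{Im}\,f>0$ since the integrand is analytic and decays), the difference becomes $\int_\Gamma \big[\Ai(f^{1/3}(x-t))\Ai(f^{1/3}(y-t)) - \Ai(f_1^{1/3}(x-t))\Ai(f_1^{1/3}(y-t))\big](t-z/f)^{-1}\,dt$ plus a term from the discrepancy between $z/f$ and $z/f_1$ in the pole location, and both go to $0$ in operator norm by dominated convergence, using joint continuity of $\Ai$ and the $\zeta_j$-weighted $L^2$ bounds uniformly for $f$ near $f_1$. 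The main obstacle is the uniformity of the Airy-function $L^2$ bounds as $f$ ranges over a complex neighborhood of $f_1$ (so that $f^{1/3}$ has a small imaginary part and the decay/growth sectors of $\Ai$ tilt): one must check that the weighted bounds of Lemma \ref{jumpconvergence}-type survive this tilt, which is routine but requires carrying an extra parameter through those estimates. Everything else is dominated convergence and the already-proven resolvent bounds.
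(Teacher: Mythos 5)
Your overall architecture is the same as the paper's (the paper's own proof is essentially a one-line appeal to the entirety and asymptotics of $\Ai$): represent the continued kernel via the distorted contour as in \eqref{K_c}, treat the rank-one residue term by a weighted $L^2$ bound on $\zeta_j(x)\Ai(f^{1/3}(x-z/f))$, and get analyticity in $f$ and $z$ plus the boundary-value convergence by dominated convergence. That is the right route, and most of your sketch is fine.

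There is, however, one genuine error in the step that carries the whole weight of the hypothesis on $\zeta_j$. You assert that a weight decaying faster than every $e^{-\gamma|x|^{1/2}}$ ``beats the at-most-$e^{c|x|^{3/2}}$ growth of the Airy function.'' It does not: $e^{-\gamma|x|^{1/2}}e^{c|x|^{3/2}}\to\infty$, so if the Airy factor really grew like $e^{c|x|^{3/2}}$ the lemma would be false for these weights. The correct point — and the reason the hypothesis is stated exactly as it is — is that for fixed $z$ and $f>0$ the argument $w=f^{1/3}(x-z/f)$ moves to infinity along a horizontal line with \emph{bounded} imaginary part $-f^{-2/3}\,\mathrm{Im}\,z$. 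As $x\to+\infty$, $\arg w\to 0$ and $\Ai(w)$ decays superexponentially. As $x\to-\infty$, $\arg w\to\pi$ and $|\Ai(w)|\lesssim |w|^{-1/4}e^{|\mathrm{Im}(2w^{3/2}/3)|}$, where (as computed in Lemma \ref{jumpconvergence}) with $w=f^{1/3}(a+ib)$, $b=-\mathrm{Im}\,z/f$ fixed and $a\to-\infty$, one has $|\mathrm{Im}\,w^{3/2}|\sim \tfrac32\,|\mathrm{Im}\,z|\,f^{-1/2}|x|^{1/2}$. So the growth is only $e^{c(z,f)|x|^{1/2}}$, with a constant that blows up as $|\mathrm{Im}\,z|$ grows or $f$ shrinks — which is precisely why the lemma demands $|\zeta_j(x)|\le C_\gamma e^{-\gamma|x|^{1/2}}$ for \emph{every} $\gamma>0$ rather than for one fixed $\gamma$. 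With that correction your argument closes; note also that the Mourre bound $\|D(p^2+fx-z)^{-1}D\|\le Cf^{-1}$ is not needed here, since the $(t-z/f)^{-1}$ factor is nonsingular on the distorted contour.
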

\begin{proof}
Since the Airy functions are entire functions of their arguments, the lemma just follows from their asymptotic behavior (see \cite{AS}).   
\end{proof}
Note that (\ref{resolventeqn}) then shows that if $f> 0$, $\zeta_1(x)(H_f - z)^{-1}\zeta_2(x)$ has a meromorphic continuation to $\mathbb{C}$.
\begin{proposition}\label{intresolventconvergence}
Suppose $\lambda_0 < 0$.  Choose $r$ with $0 < r < \text{dist}(\lambda_0, \sigma(H)\setminus \{\lambda_0\})$ and $r < |\lambda_0|/9$.  Then if $\epsilon > \sqrt{8|\lambda_0|/9}$,  $$||e^{-\epsilon |x|} ((H_f - z)_c^{-1}  - (H-z)^{-1})e^{-\epsilon |x|}|| \to 0$$ as $f\downarrow 0$ uniformly for $|z-\lambda_0| = r$.  Here $e^{-\epsilon |x|} (H_f - z)_c^{-1}e^{-\epsilon |x|}$ is the meromorphic continuation of the resolvent (with weights) from the upper half plane to a neighborhood of the point $\lambda_0$.  
\end{proposition}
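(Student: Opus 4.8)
The plan is to reduce everything to the free Stark operator $H_{f,0}=p^{2}+fx$ via the resolvent equation (\ref{resolventeqn}), sandwiched between the weights $e^{-\epsilon|x|}$. Continuing (\ref{resolventeqn}) from the upper half plane is legitimate because each factor has the continuation of Lemma \ref{analyticcont} (its weight hypothesis is met by $e^{-\epsilon|x|}$, since $\gamma t^{1/2}-\epsilon t$ is bounded above for every $\gamma>0$) and $K_{0,f}(z)=V_1(H_{f,0}-z)^{-1}V_2$ continues to $K_{0,f,c}(z)=Q+K_{0,f}(z)$ with $Q=|e_1\rangle\langle e_2|$, $e_1,e_2$ as in Theorem \ref{G}; by uniqueness of analytic continuation one obtains, at every $z$ that is not a resonance of $H_f$,
\begin{equation*}
e^{-\epsilon|x|}(H_f-z)^{-1}_c e^{-\epsilon|x|}=W_f(z)-W_f(z)\,(e^{\epsilon|x|}V_2)\,(1+K_{0,f,c}(z))^{-1}(V_1e^{\epsilon|x|})\,W_f(z),
\end{equation*}
where $W_f(z):=e^{-\epsilon|x|}(H_{f,0}-z)^{-1}_c e^{-\epsilon|x|}$ and $e^{\epsilon|x|}V_j$ is bounded ($V_j$ has compact support). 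Since the circle $|z-\lambda_0|=r$ misses $\sigma(H)$, the same identity with $f=0$ expresses $e^{-\epsilon|x|}(H-z)^{-1}e^{-\epsilon|x|}$ through $W_0(z):=e^{-\epsilon|x|}(p^2-z)^{-1}e^{-\epsilon|x|}$ and $K_0(z):=V_1(p^2-z)^{-1}V_2$. Hence it suffices to prove, uniformly for $|z-\lambda_0|=r$, that (i) $W_f(z)\to W_0(z)$ and $K_{0,f,c}(z)\to K_0(z)$, and (ii) $(1+K_{0,f,c}(z))^{-1}$ exists for small $f$ and converges to $(1+K_0(z))^{-1}$; (i) and (ii) then also make the displayed identity valid on the whole circle, and one passes to the limit.

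For (i) I would split $W_f(z)=e^{-\epsilon|x|}(H_{f,0}-z)^{-1}e^{-\epsilon|x|}+e^{-\epsilon|x|}T_f(z)e^{-\epsilon|x|}$, where $T_f(z)$ is the rank-one jump picked up in continuing the free Stark resolvent across the real axis, with kernel $2\pi i f^{-1/3}\Ai(f^{1/3}(x-z/f))\Ai(f^{1/3}(y-z/f))$ (strip the $V_j$ from (\ref{K_c})). On the circle $\text{Re}\,z\le\lambda_0+r=-( |\lambda_0|-r)<0$, so for $\text{Im}\,z\ne0$ Lemma \ref{resolventconvergence} applies, and since $e^{-\epsilon|x|}=(\langle x\rangle e^{-\epsilon|x|})\langle x\rangle^{-1}$ with $\langle x\rangle e^{-\epsilon|x|}$ bounded, it upgrades to $\|e^{-\epsilon|x|}((H_{f,0}-z)^{-1}-(p^2-z)^{-1})e^{-\epsilon|x|}\|\le Cf$. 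For the jump, $\|e^{-\epsilon|x|}T_f(z)e^{-\epsilon|x|}\|=2\pi f^{-1/3}\|g_{\epsilon,z}\|_2^2$ with $g_{\epsilon,z}(x)=e^{-\epsilon|x|}\Ai(f^{1/3}(x-z/f))$ exactly the function of Lemma \ref{jumpconvergence}. Writing $z=-\alpha-i\beta$, the hypotheses of that lemma hold at every point of the circle precisely because of the two numerical conditions: $\alpha\ge|\lambda_0|-r>8r\ge8|\beta|$ as $r<|\lambda_0|/9$, and $\epsilon^2\alpha>\tfrac89|\lambda_0|\,(|\lambda_0|-r)\ge\tfrac{64}{81}|\lambda_0|^2>64r^2\ge64\beta^2$ as $\epsilon>\sqrt{8|\lambda_0|/9}$; hence $\|e^{-\epsilon|x|}T_f(z)e^{-\epsilon|x|}\|\le Cf^{-1/3}e^{-\epsilon(|\lambda_0|-r)/2f}\to0$. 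Thus $\|W_f(z)-W_0(z)\|\le Cf+Cf^{-1/3}e^{-c/f}$ for $\text{Im}\,z\ne0$ on the circle; as $W_f$ is entire (Lemma \ref{analyticcont}) and $W_0$ is analytic on a neighborhood of the circle, both are continuous on the circle, and $\{\text{Im}\,z\ne0\}$ is dense in it, so the bound persists at the two real points, giving the first half of (i). The second half is identical with $V_j$ replacing the outer $e^{-\epsilon|x|}$; here also $\|e_j\|\le Cf^{-1/3}e^{-\epsilon\alpha/4f}\to0$ by Lemma \ref{jumpconvergence}, so $\|Q\|\to0$ and $K_{0,f,c}(z)\to K_0(z)$ uniformly.

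For (ii): since $0<r<\text{dist}(\lambda_0,\sigma(H)\setminus\{\lambda_0\})$ and $r<|\lambda_0|$, the circle lies in $\mathbb{C}\setminus\sigma(H)$, so by the Birman--Schwinger principle $1+K_0(z)$ is invertible there, and being continuous and operator-valued on the compact circle it has $\|(1+K_0(z))^{-1}\|\le C$ uniformly. Once $f$ is small enough that $\sup\|K_{0,f,c}(z)-K_0(z)\|<1/(2C)$, a Neumann series gives $1+K_{0,f,c}(z)$ invertible with $\|(1+K_{0,f,c}(z))^{-1}\|\le2C$ on the circle — in particular $H_f$ has no resonance there — and $(1+K_{0,f,c})^{-1}-(1+K_0)^{-1}=-(1+K_0)^{-1}(K_{0,f,c}-K_0)(1+K_{0,f,c})^{-1}\to0$ uniformly. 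Substituting (i) and (ii) into the identity of the first paragraph, every factor on the right converges uniformly on the circle and is uniformly bounded, so the product converges uniformly to $W_0(z)-W_0(z)(e^{\epsilon|x|}V_2)(1+K_0(z))^{-1}(V_1e^{\epsilon|x|})W_0(z)$, which equals $e^{-\epsilon|x|}(H-z)^{-1}e^{-\epsilon|x|}$ by the resolvent equation for $H=p^2+V$. This is the assertion.

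The only genuinely delicate ingredient is the jump term $T_f(z)$: away from it the comparison is the routine $O(f)$ Stark-versus-free estimate of Lemma \ref{resolventconvergence}, but $T_f(z)$ carries a diverging prefactor $f^{-1/3}$ and is negligible only under an exponential weight that is not too strong — precisely when $\epsilon$ and $r$ are tuned so that $\Ai(f^{1/3}(x-z/f))$, which decays like $e^{-c/f}$ for $\text{Re}\,z<0$, beats $e^{\epsilon|x|}$ on the relevant set. This is exactly the content of Lemma \ref{jumpconvergence} and the source of the constants $|\lambda_0|/9$ and $\sqrt{8|\lambda_0|/9}$. A lesser point, also worth stating, is that (i) must hold uniformly up to the two points where the circle meets the real axis, where the physical Stark resolvent is singular; this is covered by analyticity of the continued weighted resolvent (Lemma \ref{analyticcont}) together with continuity and the density of $\{\text{Im}\,z\ne0\}$.
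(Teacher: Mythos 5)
Your proof is correct and follows essentially the same route as the paper's: the resolvent equation (\ref{resolventeqn}) sandwiched between the weights, Lemma \ref{resolventconvergence} for the physical Stark-versus-free comparison, Lemma \ref{jumpconvergence} for the rank-one jump picked up in the continuation, and convergence of $(1+K_{0,f,c}(z))^{-1}$ on the circle. Your version is in fact more explicit than the paper's (the arithmetic showing that $r<|\lambda_0|/9$ and $\epsilon>\sqrt{8|\lambda_0|/9}$ guarantee the hypotheses of Lemma \ref{jumpconvergence} at every point of the circle, and the Neumann-series step), and all of these added details check out.
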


\begin{proof}
We use Lemma \ref{resolventconvergence} to show the convergence of the resolvent for $\text{Im} z \ne 0$, and even for the limits onto the real axis from above and below.  To see the convergence below the real axis in the region stated we use a version of (\ref{K_c}) without the weights $V_1$ and $V_2$, along with Lemma \ref{jumpconvergence}.  This shows that $$||e^{-\epsilon |x|} ((H_{0,f} - z)_c^{-1}  - (p^2-z)^{-1})e^{-\epsilon |x|}|| \to 0$$ as $f\downarrow 0$ uniformly for $|z-\lambda_0| = r$. In particular it follows that $(I + V_1 (H_{f,0} -z)^{-1}_cV_2)^{-1} \to (I + V_1 (p^2 -z)^{-1}V_2)^{-1}$ as $ f \downarrow 0$.  Then we use (\ref{resolventeqn}) and Lemma \ref{analyticcont} to show that $(H_f -z)^{-1}$ (with the weights given in that lemma) has a meromorphic continuation from the upper half plane to $\mathbb{C}$.  Finally with the weights $e^{-\epsilon|x|}$, we can see using (\ref{resolventeqn}) for $(H_f - z)_c^{-1}$, we have the convergence stated in this proposition.
\end{proof}

\section{Resonances near the negative real axis}\label{rnnra}

\begin{theorem} \label{asymptotic expansion}
Suppose $\lambda_0$ is an eigenvalue of $H$. Then for small enough $f>0$ there is a resonance $\lambda(f)$ of $H_f$ near $\lambda_0$.  $\lambda(f)$ has an asymptotic expansion in (non-negative) powers of $f$.  The terms of this expansion can be calculated as if $\lambda(f)$ were a real eigenvalue of $H_f$ using the Rayleigh-Schr\"odinger perturbation expansion.  Thus the expansion coefficients are real although $\lambda(f)$ is not.
\end{theorem}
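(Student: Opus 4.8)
The plan is to localize the resonance condition near $\lambda_0$ to a Birman--Schwinger eigenvalue equation, exhibit the relevant operator family as $C^\infty$ in $f$ down to $f=0$ up to errors beyond all orders, run analytic perturbation theory together with the implicit function theorem to produce $\lambda(f)$ and its asymptotic expansion, and finally match that expansion term by term with the Rayleigh--Schr\"odinger recursion. Recall from the Fredholm analysis following (\ref{K_c}) that a point $z$ in the lower half plane is a resonance of $H_f$ if and only if $-1$ is an eigenvalue of the continued Birman--Schwinger operator $K_{0,f,c}(z)=V_1(H_{f,0}-z)_c^{-1}V_2$, i.e. iff $1$ is an eigenvalue of $-K_{0,f,c}(z)$. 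Fix a small disc $D$ centered at $\lambda_0$. Since the $V_j$ are bounded with compact support they are dominated by $e^{-\epsilon|x|}$ for every $\epsilon$, so Lemma \ref{resolventconvergence}, Proposition \ref{intresolventconvergence}, and the bound $\|g_j\|_2=O(e^{-c/f})$ (Lemma \ref{jumpconvergence}, or directly since $V_j$ has compact support) on the rank one continuation term $2\pi i f^{-1/3}g_1\otimes\overline{g_2}$ in (\ref{K_c}) together give $-K_{0,f,c}(z)\to -V_1(p^2-z)^{-1}V_2$ uniformly on $\overline D$ as $f\downarrow 0$. Being a one-dimensional Schr\"odinger operator on the line, $H=p^2+V$ has only simple eigenvalues, so $-V_1(p^2-\lambda_0)^{-1}V_2$ has $1$ as an isolated, one-dimensional eigenvalue; hence for $z\in D$ and small $f$ there is a well defined eigenvalue $\mu(z,f)$ of $-K_{0,f,c}(z)$ near $1$, analytic in $z$ on $D$ (after shrinking $D$), with $\mu(\lambda_0,0)=1$, and the resonances of $H_f$ in $D$ are precisely the solutions of $\mu(z,f)=1$.

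Next I would upgrade this convergence to an asymptotic expansion in non-negative integer powers of $f$. By Lemma \ref{asymptoticexpansion}, applied with $R_0=(p^2-z)^{-1}$ and $W=-x$, the genuine (non-continued) operator $V_1(H_{f,0}-z)^{-1}V_2$ equals $\sum_{n=0}^{2N-1}\big(V_1R_0(WR_0)^nV_2\big)f^n$ plus a remainder of size $O(f^{2N-1})$, uniformly for $z\in D$ with $\text{Im}\,z\ne 0$, with a constant independent of $\text{Im}\,z$; passing to the continued operator adds only the term $2\pi i f^{-1/3}g_1\otimes\overline{g_2}$, which is $O(e^{-c/f})$ and hence negligible to all orders, so $K_{0,f,c}(z)$ has an asymptotic expansion in powers of $f$ whose coefficients $V_1R_0(WR_0)^nV_2$ are analytic in $z$ on $D$. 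The crucial point is that these are exactly the coefficients of the naive expansion of $V_1(p^2+fx-z)^{-1}V_2$ obtained by treating $fx$ as a perturbation of $H$, i.e. the operators that appear in Birman--Schwinger (equivalently Rayleigh--Schr\"odinger) perturbation theory for $H_f=H+fx$ at $\lambda_0$. That the expansion is uniform on all of $D$, including real $z$, uses the above lemmas together with the Avron--Herbst representation (\ref{AvronHerbstformula}) and the resolvent bound on $(p^2+fx-z)^{-1}$ from Section \ref{resconv}.

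By analytic perturbation theory for the isolated simple eigenvalue $1$, $\mu(z,f)$ inherits an asymptotic expansion in powers of $f$, uniformly on $D$, with $z$-analytic coefficients whose leading term $\mu_0(z)$ is the Birman--Schwinger eigenvalue of $-V_1(p^2-z)^{-1}V_2$; since $\lambda_0<0$ is a genuine eigenvalue of $H$ (away from the threshold $z=0$) one has $\mu_0'(\lambda_0)\ne 0$, the standard transversality of the Birman--Schwinger eigenvalue. Declaring, as elsewhere in the paper, that $e^{-4ik^3/3f}$ and all its derivatives vanish at $f=0$, the function $\mu(z,f)-1$ is $C^\infty$ on $D\times[0,f_0)$, so the implicit function theorem produces a unique $C^\infty$ solution $z=\lambda(f)$, $f\in[0,f_0)$, of $\mu(z,f)=1$ with $\lambda(0)=\lambda_0$. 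This $\lambda(f)$ is therefore the unique resonance of $H_f$ in $D$ (its existence alone also follows from Proposition \ref{intresolventconvergence} by a contour integral), and being $C^\infty$ on $[0,f_0)$ it has an asymptotic expansion $\lambda(f)\sim\lambda_0+\sum_{n\ge1}\lambda_n f^n$, obtained by formal substitution into $\mu(z,f)=1$.

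It remains to identify the $\lambda_n$. The recursion they satisfy is exactly the formal substitution of the $f$-expansions of $\mu$ and of $z$ into $\mu(z,f)=1$; by the identification of the expansion coefficients of $K_{0,f,c}$ in the second paragraph, this is precisely the Birman--Schwinger perturbation recursion for the eigenvalue of $H+fx$ bifurcating from $\lambda_0$, which coincides order by order with the Rayleigh--Schr\"odinger recursion for the self-adjoint pair $(H,x)$: $\lambda_1=(\phi_0,x\phi_0)$, and in general $\lambda_n$ is assembled from the real eigenfunction $\phi_0$, the reduced resolvent $(H-\lambda_0)^{-1}(I-P_0)$ with $P_0$ the eigenprojection of $H$ at $\lambda_0$, and multiplication by $x$; all of these make sense because $\phi_0$ (and each higher correction) decays exponentially, so $x\phi_0\in L^2$, etc. Hence every $\lambda_n$ is real. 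On the other hand $\lambda(f)$ itself is not real, since a real resonance would be an embedded eigenvalue of $H_f$ and the one-dimensional Stark operator $H_f$, $f>0$, has empty point spectrum; in fact $\text{Im}\,\lambda(f)$ is exponentially small, consistent with the real asymptotic expansion. I expect the main obstacle to be the second paragraph: producing the asymptotic expansion of $K_{0,f,c}(z)$ in integer powers of $f$ uniformly on a full neighborhood of $\lambda_0$ --- in particular up to and onto the real axis, where the crude a priori bounds degenerate --- and checking that the continuation correction really is beyond all orders; the perturbation-theoretic and implicit-function steps, and the transversality of the Birman--Schwinger eigenvalue, are then routine.
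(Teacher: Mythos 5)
Your proposal is correct in substance and relies on the same key ingredients as the paper (Lemma \ref{asymptoticexpansion} for the expansion of the weighted free Stark resolvent, the exponential smallness of the rank-one continuation term from Lemma \ref{jumpconvergence}, Proposition \ref{intresolventconvergence} for locating the resonance, and the absence of eigenvalues of $H_f$ to conclude non-reality), but it takes a genuinely different route through the middle of the argument. You locate $\lambda(f)$ as the solution of $\mu(z,f)=1$, where $\mu$ is the Birman--Schwinger eigenvalue of $-K_{0,f,c}(z)$ near $1$, and invoke transversality plus the implicit function theorem; the paper instead writes down the explicit Riesz-projection ratio
$\lambda(f) = \oint (\psi_0, z(H_f-z)_c^{-1}\psi_0)\,dz \big/ \oint (\psi_0, (H_f-z)_c^{-1}\psi_0)\,dz$
over a small circle around $\lambda_0$, into which the asymptotic expansion of $(H_{0,f}-z)_c^{-1}$ can be substituted directly; since the denominator tends to $1$, the ratio immediately inherits an asymptotic expansion. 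The advantage of the paper's formula is that it needs only the uniform asymptotic expansion with remainder bounds, whereas your appeal to the implicit function theorem at $f=0$ tacitly requires joint $C^1$ (really $C^\infty$) dependence of $\mu(z,f)$ on $f$ down to $f=0$, which Lemma \ref{asymptoticexpansion} does not literally supply (it controls remainders, not $f$-derivatives); this soft spot is repairable --- given the uniform expansion of $\mu$ and $\partial_z\mu_0(\lambda_0)\neq 0$ one can extract the expansion of $\lambda(f)$ order by order via truncation and Rouch\'e --- but as written it is the one step you should shore up. For the identification of the coefficients with Rayleigh--Schr\"odinger, you assert that the Birman--Schwinger recursion coincides order by order with the RS recursion; the paper justifies exactly this point by the clean device of replacing $fx$ by a bounded decaying $fW$, for which the same formulas give the convergent perturbation series of a genuine real eigenvalue, and observing that the terms match with $W$ replaced by $x$. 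Your version buys a self-contained local characterization of the resonance set in the disc; the paper's buys a shorter path to the expansion with weaker regularity input.
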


\begin{proof}
Let $\psi_0$ be the normalized eigenfunction of $H$ corresponding to $\lambda_0$.  Since $V$ has compact support $\psi_0 = ae^{\sqrt{-\lambda_0}x} + be^{-\sqrt{-\lambda_0}x}$ and thus it is clear that $\lambda_0 < 0$. We see that in fact  $\psi_0 = a_{\pm} e^{-\sqrt{-\lambda_0}|x|}$ for large $|x|$  with $\pm x > 0$.   Because of the decay rate of $\psi_0$ we learn that for $r$ as in Proposition \ref{intresolventconvergence},   $$\lim_{f\downarrow 0}-(2\pi i)^{-1}\oint_{|z-\lambda_0| = r} (\psi_0, (H_f- z)_c^{-1}\psi_0)dz = -(2\pi i)^{-1}\oint_{|z-\lambda_0| = r} (\psi_0, (H- z)^{-1}\psi_0) dz= 1.$$  As we can make $r > 0$ as small as we like we see that $H_f$ has a resonance $\lambda(f) \to \lambda_0$ as $f\downarrow 0$. The pole in $(H-z)^{-1}$ at $\lambda_0$ corresponds to a simple zero of the Wronskian and similarly from the convergence of projections $\lambda(f)$ is a simple pole in the resolvent $(H_f-z)^{-1}$ for $\text{Im}f >0$.  According to the convergence result of Lemma \ref{f_1}  for small $f>0$ the pole of $(\psi_0, (H_f- z)_c^{-1}\psi_0)$ is simple.  Thus for small enough $r$ and $f>0$,$\oint_{|z-\lambda_0| = r} (\psi_0,(z-\lambda(f)) (H_f- z)_c^{-1}\psi_0)dz = 0$.  It follows (again for small enough $r$ and $f$) that 

\begin{equation}\label{resonanceformula}
\lambda(f) = \frac{ \oint_{|z-\lambda_0| = r} (\psi_0, z(H_f- z)_c^{-1}\psi_0)dz }{\oint_{|z-\lambda_0| = r} (\psi_0, (H_f- z)_c^{-1}\psi_0)dz}.
\end{equation}
 We now substitute the asymptotic expansion of $(H_{0,f} - z)_c^{-1}$ from Lemma \ref{asymptoticexpansion} into (\ref{resolventeqn}) (note that the rank one term as estimated in Lemma \ref{jumpconvergence} has a zero asymptotic expansion and can thus be ignored).  Thus $\lambda(f)$ has an asymptotic expansion in powers of $f$.  If we instead did the same thing with $fx$ replaced by a real bounded function $fW$ with $\lim_{|x| \to \infty} W(x) = 0$ we would again get an asymptotic expansion, this time for the eigenvalue of $H + fW$,  which is actually convergent for small $f$ and whose terms are real.  The terms of this expansion match up exactly with those of our expansion of $\lambda(f)$ except of course with $x$ replaced by $W$.  Thus the asymptotic expansion of $\lambda(f)$ is exactly given by Rayleigh - Schr\"odinger perturbation theory for the non-existent eigenvalue of $H_f$.  It is non-existent because no linear combination of the two linearly independent Airy functions is square integrable near $-\infty$.  Thus $H_f$ has no eigenvalues.

\end{proof}

This result has been known for many years (\cite{GGLM, GG1, H, HaS}) in the dilation analytic framework.  In the latter framework $\lambda(f)$ is an actual eigenvalue of the dilated Hamiltonian unlike in our framework.  But of course the dilation analytic framework requires that the potential, $V$, be dilation analytic in some angle.  The result was also proved by Graffi and Grecchi (\cite{GG}) for Hydrogen in an electric field using the separability in squared parabolic coordinates.
\vspace{.1 in}

As is well known, for $\text{Im} f \ne 0$ the resolvent $(H_{0,f} - z)^{-1}$ is analytic in $f$ and entire in $z$.  For such $f$ the operator $H_{0,f}$ has domain equal to $D(p^2)\cap D(x)$.  Let $N_f = \{ f t + s^2 : t \ \text{and}  \ s \in \mathbb{R} \}$.  For all non-real $f$ the spectrum is empty so the resolvent is of course bounded.  But in addition we have the explicit bound $||(H_{0,f} - z)^{-1}|| \le 1/\text{dist}(z,N_f)$ when $z$ is outside the closure of the numerical range, $N_f$, of $H_{0,f}$.  Of course this bound also holds if $f=0$.  

\begin{proposition}\label{convergencefcomplex}
Consider the Hamiltonian $H_f$ for $f$ non-real.  If $f$ is not real the resolvent of $H_f$ is meromorphic in $\mathbb{C}$.  Suppose $C = \{z: |z-z_0| = r\}$ is a circle in $\mathbb{C}$ which is disjoint from the spectrum of $H$ and from $N_f$.  Then for small enough $f$, $C$ is disjoint from the spectrum of $H_f$ and allowing $f \to 0$ in such a way that the distance between the circle $C$ and $N_f$ is bounded away from $0$ we have the convergence of projections $$ \lim_{f \to 0}|| (2\pi i)^{-1}\int_{|z-z_0| = r}(H_f-z)^{-1}dz  - (2\pi i)^{-1}\int_{|z-z_0| = r}(H-z)^{-1}dz|| =0.$$ 
\end{proposition}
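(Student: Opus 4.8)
The plan is to reduce everything to the factored resolvent formula (\ref{resolventeqn}) and its $f=0$ analogue,
\begin{align*}
(H-z)^{-1}&=(p^2-z)^{-1}-(p^2-z)^{-1}V_2\bigl(1+K_{0,0}(z)\bigr)^{-1}V_1(p^2-z)^{-1},
\end{align*}
where $K_{0,f}(z)=V_1(H_{0,f}-z)^{-1}V_2$ and $H_{0,f}=p^2+fx$. For non-real $f$ the operator $H_{0,f}$ has empty spectrum, so $(H_{0,f}-z)^{-1}$ is entire in $z$ and $\oint_C(H_{0,f}-z)^{-1}\,dz=0$; also $\oint_C(p^2-z)^{-1}\,dz=0$, because $[0,\infty)\subseteq N_f$ is disjoint from $C$ and, being connected and unbounded, lies in the unbounded component of $\mathbb{C}\setminus C$, so $\mathrm{int}\,C$ contains no point of $\sigma(p^2)$. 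Hence the two Riesz integrals in the statement equal, up to sign, $\oint_C$ of the two ``scattering'' integrands
\begin{align*}
\Phi_f(z)&=(H_{0,f}-z)^{-1}V_2\bigl(1+K_{0,f}(z)\bigr)^{-1}V_1(H_{0,f}-z)^{-1},\\
\Phi_0(z)&=(p^2-z)^{-1}V_2\bigl(1+K_{0,0}(z)\bigr)^{-1}V_1(p^2-z)^{-1},
\end{align*}
and it suffices to prove $\sup_{z\in C}\|\Phi_f(z)-\Phi_0(z)\|\to 0$, together with the fact that $1+K_{0,f}(z)$ is invertible for every $z\in C$ once $f$ is small — which, since the poles of the (meromorphic) resolvent of $H_f$ are exactly the points where $1+K_{0,f}(z)$ fails to be invertible, gives $C\cap\sigma(H_f)=\varnothing$ so that the contour integrals are legitimate.

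The core estimate is the norm convergence, uniform for $z\in C$, of
\begin{align*}
(H_{0,f}-z)^{-1}V_2\to (p^2-z)^{-1}V_2,\qquad V_1(H_{0,f}-z)^{-1}\to V_1(p^2-z)^{-1},\qquad K_{0,f}(z)\to K_{0,0}(z).
\end{align*}
Each follows from a single application of the second resolvent identity, e.g. $\bigl[(H_{0,f}-z)^{-1}-(p^2-z)^{-1}\bigr]V_2=-f\,(H_{0,f}-z)^{-1}\bigl(x(p^2-z)^{-1}V_2\bigr)$, combined with two observations. First, since $C$ is kept at distance $\ge d_0>0$ from $N_f$ as $f\to 0$, the numerical-range bound $\|(H_{0,f}-z)^{-1}\|\le\operatorname{dist}(z,N_f)^{-1}\le d_0^{-1}$ gives a bound on the perturbed resolvent that is \emph{uniform} in $f$ — this is exactly why one resolvent-identity step is enough, in contrast to the real-$f$ situation where only an $O(1/f)$ bound is available. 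Second, because $V$ has compact support, $x(p^2-z)^{-1}V_2$ and $V_1(p^2-z)^{-1}x$ extend to bounded operators, uniformly for $z\in C$: this uses the commutator identity $[x,(p^2-z)^{-1}]=-2i(p^2-z)^{-1}p(p^2-z)^{-1}$, which is bounded because $C$ avoids $[0,\infty)$, together with the boundedness of $xV_j$. Multiplying these bounds gives $\|[(H_{0,f}-z)^{-1}-(p^2-z)^{-1}]V_2\|\le C|f|/d_0$, and similarly for the other two limits; here one uses that for non-real $f$ one has $D(H_{0,f})=D(p^2)\cap D(x)$, so that $(H_{0,f}-z)^{-1}$ maps into $D(x)$ and the expressions in which $x$ sits next to $(H_{0,f}-z)^{-1}$ are meaningful.

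To conclude: $K_{0,0}(z)=V_1(p^2-z)^{-1}V_2$ is norm-continuous on the compact set $C$, which is disjoint from $\sigma(H)$, so $M:=\sup_{z\in C}\|(1+K_{0,0}(z))^{-1}\|<\infty$; a Neumann series around $1+K_{0,0}(z)$ then shows that for small $f$, $1+K_{0,f}(z)$ is invertible for every $z\in C$, with $(1+K_{0,f}(z))^{-1}\to(1+K_{0,0}(z))^{-1}$ uniformly on $C$. Writing $\Phi_f-\Phi_0$ as a three-term telescoping sum (difference of the right factors, difference of the middle inverses, difference of the left factors) and using that each difference tends to $0$ in norm uniformly on $C$ while every remaining factor is uniformly bounded by the estimates above, we obtain $\sup_{z\in C}\|\Phi_f(z)-\Phi_0(z)\|\to 0$. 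Integrating over the finite-length contour $C$ then yields the asserted convergence of projections.

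I expect the only genuine obstacle to be the one flagged above: the field term $fx$ is unbounded and $D(H_{0,f})$ degenerates as $f\to 0$, so $\|x(H_{0,f}-z)^{-1}\|$ is only $O(1/f)$ and cannot be used head-on. The remedy is to exploit the numerical-range bound — which survives the limit precisely because $C$ is held away from $N_f$ — and to keep every $x$ adjacent to a free resolvent and a compactly supported $V_j$, where a commutator computation renders it harmless; once this is arranged, the rest is routine perturbation theory.
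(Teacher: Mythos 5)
Your argument is correct and follows essentially the same route as the paper's proof: both reduce to the factored resolvent formula (\ref{resolventeqn}) for $H_f$ and $H$, observe that the free-resolvent terms contribute nothing to the contour integral, and establish norm convergence of $(H_{0,f}-z)^{-1}V_2$, $V_1(H_{0,f}-z)^{-1}$ and $K_{0,f}(z)$ uniformly on $C$ via one resolvent-identity step together with the numerical-range bound $\|(H_{0,f}-z)^{-1}\|\le \operatorname{dist}(z,N_f)^{-1}$. You simply supply the details (the commutator computation and the Neumann-series step) that the paper leaves as ``easy to show.''
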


\begin{proof}
We use the formula (\ref{resolventeqn}): $$(H_f-z)^{-1} = (H_{f,0} -z)^{-1} - (H_{f,0} -z)^{-1}V_2(1+  V_1 (H_{f,0} -z)^{-1}V_2)^{-1}V_1 (H_{f,0} -z)^{-1}.$$
The convergence of $V_1 (H_{f,0} -z)^{-1}V_2$ to $V_1 (p^2 -z)^{-1}V_2$ in norm is easy  to show.  Equation (\ref{resolventeqn}) then implies that for small $f$,$(H_f-z)^{-1}$ has no spectrum on the circle and we can integrate $(H_f-z)^{-1} - (H-z)^{-1}$ around $C$.  We use  (\ref{resolventeqn}) for both resolvents and note that the term $(H_{f,0} -z)^{-1} - (p^2 -z)^{-1}$ occurring in the difference of resolvents integrates to zero.  Thus consider $(H_{f,0} -z)^{-1}V_2 - (p^2 -z)^{-1}V_2$ for example.  We have $||\Big((p^2 + fx -z)^{-1} - (p^2-z)^{-1}\Big)V_2|| \le c|f|$ where we are using the fact that the distance between $C$ and $N_f$ is bounded away from zero.  The term $V_1(H_{f,0} -z)^{-1}$ is treated in the same way.  This completes the proof.
\end{proof}

\begin{corollary}
From proposition \ref{convergencefcomplex} and the formula (\ref{resonanceformula}) we see that for $f >0$ and small, $\lambda(f)$ is the boundary value from $\text{Im} f>0$ of a function analytic in $f$.  The analytic continuation of $\lambda(f)$ to $\text{Im}f >0$ with $|f| >0$ but small, is a simple eigenvalue of $H_f$.  
\end{corollary}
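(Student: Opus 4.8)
The idea is to exhibit the advertised analytic function of $f$ concretely as a ratio of contour integrals of the \emph{genuine} (not continued) resolvent, defined for $\operatorname{Im}f>0$, and then to check two things: in the upper half $f$-plane it equals a genuinely simple eigenvalue of $H_f$ near $\lambda_0$, and as $\operatorname{Im}f\downarrow 0$ it has boundary value $\lambda(f)$ as given by (\ref{resonanceformula}).

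First I would fix $\lambda_0<0$, its normalized eigenfunction $\psi_0$, and choose $r>0$ small enough that all the constraints entering Theorem \ref{asymptotic expansion} and Proposition \ref{intresolventconvergence} hold; in particular $C=\{z:|z-\lambda_0|=r\}$ is disjoint from $\sigma(H)\setminus\{\lambda_0\}$, $r<|\lambda_0|/9$, and (\ref{resonanceformula}) is valid for small $f>0$. I would also fix a weight $\epsilon$ with $\sqrt{8|\lambda_0|/9}<\epsilon<\sqrt{|\lambda_0|}$, so that $e^{\epsilon|x|}\psi_0\in L^2$ while $\epsilon$ still satisfies the hypothesis of Proposition \ref{intresolventconvergence}. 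For $\operatorname{Im}f>0$ and $|f|$ small, $(H_f-z)^{-1}$ is an honest meromorphic operator-valued function of $z$ (Proposition \ref{convergencefcomplex} and the remark after Lemma \ref{analyticcont}), so I set
\[
\Lambda(f)=\frac{\oint_{C}\bigl(\psi_0,z(H_f-z)^{-1}\psi_0\bigr)\,dz}{\oint_{C}\bigl(\psi_0,(H_f-z)^{-1}\psi_0\bigr)\,dz},
\]
the exact analogue of (\ref{resonanceformula}) with $(H_f-z)_c^{-1}$ replaced by $(H_f-z)^{-1}$.

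Next, for $\operatorname{Im}f>0$ with $|f|$ small, apply Proposition \ref{convergencefcomplex} with $z_0=\lambda_0$: the circle $C$ avoids the spectrum of $H_f$ and the Riesz projection $P(f)=-(2\pi i)^{-1}\oint_C(H_f-z)^{-1}\,dz$ converges in norm to $P_0=(\psi_0,\cdot\,)\psi_0$, which has rank one; since then $\|P(f)-P_0\|<1$, $P(f)$ also has rank one, so $H_f$ has exactly one eigenvalue inside $C$, it is algebraically simple, and $-(2\pi i)^{-1}\oint_C z(H_f-z)^{-1}\,dz=H_fP(f)$ equals that eigenvalue times $P(f)$. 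Pairing with $\psi_0$ and using $(\psi_0,P(f)\psi_0)\to(\psi_0,P_0\psi_0)=1\neq0$ shows that for small $|f|$ the quantity $\Lambda(f)$ equals this simple eigenvalue of $H_f$. Analyticity of $\Lambda$ on $\{\operatorname{Im}f>0,\ |f|<\delta\}$ follows because $(H_{0,f}-z)^{-1}$ is analytic in $f$ (Lemma \ref{analyticcont}), hence by (\ref{resolventeqn}) and the invertibility of $1+V_1(H_{0,f}-z)^{-1}V_2$ on $C$ so is $(\psi_0,(H_f-z)^{-1}\psi_0)$, uniformly for $z\in C$; the two contour integrals are therefore analytic in $f$, the denominator is bounded away from $0$, and $\Lambda$ is their analytic quotient.

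It remains to identify the boundary values. Fix $f_1>0$ small and let $f\to f_1$ with $\operatorname{Im}f\downarrow0$. Writing $(\psi_0,(H_f-z)^{-1}\psi_0)=\bigl(e^{\epsilon|x|}\psi_0,\ [\,e^{-\epsilon|x|}(H_f-z)^{-1}e^{-\epsilon|x|}\,]\,e^{\epsilon|x|}\psi_0\bigr)$, I would use Lemma \ref{analyticcont} (norm convergence $\zeta_1(H_{0,f}-z)^{-1}\zeta_2\to\zeta_1(H_{0,f_1}-z)_c^{-1}\zeta_2$ for $\zeta_j=e^{-\epsilon|x|}$, uniformly on $C$) together with Proposition \ref{intresolventconvergence} (which, for $f_1$ small, keeps $1+V_1(H_{0,f_1}-z)_c^{-1}V_2$ invertible on $C$) to conclude via (\ref{resolventeqn}) that $e^{-\epsilon|x|}(H_f-z)^{-1}e^{-\epsilon|x|}\to e^{-\epsilon|x|}(H_{f_1}-z)_c^{-1}e^{-\epsilon|x|}$ uniformly on $C$; pairing with $e^{\epsilon|x|}\psi_0\in L^2$ gives $(\psi_0,(H_f-z)^{-1}\psi_0)\to(\psi_0,(H_{f_1}-z)_c^{-1}\psi_0)$ uniformly on $C$, and likewise with the extra factor $z$. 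Passing to the limit in the contour integrals and comparing with (\ref{resonanceformula}) yields $\Lambda(f)\to\lambda(f_1)$. Thus $\lambda(\cdot)$ on $(0,\delta)$ is the boundary value from $\operatorname{Im}f>0$ of the analytic function $\Lambda$, and, by the previous paragraph, $\Lambda$ continued to $\operatorname{Im}f>0$ with $|f|$ small is a simple eigenvalue of $H_f$.

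\textbf{Main obstacle.} The delicate step is the matching at the boundary $\{f>0\}$: one must rule out a pole of $(H_{f_1}-z)_c^{-1}$ or of $(H_f-z)^{-1}$ sitting on $C$ for small $|f|$, and keep the denominator of $\Lambda$ bounded away from $0$ through the limit, so that no cancellation of a zero against a pole corrupts the convergence $\Lambda(f)\to\lambda(f_1)$. This is precisely what the uniform weighted-resolvent convergence of Lemma \ref{analyticcont} and the convergence of Riesz projections in Propositions \ref{convergencefcomplex} and \ref{intresolventconvergence} provide; granted these, the rest is bookkeeping with the holomorphic functional calculus.
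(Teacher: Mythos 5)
Your proposal is correct and follows exactly the route the paper intends: the corollary is stated as an immediate consequence of Proposition \ref{convergencefcomplex} and formula (\ref{resonanceformula}), and your argument simply fills in the details — defining $\Lambda(f)$ as the ratio of contour integrals of the honest resolvent for $\operatorname{Im}f>0$, identifying it with the simple eigenvalue via the rank-one Riesz projection, and matching the boundary value using the weighted-resolvent convergence of Lemma \ref{analyticcont}. Your choice of weight $\epsilon\in(\sqrt{8|\lambda_0|/9},\sqrt{|\lambda_0|})$ correctly reconciles the hypotheses of Proposition \ref{intresolventconvergence} with the requirement $e^{\epsilon|x|}\psi_0\in L^2$.
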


\begin{theorem}
Suppose $\lambda_0$ is a negative eigenvalue of $H = p^2 +V$ where $V$ is a real bounded measurable function of compact support.  Suppose $\Phi$ is the normalized eigenfunction corresponding to $\lambda_0$.  Then for small positive $f$ the Hamiltonian $H_f = H + fx$ has a resonance $\lambda(f)$ near $\lambda_0$ with real part given as in Theorem \ref{asymptotic expansion} and imaginary part 
$$-\text{Im}\lambda(f) =  \frac{1}{4\sqrt{-\lambda_0}} e^{-\frac{4}{3f}(-\lambda_0 -f\lambda_1)^{3/2}}(Ve^{-\sqrt{-\lambda_0}x},\Phi)^2(1+O(f))$$ where $\lambda_1 = (\Phi,x\Phi)$. 
We have $(Ve^{-\sqrt{-\lambda_0}x},\Phi) = -2\kappa_- \sqrt{-\lambda_0}$ where $\Phi(x) = \kappa_-e^{\sqrt{-\lambda_0} x}$ for $x$ near $-\infty$.
\end{theorem}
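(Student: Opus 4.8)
The strategy is to extract the imaginary part of $\lambda(f)$ from the resonance condition $G(\lambda(f),f)=0$ of Theorem~\ref{G}, using that the real part of $\lambda(f)$ is already pinned down to all orders by Theorem~\ref{asymptotic expansion}. Write $z=\lambda(f)$, $k=\sqrt z$ with $\arg k$ near $-\pi/2$ (since $\lambda_0<0$), so that $k = i\sqrt{-\lambda_0}+O(f)$ and $\operatorname{Re}z = \lambda_0 + f\lambda_1 + O(f^2)$ with $\lambda_1=(\Phi,x\Phi)$. The key structural point is formula
$$
G(z,f) = 1 + (e_2,(I-V_1(H_f-z)^{-1}V_2)e_1),
$$
together with the rank-one continuation formula $K_{f,c}=K_f+\frac{(I-K_f)|e_1\rangle\langle e_2|(I-K_f)}{G(z,f)}$: a resonance near $\lambda_0$ corresponds to a simple pole of $(H_f-z)_c^{-1}$, hence to a simple zero of $G(\cdot,f)$. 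I would first show that, in the disc $|z-\lambda_0|=r$ of Proposition~\ref{intresolventconvergence}, $G(z,f)$ is well approximated by $1 + (e_2, (I-V_1(H-z)^{-1}V_2)e_1)$ plus a controlled error, using the resolvent convergence $\|e^{-\epsilon|x|}((H_f-z)_c^{-1}-(H-z)^{-1})e^{-\epsilon|x|}\|\to0$ and the decay bound $\|g_{\epsilon,z}\|_2 \le C_\epsilon e^{-\epsilon\alpha/4f}$ from Lemma~\ref{jumpconvergence} to absorb the exponential weights in $e_1,e_2$.

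Next I would isolate the singular contribution. Since $(H-z)^{-1}$ has a simple pole at $\lambda_0$ with residue $-|\Phi\rangle\langle\Phi|$, the dominant term of $(e_2,(I-V_1(H-z)^{-1}V_2)e_1)$ near $z=\lambda_0$ is $\frac{(e_2,V_1\Phi)(\Phi,V_2 e_1)}{z-\lambda_0}$ (the $I$ and the regular part of the resolvent contribute boundedly). Using $e_1 = 2\pi i f^{-1/3}V_1 A_2$, $e_2 = V_2\overline{A_2}$ with $A_2(x)=\Ai(f^{1/3}(x-z/f))$, and the Airy asymptotics $(\ref{A_2})$ valid for $\arg k$ near $-\pi/2$, namely $A_2(x) = \frac{f^{1/6}}{2\sqrt{\pi k}} e^{i(\zeta_x-\pi/4)}(1+O(f))$ with $\zeta_x = \frac{2k^3}{3f}-kx$, one computes
$$
(\Phi, V_2 e_1) = (V\Phi, e^{-ikx})\,\frac{2\pi i f^{-1/3}\,f^{1/6}}{2\sqrt{\pi k}}\,e^{i(2k^3/3f-\pi/4)}(1+O(f))
$$
and similarly for $(e_2,V_1\Phi)$, where I use that on the support of $V$ one may replace $\overline{A_2}$ by the same asymptotic expression because the conjugation is compensated by the choice $e_2=\overline{g_2}$ in the definition preceding Theorem~\ref{G}. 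Multiplying the two factors produces $e^{4ik^3/3f}$ times $(V\Phi,e^{-ikx})^2$ times an explicit constant in $k$; setting $k\to i\sqrt{-\lambda_0}$ turns $e^{-ikx}$ into $e^{\sqrt{-\lambda_0}\,x}$ and $(V\Phi,e^{-ikx})^2$ into $(Ve^{-\sqrt{-\lambda_0}x},\Phi)^2$, while $e^{4ik^3/3f}$ becomes $e^{-\frac{4}{3f}(-\lambda_0)^{3/2}}$ to leading order. The phase $\frac{4ik^3}{3f}$ must be evaluated not at $\lambda_0$ but at $\operatorname{Re}\lambda(f) = \lambda_0 + f\lambda_1 + O(f^2)$, which is exactly where the exponent $-\frac{4}{3f}(-\lambda_0 - f\lambda_1)^{3/2}$ in the statement comes from; the $O(f^2)$ and higher corrections to the real part only affect the $(1+O(f))$ prefactor.

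Finally I would solve $G(z,f)=0$ perturbatively: writing $G(z,f) = 1 + \frac{c(f)\,e^{4ik^3/3f}}{z-\lambda_0}(1+O(f)) + (\text{bounded})$, the zero occurs at $z-\lambda_0 = -c(f)e^{4ik^3/3f}(1+O(f))$, an exponentially small quantity, so $\operatorname{Im}\lambda(f) = -\operatorname{Im}\big(c(f)e^{4ik^3/3f}\big)(1+O(f))$. Tracking the phases: the constant from two factors of $\frac{2\pi i f^{-1/3}f^{1/6}}{2\sqrt{\pi k}}e^{-i\pi/4}$ combined with the residue sign $-1$ and $2\pi i$ produces, after setting $k = i\sqrt{-\lambda_0}$ so $\sqrt k$ contributes the right power, precisely the real positive coefficient $\frac{1}{4\sqrt{-\lambda_0}}$ and makes the whole exponentially small term negative imaginary; hence $-\operatorname{Im}\lambda(f) = \frac{1}{4\sqrt{-\lambda_0}} e^{-\frac{4}{3f}(-\lambda_0-f\lambda_1)^{3/2}}(Ve^{-\sqrt{-\lambda_0}x},\Phi)^2(1+O(f))$. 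The last identity $(Ve^{-\sqrt{-\lambda_0}x},\Phi) = -2\kappa_-\sqrt{-\lambda_0}$ follows by writing $(Ve^{-\sqrt{-\lambda_0}x},\Phi) = ((H-\lambda_0)e^{-\sqrt{-\lambda_0}x},\Phi)$ away from $\pm\infty$ — more carefully, by integrating $-\Phi'' + V\Phi = \lambda_0\Phi$ against $e^{-\sqrt{-\lambda_0}x}$ over $\mathbb{R}$ and using $\Phi(x)=\kappa_-e^{\sqrt{-\lambda_0}x}$ near $-\infty$, $\Phi$ decaying near $+\infty$, so that all boundary terms at $+\infty$ vanish and the Wronskian boundary terms at $-\infty$ collapse to $-2\kappa_-\sqrt{-\lambda_0}$.

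**Main obstacle.** The delicate point is not the existence of the zero (that is Theorem~\ref{asymptotic expansion} plus a Rouché argument) but getting the constant $\frac{1}{4\sqrt{-\lambda_0}}$ and the sign of the imaginary part exactly right: this requires careful bookkeeping of (a) the factors $2\pi i$, $f^{-1/3}$, $f^{1/6}$, $(2\sqrt{\pi k})^{-1}$, $e^{-i\pi/4}$ appearing twice in $e_1$ and $e_2$, (b) the residue $-|\Phi\rangle\langle\Phi|$ of $(H-z)^{-1}$ at $\lambda_0$, (c) the branch of $\sqrt k$ when $k$ is near $i\sqrt{-\lambda_0}$, and (d) checking that the $O(f)$ error terms from Lemma~\ref{rdiff}-type estimates and from the subleading Airy asymptotics are genuinely a \emph{relative} $O(f)$ multiplying the exponentially small main term rather than an additive error that could swamp it — this last is why one needs the weighted convergence of Proposition~\ref{intresolventconvergence} with a specific exponential rate rather than a crude norm bound.
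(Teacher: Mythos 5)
Your overall strategy---extracting $\text{Im}\,\lambda(f)$ directly from the scalar resonance condition $G(z,f)=0$ of Theorem \ref{G}---differs from the paper's, which follows Howland: the paper splits the continued Birman--Schwinger operator into $L=(Q^++Q^-)/2$ and the rank-one jump $D=(Q^+-Q^-)/2$, proves that $I+L(f,z)$ is singular at a genuinely \emph{real} point $\mu(f)$ via the conjugacy $g^{-1}Lg=L^*$, and then obtains $\text{Im}\,\lambda(f)$ as the first-order perturbation $-(\phi_f,D\psi_f)/(\phi_f,L'\psi_f)$. Your route, as written, has a structural error at its central step. For $\text{Im}\,z<0$ the operator $V_1(H_f-z)^{-1}V_2$ appearing in $G$ is the boundary value from \emph{below} of the resolvent of the self-adjoint operator $H_f$; by reflection its continuation has its simple pole at $\overline{\lambda(f)}$, i.e.\ at $\lambda_0+f\lambda_1+O(f^2)$ plus an exponentially small positive imaginary part, \emph{not} at $\lambda_0$. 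Proposition \ref{intresolventconvergence} gives convergence to $(H-z)^{-1}$ only on the circle $|z-\lambda_0|=r$, not inside the disc, and the replacement fails badly at distance $O(f)$ from $\lambda_0$. Indeed, if it were valid, then---since $e_1$ and $e_2$ each contribute a factor $e^{2ik^3/3f}$, which is exponentially small for $k$ near $-i\sqrt{-\lambda_0}$---every term of $G-1$ would be exponentially small and $G$ would have no zero near $\lambda_0$ at all. The resonance exists precisely because $(e_2,V_1(H_f-z)^{-1}V_2e_1)$ becomes $O(1)$ when $z$ approaches the near-pole at $\overline{\lambda(f)}$ to within an exponentially small distance. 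Your conclusion that ``the zero occurs at $z-\lambda_0=-c(f)e^{4ik^3/3f}(1+O(f))$, an exponentially small quantity'' already contradicts Theorem \ref{asymptotic expansion}, which gives $\text{Re}\,\lambda(f)-\lambda_0=f\lambda_1+O(f^2)$ with $\lambda_1=(\Phi,x\Phi)$ generically nonzero.

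The missing idea is a mechanism that pins down a genuinely real reference point to \emph{exponential} accuracy, against which the exponentially small displacement can be measured. Knowing the Rayleigh--Schr\"odinger expansion of $\text{Re}\,\lambda(f)$ to all orders is not enough, since an asymptotic series leaves the real part undetermined up to errors far larger than $e^{-4(-\lambda_0)^{3/2}/3f}$. This is exactly what the paper's symmetrization supplies: $\mu(f)$ is proved real exactly, and $\lambda(f)-\mu(f)$ is computed from the rank-one jump $D$, whose kernel carries the two Airy factors producing $e^{-4(-\lambda_0-f\lambda_1)^{3/2}/3f}$. If you wish to stay with $G$, you would need to write $G(z,f)=1+r(f)/(z-\overline{\lambda(f)})+h(z,f)$ with $h$ controlled, deduce $2i\,\text{Im}\,\lambda(f)=-r(f)/(1+h)$, and compute the residue $r(f)$---which again requires the reflection/reality structure. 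Two smaller points: with the paper's convention $\arg k\in[-\pi/2,0]$ the relevant branch is $k\to-i\sqrt{-\lambda_0}$ (your choice $k=i\sqrt{-\lambda_0}$ would make $e^{4ik^3/3f}$ exponentially \emph{large} and gives the wrong exponential weight $e^{+\sqrt{-\lambda_0}x}$ in the matrix element); on the other hand, your closing integration-by-parts computation of $(Ve^{-\sqrt{-\lambda_0}x},\Phi)=-2\kappa_-\sqrt{-\lambda_0}$ is correct.
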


\begin{proof}
We follow Howland in  \cite {HJ} where he computes an exponentially small imaginary part of a resonance caused by a barrier which is becoming infinite in extent.  Our situation is similar but a bit more complicated.  Nevertheless much of the analysis below is lifted from Howland's paper.  Let $Q^+(f,z) = V_1(H_{f,0} -z)^{-1}_cV_2$  where the subscript $c$ and the superscript $+$ indicate analytic continuation from the upper half plane.  Similarly a $-$ superscript will indicate analytic continuation from the lower half plane.  We let $L(f,z) = (Q^+(f,z) + Q^-(f,z))/2$ and $D(f,z) =  (Q^+(f,z) - Q^-(f,z))/2$.  It follows from (\ref{K_c}) that $D(f,z)$ is a rank one operator with kernel 
\begin{align} \label{rank1}
&D(f,z)(x,y) = \pi i f^{-1/3} g_1(x;f,z)g_2(y;f,z) \  \text{with} \nonumber  \\
& g_1(x;f,z) = V_1(x)\Ai(f^{1/3}(x-z/f)) \ \text{and}  \nonumber \\
& g_2(x;f,z) = V_2(x)\Ai(f^{1/3}(x-z/f)).  
\end{align}
We saw that  $I + Q^+(f,z )$ has a one dimensional kernel at the resonance $ z = \lambda(f)$ converging to $\lambda_0$ as $f \downarrow 0$. Similarly if we consider $I + L(f,z)$ the same proof shows that this operator has a simple eigenvalue $-1$ at $z = \mu(f)$ with $\mu(f)$ converging to $\lambda_0$ as $f \downarrow 0$.  We assume that $|V_1| + |V_2| =0$ exactly where $V=0$ and $V_1/V_2$ is bounded above and below on the set where $|V| >0$.  We set $g=V_1/V_2 $ on the set $\{|V| > 0\}$ and $=1$ on the complement. It follows that $g^{-1}L(f,\mu(f))g = L(f, \overline {\mu(f)})^*$.  Since $\sigma(L(f,\overline {\mu(f)})) =\overline{ \sigma(L(f,\overline {\mu(f)})^*) }$, $L(f,z)$ has eigenvalue $-1$ for both $z = \mu(f)$ and $z = \overline {\mu(f)}$.  But there is only one such $z$ near $\lambda_0$ for small $f > 0$ and thus $\mu(f)$ is real.  Let $J(f) = (2\pi i)^{-1}\int _{|z+1| = \epsilon}(z-L(f,\mu(f))^*)^{-1}dz$ for small $\epsilon > 0$.  Since in the limit $f\downarrow 0$, $J(f) \to J(0)$ which projects onto $V_2\Phi$, where $(H-\lambda_0)\Phi =0$, $\phi_f = J(f)V_2\Phi$ satisfies $(L(f,\mu(f))^* + I)\phi_f =0$ for small $f>0$.  Similarly define $\tilde J(f)  = (2\pi i)^{-1}\int _{|z+1| = \epsilon}(z-Q^+(f,\lambda(f)))^{-1}dz$ for small $\epsilon > 0$.  Then  $\psi_f =\tilde J(f) V_1\Phi$ is in the kernel of $I + Q^+(f,\lambda(f)$ for small $f >0$.

\begin{align*}
0&= (\phi_f, (I + Q^+(f,\lambda(f))\psi_f)  \\
& = (\phi_f,(I + L(f,\mu(f)) + L'(f,\mu(f))(\lambda(f) - \mu(f)) + D(f,\lambda(f)) )\psi_f)  + O(|\lambda(f) - \mu(f)|^2) \\
& = (\phi_f, L'(f,\mu(f)) \psi_f) (\lambda(f) - \mu(f)) + (\phi_f,D(f,\lambda(f))\psi_f) + O(|\lambda(f) - \mu(f)|^2) 
\end{align*}

Here we have used the fact that  $L(f,z)$ is analytic in $z$ with derivatives bounded for $z$ in a neighborhood of $\lambda_0$ as $f \downarrow 0$.  We have $(\phi_f, L'(f,\mu(f))\psi_f) \to (V_2\Phi, V_1(p^2 - \lambda_0)^{-2}V_2 V_1\Phi) = ||\Phi||^2$ and thus

$$\lambda(f) - \mu(f) = - (\phi_f,D(f,\lambda(f))\psi_f)/(\phi_f, L'(f,\mu(f)) \psi_f) + O(|\lambda(f) - \mu(f)|^2) $$ 

Consider $D(f,\lambda(f))$.   We have $$f^{-1/6} g_j(x) =  V_j(x)\frac {1}{2\sqrt{\pi}(-\lambda(f))^{1/4}}e^{-\frac{2 (-\lambda(f))^{3/2}}{3f}}e^{-x\sqrt{-\lambda(f)}} (1+ O(f)) $$ so
\begin{align*}
&(\phi_f,D(f,\lambda(f))\psi_f) =  \\
&(i/4\sqrt{-\lambda(f)})e^{-\frac{4 (-\lambda(f))^{3/2}}{3f}} \int \overline {\phi_f(x)} V_1(x) e^{-\sqrt{-\lambda(f)} x}(1+O(f))  e^{-\sqrt{-\lambda(f)} y } V_2(y)\psi_f (y)dxdy
\end{align*}
From above we have $\text{Im}\lambda(f) =O(f^n)$ for all $n$ and $\lambda(f)$ has an asymptotic expansion in $f$ given by the Rayleigh-Schr\"odinger series.  Thus $\lambda(f) = \lambda_0 + f (\Phi,x\Phi) + O(f^2)$ (here we normalize $||\Phi|| = 1$.)
Thus $$ \lambda(f) - \mu(f) = \frac{-i}{4\sqrt{-\lambda_0}} e^{-\frac{4 (-\lambda(f))^{3/2}}{3f}} \Big ( (Ve^{-\sqrt{-\lambda_0}x},\Phi)^2 + O(f) \Big )$$
We can compute $(Ve^{-\sqrt{-\lambda_0}x},\Phi) = - \int [(p^2 - \lambda_0)\Phi(x)]e^{-\sqrt{-\lambda_0}x} dx$ using integration by parts.  We find  $(Ve^{-\sqrt{-\lambda_0}x},\Phi) = -2\kappa_-\sqrt{-\lambda_0}$,where $\Phi(x) = \kappa_- e^{\sqrt{-\lambda_0} x}$ for $x$ near $-\infty$.
Then  $$ \lambda(f) - \mu(f) = -i\sqrt{-\lambda_0} \kappa_-^2e^{-\frac{4 (-\lambda_0 - f(\Phi,x\Phi))^{3/2}}{3f}}(1 + O(f)).$$
Since $\mu(f)$ is real we learn that 
\begin{align}
&\text{Im}\lambda(f) = -\sqrt{-\lambda_0} \kappa_-^2e^{-\frac{4 (-\lambda_0 - f(\Phi,x\Phi))^{3/2}}{3f}}(1 + O(f)). \\  \nonumber
\end{align}
\end{proof}
Note that in the above proof $\mu(f) = \text{Re}\lambda(f)$ up to a function with zero asymptotic expansion.

\section{Resonances near the positive real axis}\label{pra}
Let $$A_1(x) = \Ai(e^{2\pi i/3}f^{1/3}(x-z/f))$$
$$A_2(x) = \Ai(f^{1/3}(x-z/f))$$
For fixed $f>0$ and $\text{Im}z>0$, $A_1(x) \in L^2(-\infty, 0)$ while if $\text{Im}z<0$ $A_1(x) \notin L^2(-\infty,0)$. On the other hand, for fixed $f>0$,  $A_2(x) \in L^2(0,\infty)$ for all $z$.  
We know that if $z$ is a resonance then there is a solution of Schr\"odinger's equation $- \psi'' + (V(x) +fx) \psi = z\psi$ which has the form $cA_2(x)$ for large positive $x$ and $c'A_1(x)$ for large negative $x$.  We note the following obvious but important fact: If we are given $(\psi (a), \psi'(a)) = (\alpha,\beta)$ , then for any $b \in \mathbb{R} $, both $\psi(b)$ and $\psi'(b)$ are entire functions of the variables $(z,f,\alpha,\beta)$.  

Suppose $z$ is near the positive real axis and the support of the potential $V$ is contained in $(-L,L)$.  Choosing $\psi(-L) = 1$ and $\psi'(-L) = A_1'(-L)/A_1(-L)$ we note that $\psi(L)$ and $\psi'(L)$ are analytic functions of $f,z,$ and $\psi'(-L)$.  It is easily seen from \cite{AS} that for $\arg z \in (-2\pi/3 + \delta, 0]$ and $k = \sqrt{z}$ with $\arg(k)\in(-\pi/2,0]$.
$$A_1'(-L)/A_1(-L) =: -i\lambda(k,f) = -ik(1+O(f))$$
uniformly for $|k| > \delta > 0$.  On the other hand from \cite{AS}
\begin{align} 
A_2'(L) = -\pi^{-1/2} f^{1/6} \sqrt{k}(1+O(f)) \Big((1+O(f^2))\cos(\zeta + \pi/4) + O(f)\sin(\zeta + \pi/4)\Big) \label{A'2}\\
A_2(L) = \pi^{-1/2} f^{1/6} k^{-1/2}(1+O(f))\Big((1+O(f^2))\sin(\zeta + \pi/4) + O(f)\cos(\zeta + \pi/4)\Big) \label{A2} 
\end{align}
with $$\zeta = \frac{2k^3}{3f}(1-fL/k^2)^{3/2}$$
To get an idea what we are dealing with we look at the leading order as $f\downarrow 0$ of $(\psi(L), \psi'(L))$ which arises by propagating from $-L$ to $L$ with $(\psi(-L), \psi'(-L)) = (1, -ik)$
and $f=0$.  Thus we are solving the Schr\"odinger equation $-\psi'' + V(x)\psi = k^2\psi$ with $\psi(x) = e^{-ikx}$ for $x$ to the left of the support of $V$.  Then to this leading order $\psi'(L)/\psi(L) = g(k)$ is analytic in $k$ unless $\psi(L) = 0$.  To leading order in the sense that we neglect $O(f)$ terms in (\ref{A'2}) and (\ref{A2}) we obtain
$$A_2'(L)/A_2(L)= -k /\tan(\zeta + \pi/4),$$ and the equation $A_2'(L)/A_2(L)=\psi'(L)/\psi(L)$ which holds when $k$ is a resonance takes the form

$$e^{-2i\zeta} =i\frac{\psi'(L) + ik \psi(L)}{\psi'(L) - ik\psi(L)}.$$

Let us use unitarity of the S-matrix to get a relation between $\psi'(L)$ and $\psi(L)$ for real $k$ given the initial $(\psi(-L), \psi'(-L)) = (1, -ik)$.  Here we assume $f=0$.  We have $\psi(x) = e^{-ikx}$ for $x$ to the left of $V$ and $c_1e^{ikx} + c_2e^{-ikx}$ to the right of $V$.  Unitarity gives $|c_2|^2 = 1 + |c_1|^2$.  (Notice that $c_1 = r(-k)/t(-k), c_2 = 1/t(-k)$ where $t(k)$ is the transmission amplitude for a particle of momentum $k$  and $r(k)$ is the reflection amplitude for this momentum.)  We compute $$-\frac{c_1e^{ikL}}{c_2e^{-ikL}} = \frac{\psi'(L) +ik\psi(L)}{\psi'(L)-ik\psi(L)}= -r(-k)e^{2ikL}.$$  Here $r(-k)$ is the analytic continuation of the reflection amplitude from $k>0$ to $\text{Im}k <0$. It follows that 

$$\left|\frac{\psi'(L) +ik\psi(L)}{\psi'(L)-ik\psi(L)}\right|^2 = \frac{|c_1|^2}{1 + |c_1|^2} < 1$$ 
for $k$ real.  And we see that for real $k$, $\frac{\psi'(L) +ik\psi(L)}{\psi'(L)-ik\psi(L)} = 0$ exactly when the (right) reflection coefficient, $r(-k)$, is zero. 

Define 
      \[S(k) =           \begin{pmatrix} t(k) & r(-k) \\ 
                                         r(k) & t(-k) 
                         \end{pmatrix}\] 
where  the transmission and reflection amplitudes are given by          
\begin{align}
&t(k) = 1-2\pi i |2k|^{-1} \langle k|T(k^2+i0)|k\rangle \\
&r(k) = -2\pi i |2k|^{-1}  \langle -k|T(k^2+i0)|k\rangle.
\end{align}
Then for real $k$ these quantities satisfy the usual unitarity relations
$$|r(k)|^2 + |t(k)|^2 = 1$$
$$\overline{r(-k)}t(k) + r(k)\overline{t(-k)} = 0$$
or in matrix form

   $$S(k)^*S(k) = I$$
 This is a consequence of the unitarity of the $S-\text{matrix}$ as given in (\ref{STdef}).  Note that it follows from the symmetry of the resolvent kernel $(H-z)^{-1}(x,y)$ that $t(k) = t(-k)$ and thus from the unitarity relations that $|r(k)| = |r(-k)|$.

Going back to $f>0$, we see that the functions $\psi'(L) \pm ik \psi(L)$ are of the form $$h_{\pm}(k,f,\lambda(k,f))$$ where the  functions $h_{\pm}$ are analytic in their arguments. If $\arg k \in (-\pi/3 + \delta, 0)$ we have $$\psi'(L) \pm ik \psi(L) = h_{\pm}(k,0,k) + O(f).$$    
Consider a real point $k_0> 0$ for which the (right) reflection coefficient for scattering in potential $V$ is non-zero.  In a neighborhood of $k_0$, say $|k-k_0| \le  \epsilon$,
$$\frac{\psi'(L) +ik\psi(L)}{\psi'(L)-ik\psi(L)} = G(k)e^{2ikL}(1+b(k,f))$$ where $G$ is analytic in this neighborhood and non-zero while for real $k$, $|G(k)| < 1$. The function $b(k,f)$ is $C^1$ in $k$ with $\partial b(k,f)/\partial k$ bounded for $\delta^{-1} > |k| > \delta$ ($\delta > 0$), $k$ near the positive real axis, and $f>0$ small. (See Appendix \ref{AppendixAiry2}).  It follows that  a resonance $k$ of $H_f$ in this neighborhood obeys 

\begin{equation} \label{errornopole}
G(k) = -ie^{-4ik^3/3f}(1+a(k,f)).
\end{equation}
where we have cancelled out the factor $e^{2ikL}$ using  
$2\zeta = \frac{4k^3}{3f}(1-fL/k^2)^{3/2} = \frac{4k^3}{3f} - 2Lk +O(f)$.  It follows that for some integer $j$ and some branch of the logarithm

$$ k^3 = (3if/4)\log(iG(k)) + 3\pi fj/2   -(3fi/4)\log(1+a(k,f)) = (3if/4)\log(iG(k)) + 3\pi fj/2 + fc(k,f)$$ 
where $c(k,f) = O(f)$.  Without loss of generality we choose a continuous branch of the logarithm with $\arg( \log iG(k_0)) \in (-\pi,\pi]$.  For $f$ small we will choose $j$ very large so that $$\eta_0(j): = (3\pi f j/2)^{1/3}$$ is not far from $k_0$.  We require $|\eta_0(j) - k_0| < \epsilon < k_0/2$.
We thus obtain $$ k^3 - \eta_0^3 = (3if/4)\log(iG(k)) + fc(k,f)$$ 
or 
$$k = \eta_0\Big( 1 + (3if/4\eta_0^3)\log(iG(k) + \eta_0^{-3}fc(k,f)\Big)^{1/3}$$
A simple contraction mapping argument shows that for $|k-k_0| < \epsilon$ with $\epsilon$ sufficiently small, there is a unique solution to this equation which we call $k(j)$.  This string of resonances  satisfies 
$$k(j) = \eta_0(j) + i(f/4\eta_0(j)^2)\log G(i\eta_0(j)) + O(f^2).$$

We remark on the nature of the solution:
Notice that there are order of magnitude $f^{-1}$ integers $j$ with $|\eta_0(j) - k_0| < \epsilon$. (More exactly since $\Delta \eta_0 = \epsilon$,  to order $\epsilon^2$, $\Delta \eta_0(j)^3 = 3\eta_0^2 \epsilon$, so there are $\Delta j = (2\eta_0^2/\pi f)\epsilon = (2k_0^2/\pi f)\epsilon $ such integers $j$ to order $\epsilon^2$.) Write $$\log (iG(k)) = -l(k) + i\theta(k)$$  where for $\epsilon$ small enough, $l(k) > 0$. Then we get $\text{Im}k(j) = -fl(\eta_0(j))/4\eta_0(j)^2 + O(f^2)$ and  $\text{Re}k(j) = \eta_0(j)  - \theta (\eta_0(j))/4\eta_0(j)^2)f + O(f^2)$. We have thus shown

\begin{theorem}
Suppose the reflection coefficient is non- zero at $k_0>0$.  Then if $\epsilon > 0$ is small enough, $H_f$ has resonances in the disk $|k-k_0| < \epsilon$.  If $k$ is a such a resonance there exists an integer $j$ such that $k$  is one of the resonances $k(j)$ found above which in particular satisfy
$$ k(j) = \eta_0(j) - 4^{-1}\Big(\theta(\eta_0(j))/\eta_0(j)^2 + il(\eta_0(j))/\eta_0(j)^2\Big)f + O(f^2).$$ Here $l(k) >0$ for $k$ near $k_0$ and $j$ is allowed to vary in an interval so that $|k-k_0|< \epsilon$. The linear density of resonances along the positive real axis near a point $k_0$ where the reflection coefficient is non-zero is to leading order $2k_0^2/\pi f$.

\end{theorem}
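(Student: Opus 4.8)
The plan is to convert the resonance condition into a single scalar equation in $k$ and solve it by a contraction mapping, following the computation sketched above. First I would recall the matching characterization of resonances: $k$ is a resonance of $H_f$ precisely when the solution $\psi$ of $-\psi'' + (V + fx)\psi = k^2\psi$ with Cauchy data $(\psi(-L),\psi'(-L)) = (1,\, A_1'(-L)/A_1(-L))$ has the same logarithmic derivative at $x = L$ as the recessive Airy solution, i.e.\ $A_2'(L)/A_2(L) = \psi'(L)/\psi(L)$. Using $A_1'(-L)/A_1(-L) = -ik(1+O(f))$ together with the Airy asymptotics (\ref{A'2})--(\ref{A2}), this becomes
$$e^{-2i\zeta} = i\,\frac{\psi'(L) + ik\psi(L)}{\psi'(L) - ik\psi(L)}\,(1 + O(f)), \qquad \zeta = \frac{2k^3}{3f}\Big(1 - \frac{fL}{k^2}\Big)^{3/2} = \frac{2k^3}{3f} - Lk + O(f).$$
The structural input I would invoke next — obtained at $f = 0$ from the unitarity discussion above and refined uniformly in $f$ in Appendix \ref{AppendixAiry2} — is that for $k$ in a small disk about $k_0$ and $f>0$ small,
$$\frac{\psi'(L) + ik\psi(L)}{\psi'(L) - ik\psi(L)} = G(k)\,e^{2ikL}\,(1 + b(k,f)),$$
where $G$ is analytic and non-vanishing near $k_0$ with $G(k)|_{f=0} = -r(-k)$, so that $|G(k)| < 1$ for real $k$ near $k_0$ exactly because the reflection coefficient is non-zero there, and $b(\cdot,f)$ is $C^1$ with $\partial_k b$ bounded uniformly as $f \downarrow 0$. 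In particular $\psi'(L) - ik\psi(L)$ does not vanish near $k_0$, so the division is legitimate.

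Given this, I would cancel $e^{2ikL}$ against the $-Lk$ in $2\zeta$ and rewrite the resonance condition in the form (\ref{errornopole}), $G(k) = -i\,e^{-4ik^3/3f}(1 + a(k,f))$ with $a(k,f) = O(f)$ and $C^1$ in $k$ with bounded $k$-derivative. Passing to a continuous branch of the logarithm near $k_0$, this is equivalent to
$$k^3 = \frac{3if}{4}\log(iG(k)) + \frac{3\pi fj}{2} + f c(k,f), \qquad c(k,f) = O(f),$$
for some integer $j$. For each large $j$ with $\eta_0(j) := (3\pi fj/2)^{1/3}$ within $\epsilon$ of $k_0$ I would divide by $\eta_0(j)^3$ and take a cube root, turning the equation into a fixed-point problem $k = \Phi_j(k)$ on the closed disk $|k - k_0| \le \epsilon$ with
$$\Phi_j(k) = \eta_0(j)\Big(1 + \frac{3if}{4\eta_0(j)^3}\log(iG(k)) + \eta_0(j)^{-3} f c(k,f)\Big)^{1/3}.$$
Since the bracketed term and its $k$-derivative are $O(f)$ uniformly, $\Phi_j$ maps the $\epsilon$-disk into itself and is a contraction there for $f$ small (with $\epsilon$ fixed small), yielding a unique fixed point $k(j)$; expanding the cube root to first order in $f$ gives $k(j) = \eta_0(j) + (3if/4\eta_0(j)^2)\log(iG(k_0)) + O(f^2)$, which rearranges, writing $\log(iG(k)) = -l(k) + i\theta(k)$ with $l > 0$ near $k_0$, into the stated asymptotic formula.

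For the converse I would argue that any resonance $k$ with $|k - k_0| < \epsilon$ satisfies the same logarithmic equation for some integer $j$; since $|k^3 - 3\pi fj/2| = O(f)$ forces $|\eta_0(j) - k_0| < \epsilon$ for small $f$, uniqueness of the fixed point of $\Phi_j$ identifies $k = k(j)$. Finally, counting the admissible $j$ via $\eta_0(j)^3 = 3\pi fj/2$: the step in $\eta_0^3$ is $3\pi f/2$, so over an interval of length $\epsilon$ in $\eta_0$ (near $k_0$) there are $3\eta_0^2\epsilon/(3\pi f/2) = (2k_0^2/\pi f)\epsilon + o(\epsilon/f)$ such integers, giving linear density $2k_0^2/\pi f$ of resonances near $k_0$. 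The main obstacle is the second step — establishing that $b(k,f)$ is not merely $O(f)$ pointwise but $C^1$ in $k$ with $k$-derivative uniformly $O(1)$ (indeed $O(f)$ after extracting $G e^{2ikL}$), since exactly this uniform $C^1$ control makes the contraction argument produce a unique, well-localized root; it requires differentiating the Airy asymptotics (\ref{A'2})--(\ref{A2}) in $k$ and propagating the $f$-dependence through the flow of the Schr\"odinger equation from $-L$ to $L$, which is the analysis deferred to Appendix \ref{AppendixAiry2}.
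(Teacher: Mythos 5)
Your proposal follows the paper's argument essentially verbatim: the same matching condition at $\pm L$, the same reduction via the Airy asymptotics and unitarity to $G(k) = -ie^{-4ik^3/3f}(1+a(k,f))$, the same logarithmic rewriting with $\eta_0(j)=(3\pi fj/2)^{1/3}$ and cube-root fixed-point map solved by contraction, and the same counting of admissible $j$ for the density $2k_0^2/\pi f$. You also correctly isolate the one genuinely technical ingredient — the uniform $C^1$ control in $k$ of the error terms, which the paper likewise defers to Appendix \ref{AppendixAiry2} — so the proposal is correct and coincides with the paper's proof.
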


We now consider how to calculate resonances in the neighborhood of a point $k_0 > 0$ where the reflection coefficient vanishes.  The functions $h_{\pm} = h_{\pm}(k,f,\lambda)$ are analytic functions of their three arguments in a neighborhood of $(k_0,0,k_0)$ but $\lambda(k,f)$ is not analytic in $f$, rather analytic in a product set of the form $\{k \in \mathbb{C}: |k-k_0| < \epsilon \} \times \{f: 0<|f| < \epsilon, |\arg f| < \epsilon\}$ and $C^{\infty}$ in $\{k \in \mathbb{C}: |k-k_0| < \epsilon \} \times \{f: |f| < \epsilon, |\arg f| < \epsilon\}$ in the sense that the derivatives are continuous in $f$ up to $f=0$ (see  Appendix \ref{AppendixAiry2}).  Let $\mu(k,f) = \lambda(k,f) - k$ (thus $\mu(k,0) = 0$). Let us write 
$$\frac{\psi'(L)}{-ik\psi(L)} = \frac{A_2'(L)}{-ikA(L)} = \frac{1+ a(k,f)) e^{2i(\zeta+\pi/4)} + 1+b(k,f)}{(1+c(k,f) e^{2i(\zeta +\pi/4)}- (1+d(k,f))}$$
where according to Appendix \ref{AppendixAiry2} the functions $a,b,c,d$ are $C^{\infty}$ in $f$ for $f\ge 0$ and small and analytic in $k$ for $k$ near a point on the positive real axis.  In addition they are all $0$ when $f = 0$. Inverting the linear fractional transformation we have 

$$ie^{2i\zeta} = \frac{\psi'(L) -ik\psi(L) + d\psi'(L) - ik b\psi(L)}{\psi'(L) + ik \psi(L) +c\psi'(L) + ik a\psi(L)}$$
We are interested in resonances near a point $k_0 > 0$ where the reflection coefficient vanishes.  This means $\psi'(L) + ik \psi(L) = 0$ when $f =0$.  We have $\psi'(L) + ik \psi(L) = h_+(k,f,\lambda(k,f))$ with $h_+$ analytic in its three arguments and 0 at $(k_0,0, \lambda(k_0,0)) $.  The quantity $\tilde{ a}: = c\psi'(L) + ik a\psi(L)$ is $C^\infty$ in small $f\ge 0$ and analytic in $k$ near $k_0$.  Of course $a(k,0) = 0$.  Thus the denominator is an analytic function of $k,f, \mu,$ and $\tilde{a}$ with a zero at $(k_0,0,0,0)$.  We can thus use the Weierstrass preparation theorem to write 

$$e^{2i\zeta} = g(k,f)/\text{prep}(k,f)$$
where 
$$\text{prep}(k,f) = (k-k_0)^p + b_{p-1}(f,\mu,\tilde a)(k-k_0)^{p-1} + \cdots + b_0(f,\mu,\tilde a)$$
Here $p$ is the order of the zero of $h_+(k,0,k)$. The function $g$ is analytic in $k$ near $k_0$ and $C^\infty$ in small $f\ge0$. We write $b_s(f,\mu,\tilde a) = b_s(f,k)$.  $b_s(f,k)$ is analytic in $k$ near $k_0$ and $C^\infty$ in small $f\ge 0$.  We have $b_s(0,k) = 0$.  Defining $m(k,f) = g(k,f)e^{2ik^3/3f[(1-fL/k^2)^{3/2}-1]}$, we need to solve

\begin{equation} \label{peqn}
e^{4ik^3/3f} = \frac{m(k,f)}{\text{prep}(k,f)}.
\end{equation}
We proceed by iteration.  Define $k_1(j)$ and $\lambda_f(j)$ by the equations
\begin{equation} \label{k_1}
k_1(j) = \lambda_f(j) + k_0 = (3\pi f j/2)^{1//3}. 
\end{equation}
Here $j$ is chosen very large for $f$ small so that $k_0/2 > \delta_0 > |\lambda_f| \ge 2f^{(1-\epsilon)/p}$ where $\epsilon \in (0,1)$ and $\delta_0$ is small. With this lower bound we see that for small $f$, $|\text{prep}(k_1,f)| \ge 2f^{1-\epsilon} + O(f)$.  The equation we want to solve is 
$$k^3 = 3\pi f j/2 -(3if/4)(\log m(k,f) - \log \text{prep}(k,f)).$$ 
We set 
$$k_n^3 = k_1^3 - (3if/4)(\log m(k_{n-1},f) - \log \text{prep}(k_{n-1},f)) ; n \ge 2 $$
where we take the cube root closest to the positive real axis. Let us assume $|k-k_0| \ge f^{(1-\epsilon)/p}$ and estimate $\text{prep}(k,f)$ and its derivative with respect to $k$.  We have 
$$\text{prep}(k,f) = (k-k_0)^p + \sum_{j=1}^p\int_0^f (\partial{b}_{p-j}(s,k)/\partial s) (k-k_0)^{p-j} ds.$$  Thus
$$|\text{prep}(k,f)| \ge |(k-k_0)^p|( 1 - \sum_{j=1}^p c_jf|k-k_0|^{-j})\ge |k-k_0|^p(1-cf^{\epsilon}).$$ 

\vspace{.5cm}

$$\partial\text{prep}(k,f)/\partial k = p(k-k_0)^{p-1} + \sum_{j=1}^p\int_0^f (\partial^2{b}_{p-j}(s,k)/\partial s \partial k) (k-k_0)^{p-j}ds $$ $$ + \sum_{j=1}^{p-1}\int_0^f(\partial {b}_{p-j}(s,k)/\partial s) (p-j)(k-k_0)^{p-j-1} ds  $$

$$|\partial\text{prep}(k,f)/\partial k| \le p|k-k_0|^{p-1}[ 1+ c\sum_{j=1}^p f^{-(j(1-\epsilon)/p)} + c\sum_{j=1}^{p-1}f f^{((1-\epsilon)/p)(-j+1)} $$
$$\le p|k-k_0|^{p-1}[1+ cf^{\epsilon +(1-\epsilon)/p}(1+ O(f^{(1-\epsilon)/p})].$$
It follows that 

$$|f\partial  \log \text{prep}(k,f)/\partial k| \le pf|k-k_0|^{-1}(1+ cf^{\epsilon}).$$
We easily find $|k_2^3 -k_1^3| \le (3f/4)(C + \log1/f) $ so that $k_2 = k_1(1+(k_2^3 - k_1^3)/k_1^3)^{1/3}$.  Thus
$$|k_2-k_1| \le c'(f\log1/f)(k_0/2)^{-2} = Cf\log1/f.$$
Let $G(k,f) = \log m(k,f) - \log \text{prep}(k,f)$.  We have $$k_n^3 - k_{n-1}^3 = (3if/4)(k_{n-1} - k_{n-2})\int_0^1 \frac{\partial G}{\partial k} (k_{n-2} + t(k_{n-1} - k_{n-2},f)dt $$
Thus 
\begin{align} \label{kdiff}
&|k_n^3 - k_{n-1}^3|\le (3f/4)|k_{n-1} - k_{n-2}|\int_0^1 (c + 2p |k_{n-2} + t(k_{n-1} - k_{n-2})|^{-1})dt \nonumber \\
&\le (3f/4)(c|k_{n-1} - k_{n-2}| + 2p\log(1/(1-|k_{n-1} - k_{n-2}|k_{n-2}^{-1})).
\end{align}
Let us assume the Weierstrass preparation theorem holds for $|k-k_0|\le \delta < k_0/2$ and that $m(k,f)$ is analytic (in $k$ and non-zero in this ball for small $f$. Assume $|\lambda_f(j)| \le \delta/3$. 
Let us make the inductive hypotheses that $3k_0/2 \ge |k_l| \ge k_0/2$ and $|k_{l+1} - k_l| \le (C_0f)^{l-1}|k_2-k_1|$ for $1\le l \le n-2$ where we take $C_0 = 2(c + 2p)/k_0^2$.   Then $$k_1 - \sum_{l=1}^{n-2}|k_{l+1} - k_l| \le |k_{n-1}| \le k_1 + \sum_{l=1}^{n-2}|k_{l+1} - k_l|.$$
It follows that $$k_1 - (1-C_0f)^{-1}|k_2-k_1| \le |k_{n-1}| \le k_1 + (1-C_0f)^{-1}|k_2-k_1|$$
and thus for small $f$, $2k_0/3 \ge |k_{n-1}| \ge k_0/2$. Using (\ref{kdiff}), the lower bound on $|k_{n-1}|$, and the induction hypothesis we obtain for $0<f < f_0$ with $f_0$ independent of $n$
$$|k_n - k_{n-1}| \le (f/4|k_{n-1}|^2)(2c+4p)|k_{n-1} - k_{n-2}| \le C_0f|k_{n-1} - k_{n-2}|.$$
The induction is complete.  This estimate shows that for $f$ sufficiently small, $k=k(j) = \lim_{n\to \infty} k_n$ exists and satisfies (\ref{peqn}).

Actually $k_2(j)$ is close enough to the limit to get a good idea of what the string of resonances looks like near $k_0$ (but not too near). Thus

$$k(j) = k_1(j) -i(f/4k_1^2)\Big(\log m(k_1(j),0) - \log \text{prep}(k_1(j),f)\Big) + O((f\log1/f)^2)$$ 
$$k_1(j) = k_0 + \lambda_f(j)= (3\pi f j/2)^{1/3}$$
$$|\lambda_f(j)| \ge 2f^{(1-\epsilon)/p} $$
Since $|k_n - k_0| \ge |\lambda_f| -  \sum_{j=2}^n |k_j-k_{j-1}| \ge  2f^{(1-\epsilon)/p} - (1-C_0f)^{-1}|k_2-k_1| \ge f^{(1-\epsilon)/p}$
for small enough $f$, we have $|\text{prep}(k(j),f)| \ge f^{1-\epsilon}$.  We are ready to state and prove 

\begin{theorem} \label{orderp}
Suppose the reflection coefficient for scattering vanishes at $k_0 > 0$ of order $p$. Then the quantities $k(j)$ given above define a string of resonances of $H_f$.  Suppose $k$ is a resonance of $H_f$.  Then given $\epsilon \in (0,1)$ and $\delta > 0$ small enough with $k_0/2 > \delta > |k-k_0| > f^{(1-\epsilon)/p}$  there exists a positive integer $j$ such that $k=k(j)$.
\end{theorem}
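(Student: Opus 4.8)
The plan is to read the theorem off from the reduction already carried out above: in the annular region $\mathcal{R}=\{k:\ k_0/2>\delta>|k-k_0|>f^{(1-\epsilon)/p}\ \text{and}\ |k|\ \text{near}\ k_0\}$ the condition that $k$ be a resonance of $H_f$ is equivalent to equation (\ref{peqn}), and this equivalence is paired with the contraction argument behind the iteration $k_n\mapsto k_{n+1}$. On $\mathcal{R}$ one has $\psi(L)\neq 0$, $m(k,f)$ analytic and bounded away from $0$, and $|\mathrm{prep}(k,f)|\ge c|k-k_0|^p\ge cf^{1-\epsilon}>0$ (its $p$ zeros lie within $O(f^{1/p})$ of $k_0$, hence strictly inside the inner radius of $\mathcal{R}$), so (\ref{peqn}) introduces no spurious solutions. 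After taking a logarithm and an integer $j$, solving (\ref{peqn}) is the same as finding a fixed point of $\Phi_j(k)=\bigl(\tfrac{3\pi fj}{2}-\tfrac{3if}{4}(\log m(k,f)-\log\mathrm{prep}(k,f))\bigr)^{1/3}$, cube root near the positive axis, and the $k_n(j)$ are exactly the orbit of $\Phi_j$ started from $k_1(j)=(3\pi fj/2)^{1/3}$.

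First I would settle the forward half. The induction already performed shows that for every admissible $j$ the orbit of $\Phi_j$ is Cauchy, so $k(j)=\lim_n k_n(j)$ exists, is a fixed point of $\Phi_j$ and hence a solution of (\ref{peqn}), and satisfies $|k(j)-k_0|\ge f^{(1-\epsilon)/p}$; thus $k(j)\in\mathcal{R}$, so it is a resonance of $H_f$, located as in the displayed expansion $k(j)=k_1(j)-i(f/4k_1^2)\bigl(\log m(k_1,0)-\log\mathrm{prep}(k_1,f)\bigr)+O((f\log1/f)^2)$.

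For the converse, the real content, let $k^{*}$ be a resonance with $f^{(1-\epsilon)/p}<|k^{*}-k_0|<\delta$; then $k^{*}\in\mathcal{R}$ solves (\ref{peqn}), and fixing the branch of $\log$ continuously near $k^{*}$ produces a unique integer $j^{*}$ with $(k^{*})^3=\tfrac{3\pi fj^{*}}{2}-\tfrac{3if}{4}(\log m-\log\mathrm{prep})(k^{*},f)$, i.e. $k^{*}$ is a fixed point of $\Phi_{j^{*}}$. Since $|\mathrm{prep}(k^{*},f)|\ge cf^{1-\epsilon}$ and $m$ is bounded above and below, the correction term is $O(f\log1/f)$, whence $|(k^{*})^3-3\pi fj^{*}/2|=O(f\log1/f)$ and so $|k^{*}-k_1(j^{*})|=O(f\log1/f)$; because $(1-\epsilon)/p<1$ this is $o(f^{(1-\epsilon)/p})$, so $|k_1(j^{*})-k_0|\ge|k^{*}-k_0|-o(f^{(1-\epsilon)/p})$ stays $\gtrsim f^{(1-\epsilon)/p}$ for small $f$, placing $j^{*}$ in the admissible range (the constant $2$ in the construction being inessential), so $k(j^{*})$ is defined. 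Finally the Lipschitz bounds already derived, $|f\,\partial_k\log\mathrm{prep}|\le pf|k-k_0|^{-1}(1+cf^\epsilon)$ with $\partial_k\log m$ bounded, give $\Phi_{j^{*}}$ contraction constant $O(f^{(p-1+\epsilon)/p})\to 0$ on a ball $\{|k-k_1(j^{*})|\le Cf\log1/f\}\subset\mathcal{R}$ which for $C$ large contains both $k^{*}$ and the whole orbit $\{k_n(j^{*})\}$; uniqueness of the fixed point then forces $k^{*}=k(j^{*})$, and $j^{*}>0$ since $k_1(j^{*})>0$.

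The step I expect to be the main obstacle is the bookkeeping in this converse: attaching the correct integer $j^{*}$ to a given resonance through a consistent choice of logarithm branch, and arranging that a single ball carries the contraction while simultaneously containing $k_1(j^{*})$, its entire orbit, and $k^{*}$. This is exactly where the separation $f^{(1-\epsilon)/p}\gg f\log1/f$ is needed, and it is also why the statement must exclude the disk $|k-k_0|\le f^{(1-\epsilon)/p}$, inside which $\mathrm{prep}$ carries its $p$ zeros and the reduction to (\ref{peqn}) breaks down.
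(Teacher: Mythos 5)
Your proposal is correct and follows essentially the same route as the paper: reduce resonances in the annulus to equation (\ref{peqn}), attach an integer $j$ via the branch of the logarithm, obtain the rough bound $|k-k_1(j)|=O(f\log 1/f)=o(f^{(1-\epsilon)/p})$ from $|\mathrm{prep}|\ge f^{1-\epsilon}$, and then use the derivative bound $|f\,\partial_k\log\mathrm{prep}|\le pf|k-k_0|^{-1}$ on a segment staying at distance $\gtrsim f^{(1-\epsilon)/p}$ from $k_0$ to force $k=k(j)$. Your phrasing of the last step as uniqueness of the fixed point of the contraction $\Phi_j$ is the same estimate the paper writes out directly, and your remark that the constant $2$ in the admissibility condition for $j$ is inessential addresses a point the paper leaves implicit.
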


\begin{proof}
Suppose that $k$ is a solution to (\ref{peqn}) satisfying $|k-k_0|< \delta$ and $|k-k_0|> f^{(1-\epsilon)/p}$. Then there is an integer $j$ such that $$k^3 = 3\pi f j/2 -(3if/4)(\log m(k,f) - \log \text{prep}(k,f)).$$ Without loss of generality we take the same branches of the logarithms we took in defining $k(j)$.
We obtain $$k^3 - k(j)^3 = -(3if/4)\Big((\log m(k,f) - \log m(k(j),f) - \log \text{prep}(k,f)) +\log \text{prep}(k(j),f))\Big).$$ At this point we are not able to get a good estimate for the difference of the final two terms involving $\text{prep}$ so we just estimate the difference by the absolute value of the sum.  Thus we obtain 
$$|k^3 -k(j)^3| \le Cf(|k-k(j)| + \log1/f)).$$
Factoring $k^3 - k(j)^3$ and using the fact that both $k$ and $k(j)$ are close to $k_0$ we obtain
$$|k-k(j)|\le C'f(|k-k(j)| + \log 1/f)$$ so that $|k-k(j)| \le C''f\log 1/f$.  Using this rough estimate we can now estimate the difference of the $\log\text{prep}$ terms using $$\log \text{prep}(k,f)) -\log \text{prep}(k(j),f)) = (k-k(j))\int_0^1 \partial  \log \text{prep}(k(j)+ t(k-k(j),f)/\partial kdt.$$ Here we use $|k(j) + t(k-k(j)) - k_0| \ge (1/2)f^{(1-\epsilon)/p}$ so that 

$$|k-k(j)| \le C|k-k(j)|(f + f^{(p -1 +\epsilon)/p})$$ which implies $k=k(j)$.
\end{proof}

\section{Resonances near the line $\arg k = -\pi/3$ } \label{line}

Recalling the definitions of $A_1$ and $A_2$ at the beginning of the previous section,

we know that $k$ is a resonance if there is a non-zero solution $\psi(x)$ to $-\psi'' + (V+f x - k^2)\psi =0$ satisfying
\begin{align}
\label{psiA1}\frac{\psi'(-L)}{\psi(-L)} &= \frac{A_1'(-L)}{A_1(-L)} \\
\label{psiA2}\frac{\psi'(L)}{\psi(L)} &= \frac{A_2'(L)}{A_2(L)}. 
 \end{align}
 
 In the last section where $k$ is close to a point $k_0$ on the positive real axis, $A_1$ and $A_1'$ had single sum asymptotic expansions for $f\downarrow 0$ while  $A_2$ and $A_2'$ have double sum expansions. There we could define $\psi$ to be the solution satisfying \eqref{psiA1} and use \eqref{psiA2} as the equation that determined the resonances. In this section we consider $k$ close to $k_0$ with $\arg k_0 = -\pi/3$. Then the situation is reversed: Now $A_2$ and $A_2'$ have single sum asymptotic expansions while $A_1$ and $A_1'$ have double sum expansions. We will now take $\psi$ to be the solution satisfying \eqref{psiA2} and use \eqref{psiA1} as  our resonance defining equation. Since we need more information than is provided in the expansions in [1], we will use the approximation in Appendix 10 which proves properties of the error term. 

So let us define $\psi(x; k, f, \alpha)$ to be the solution of $-\psi'' + (V+f x - k^2)\psi =0$ satisfying $\psi(L)=1$ and $\psi'(L)=\alpha$. Let $\lambda(k,f) = -i A_2'(L)/A_2(L)$. Then $\psi(x,k,f,i\lambda(k,f))$ is the solution satisfying \eqref{psiA2}. 
\begin{proposition}
Let $k$ be close to $k_0$ where $\arg k_0 =-\pi/3$. Then 
$$
\lambda(k,f) = k + a(f,k)
$$
where $a(f,k)$ is analytic in $k$ near $k_0$, $C^\infty$ in $f$ for $f\ge 0$ and $O(f)$ as $f\downarrow 0$.
\end{proposition}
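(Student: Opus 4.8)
The plan is to reduce the statement to the single‑sum Airy asymptotics valid in the sector $|\arg w|\le \pi-\delta$ and then to track the error term carefully using the quantitative estimates of Appendix \ref{AppendixAiry2}. Write $w=w(x;k,f)=f^{1/3}(x-z/f)$ with $z=k^2$, so that $A_2(x)=\Ai(w(x))$ and
\[
\lambda(k,f)=-i\,\frac{A_2'(L)}{A_2(L)}=-i\,f^{1/3}\,\frac{\Ai'(w(L))}{\Ai(w(L))}.
\]
The first observation is that for $k$ near $k_0$ and $f$ small, $w(L)=f^{1/3}L-f^{-2/3}k^2$ is large with argument $\arg(-k_0^2)+o(1)=\pi/3+o(1)$, hence stays in the sector $|\arg w|\le\pi-\delta$ where $\Ai(w)=\tfrac{1}{2\sqrt\pi}w^{-1/4}e^{-\zeta}(1+E_0(w))$ and $\Ai'(w)=-\tfrac{1}{2\sqrt\pi}w^{1/4}e^{-\zeta}(1+E_1(w))$ with $\zeta=\tfrac23 w^{3/2}$ and $|E_j(w)|=O(|\zeta|^{-1})$. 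Thus
\[
\lambda(k,f)=i\,f^{1/3}w(L)^{1/2}\,\frac{1+E_1(w(L))}{1+E_0(w(L))}.
\]

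Next I would use the elementary identity $f^{1/3}w(L)^{1/2}=(fL-k^2)^{1/2}$, valid because $f>0$ and $fL-k^2$ stays off the branch cut near $(f,k)=(0,k_0)$. Its right‑hand side is analytic in $(f,k)$ in a full neighbourhood of $(0,k_0)$, since $fL-k^2$ does not vanish there, and, with the branch fixed by continuity from $f>0$, a direct evaluation at $f=0$ gives the value $k$. Since $|\zeta|^{-1}=\tfrac{3f}{2|fL-k^2|^{3/2}}=O(f)$, the ratio $(1+E_1)/(1+E_0)$ equals $1+O(f)$; this already yields the leading behaviour $\lambda(k,f)=k+O(f)$. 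It also gives $|E_0(w(L))|<1$ for small $f$, hence $\Ai(w(L))\neq0$; combined with the fact that $\Ai,\Ai'$ are entire and $w(L)$ is holomorphic in $k$ for fixed $f>0$, this shows that $\lambda(\cdot,f)$ is holomorphic near $k_0$ for every small $f>0$.

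What remains, and is the main obstacle, is the one‑sided smoothness in $f$: that $(f,k)\mapsto (1+E_1(w(L)))/(1+E_0(w(L)))$ extends to a $C^\infty$ function on $[0,f_0)\times B_\epsilon(k_0)$ equal to $1$ at $f=0$, even though the argument $w(L)$ runs off to infinity as $f\downarrow0$. This is exactly what Appendix \ref{AppendixAiry2} is for: for each $N$ it provides a representation of the Airy remainders as a Taylor polynomial in $\zeta^{-1}$ of degree $N$ plus a remainder $\zeta^{-(N+1)}\Psi_N(w)$, with $\Psi_N$ and all its $w$‑derivatives bounded in the sector. Substituting $\zeta=\zeta(f,k)$ and $w=w(L;f,k)$ — noting that $\zeta^{-1}$ is analytic in $(f,k)$ near $(0,k_0)$ and vanishes at $f=0$, while the chain‑rule factors $\partial_f w\sim f^{-5/3}$, $\partial_k w$, etc. are exactly compensated by the corresponding negative powers of $|w|$ carried by the derivatives of $\Psi_N$ — one checks that every mixed partial derivative of order $\le N$ extends continuously to $f=0$; letting $N\to\infty$ gives $C^\infty$ regularity with value $1$ at $f=0$ and an $O(f)$ remainder. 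Putting these together, $a(f,k):=\lambda(k,f)-k$ is analytic in $k$ near $k_0$, $C^\infty$ in $f$ for $f\ge0$, and $O(f)$ as $f\downarrow0$. The delicate point throughout is precisely this last extraction of full $C^\infty$ control (with uniformly bounded derivatives of every order) from the Airy expansion in the limit $f\downarrow0$: the bare leading‑order asymptotics of \cite{AS} do not suffice, and one genuinely needs the refined bounds on the error term and its derivatives supplied by the appendix.
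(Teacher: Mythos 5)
Your overall strategy is the same as the paper's: in this region $w(L)$ tends to infinity in a direction well inside $|\arg w|<\pi$ (indeed $\arg w\to\pi/3$), so only the single-sum asymptotics of $\Ai$ and $\Ai'$ are needed; the leading term $i f^{1/3}w(L)^{1/2}=i(fL-k^2)^{1/2}$ is an explicit function jointly analytic near $(0,k_0)$; and the whole difficulty is showing that the relative error is analytic in $k$, $C^\infty$ in $f\ge 0$ and $O(f)$. You have also correctly identified the mechanism for that last point: the error is a function of $\zeta^{-1}=\tfrac{3f}{2(fL-k^2)^{3/2}}$, which is analytic in $(f,k)$ and vanishes linearly at $f=0$, so what is needed is smoothness of the error, as a function of $\zeta^{-1}$, up to the origin of the sector.

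Two concrete problems remain. First, your branch evaluation is impossible as stated: $(fL-k^2)^{1/2}$ cannot equal $k$ at $f=0$, since its square is $-k^2$, not $k^2$. With the principal branch that appears in the Airy asymptotics one has $\arg w(L)^{1/2}=\tfrac12\arg w(L)\to\pi/6$, hence $f^{1/3}w(L)^{1/2}\to |k_0|e^{i\pi/6}=ik_0$, and your formula then gives $\lambda\to i\cdot ik_0=-k_0$ rather than $k_0$. You must either exhibit and justify the branch for which $i(fL-k^2)^{1/2}\to k$, or flag the sign discrepancy; as written the leading constant does not come out right. Second, you delegate the crux --- the one-sided $C^\infty$ extension of the error terms to $f=0$ --- to Appendix \ref{AppendixAiry2}, but that appendix treats the two-exponential (double-sum) regime near the anti-Stokes directions and contains no statement of the form ``Taylor polynomial in $\zeta^{-1}$ plus a remainder $\zeta^{-(N+1)}\Psi_N$ with all $w$-derivatives bounded.'' The result you need is the one in Appendix \ref{AppendixAiry}, where $\sqrt{\eta}\,e^{2\eta^3/3}\Ai(w)$ is shown to be an analytic function $J$ of $\zeta^{-1}$ all of whose derivatives extend continuously to the origin of the sector (a companion statement for $\Ai'$ is also needed). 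With the reference corrected, the composition argument you sketch does close the proof; without it, the key step rests on a lemma that is not where you say it is.
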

\begin{proof}
Let $w_2=f^{1/3}(-L-k^2/f)$. When $f\downarrow 0$, $w_2$ avoids a sector about the negative real axis.
We can therefore use the asymptotic formulas 
\begin{align}
\label{AiA1}\Ai(w_2) &= \frac{e^{-\zeta}}{2\sqrt{\pi}w_2^{1/4}}\left(1+a_1(k,f)\right)\\
\label{AiA2}{\Ai}'(w_2) &= -\frac{w_2^{1/4} e^{-\zeta}}{2\sqrt{\pi}}\left(1+a_2(k,f)\right),
\end{align}
where 
\begin{equation}\label{zetadef}
\zeta=(2/3)w_2^{3/2} =-i \frac{2k^3}{3f}\left(1+\frac{fL}{k^2}\right)^{3/2},
\end{equation} 
and by Appendix 9 the error terms $a_1$ and $a_2$ are analytic in $k$ near $k_0$, $C^\infty$ in $f$ and $O(f)$, and

The proposition follows easily from this.
\end{proof}

Now we analyze $A_1'(-L)/A_1(-L)$. We have
$$
A_1(-L) = \Ai(w_1), \quad A_1'(-L) = e^{2\pi i/3} f^{1/3}{\Ai}'(w_1)
$$
where
$$
w_1=e^{2\pi i/3} f^{1/3}(-L-k^2/f).
$$
When $\arg k = -\pi i/3$, then as $f\downarrow 0$, $w_1$ moves to infinity along the negative real axis.  We may use the 
asymptotic formulas for $\Ai$ and $\Ai'$ in Appendix 10:

with $k_1=e^{i\pi/3}k$ and $\eta = k_1^2(1+f L/k^2)$
\begin{align*}
\frac{A_1'(-L)}{A_1(-L)} &= 
\eta^{1/2}
\frac{ e^{2i \eta^{3/2}/(3f)}(-i\tilde a_1) + e^{-2i \eta^{3/2}/(3f)}(-\tilde a_2)}{e^{2i \eta^{3/2}/(3f)}(a_1) + e^{-2i \eta^{3/2}/(3f)}(i a_2)},
\end{align*}
where 
each $a_i(k,f)$ and $\tilde a_i(k,f)$ is smooth in $f$, analytic in $k$  and equal to $1$ when $f=0$.
We know that $k$ near $k_0$ is a resonance exactly when \eqref{psiA1} holds, that is, if the left side of this equation is equal to $\psi'(-L)/\psi(-L)$. 
If this condition holds we can solve the linear fractional transformation for $e^{-4i\eta^3/3f}$, and we find that $k$ is a resonance when
\begin{equation} \label{likenopole}
e^{-4i\eta^3/3f}= iG(\eta)(1+b(\eta,f))
\end{equation}
where the analytic function
$G$ is a multiple of $-\frac{\psi' - ik \psi}{\psi'+ ik\psi}$ 
when $f = 0$ and $b(\eta,f)$ is  smooth in $f$, analytic in $k$   and $O(f)$ as $f\downarrow 0$.
This quantity cannot have a zero at $k=k_0$ since according to (\ref{nopole}), near $k_0$ $e^{-4i\eta^3/3f} = -2k/(F(k) + O(f))$ and $F(k)$ is analytic near $k_0$.

Note that (\ref{likenopole}) is exactly the same equation as \eqref{errornopole} which arose when we found the resonances near the positive real axis in a neighborhood of a point where the reflection coefficient did not vanish. 
As before it has a solution given by the fixed point of a contraction.
Thus setting $\log iG(\eta) = - l(\eta) + i \theta(\eta)$ we have
\begin{theorem}
Suppose $k_0 = e^{-i\pi/3}\eta_0$ with $\eta_0 > 0$  a point where $G$ does not have a pole.  Define $\eta_0(j) = (3\pi f j)^{1/3}$. Then the resonances near $k_0$ are given by $k(j) = e^{-i\pi/3}\eta(j)$ with
$$\eta(j) =\eta_0(j) - 4^{-1}\Big(\theta(\eta_0(j))/\eta_0(j)^2 + il(\eta_0(j))/\eta_0(j)^2\Big)f + O(f^2).$$ 
\end{theorem}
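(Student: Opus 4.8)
The plan is to solve the scalar resonance equation \eqref{likenopole} by the same contraction-mapping argument used for resonances near a point of the positive real axis where the reflection coefficient does not vanish, since, as already observed, \eqref{likenopole} has exactly the form of \eqref{errornopole}.

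First I would record the structural facts. By the discussion preceding the statement, for $k$ near $k_0$ the resonance condition \eqref{psiA1} --- after feeding the Airy asymptotics of Appendix~10 for $A_1,A_1'$ and the expansion $\lambda(k,f)=k+a(f,k)$ for $-iA_2'(L)/A_2(L)$ into $\psi$ --- is equivalent to $e^{-4i\eta^3/3f}=iG(\eta)(1+b(\eta,f))$, where $\eta\to\eta_0=e^{i\pi/3}k_0>0$ as $f\downarrow 0$, $G$ is analytic on a fixed disk $|\eta-\eta_0|\le\epsilon$, and $b(\eta,f)$ is analytic in $k$, $C^\infty$ in $f\ge 0$, and $O(f)$ with $O(f)$ $k$-derivative as $f\downarrow 0$. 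Since $\eta_0$ is by hypothesis not a pole of $G$, the behavior recorded in \eqref{nopole} forces $G(\eta_0)=2ik_0/F(k_0)\ne0$ (here $F$ is analytic near $k_0$ and $F(k_0)\ne0$ because $\eta_0$ is not a pole of $G$, and $k_0\ne0$); after shrinking $\epsilon$, $G$ is analytic and nonvanishing on $|\eta-\eta_0|\le\epsilon$, and a continuous branch of $\log(iG(\eta))$ is fixed there.

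Next comes the contraction. Taking logarithms and absorbing the multivaluedness into a positive integer $j$ turns the equation into
$$\eta^3=\eta_0(j)^3+\tfrac{3if}{4}\bigl(\log(iG(\eta))+\log(1+b(\eta,f))\bigr),$$
with $\eta_0(j)$ as in the statement and the branch fixed so that the root stays near $k_0$ (equivalently, so that $z=k^2$ lies in the lower half-plane near $\arg z=-2\pi/3$). For $f$ small and $j$ (necessarily of order $f^{-1}$) with $|\eta_0(j)-\eta_0|<\epsilon$, rewrite this as $\eta=\Phi_j(\eta)$, where
$$\Phi_j(\eta)=\eta_0(j)\left(1+\tfrac{3if}{4\eta_0(j)^3}\bigl(\log(iG(\eta))+\log(1+b(\eta,f))\bigr)\right)^{1/3},$$
the cube root taken near the positive reals. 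Since $\log(iG)$ and its $\eta$-derivative are bounded on the disk while $\log(1+b)=O(f)$ with $O(f)$ $k$-derivative, for $f$ small $\Phi_j$ maps $\{|\eta-\eta_0(j)|\le\epsilon\}$ into itself with derivative $O(f)$, hence is a contraction; its unique fixed point is the claimed $\eta(j)$, and $k(j)=e^{-i\pi/3}\eta(j)$ is a resonance. Conversely any resonance $k$ near $k_0$ solves \eqref{likenopole}, and the logarithmic equation pins the associated $\eta$ to within $\epsilon$ of exactly one $\eta_0(j)$, so by uniqueness of the fixed point it equals that $\eta(j)$; thus the $k(j)$ exhaust the resonances near $k_0$. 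This reproduces verbatim the positive-real-axis construction with $\eta$ in the role of $k$.

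The asymptotics follow from one Picard iteration: since $\eta(j)=\eta_0(j)+O(f)$, substituting back gives $\eta(j)^3=\eta_0(j)^3+\tfrac{3if}{4}\log(iG(\eta_0(j)))+O(f^2)$, hence $\eta(j)=\eta_0(j)+\tfrac{if}{4\eta_0(j)^2}\log(iG(\eta_0(j)))+O(f^2)$. Writing $\log(iG(\eta))=-l(\eta)+i\theta(\eta)$ and using $i(-l+i\theta)=-(\theta+il)$ gives $\eta(j)=\eta_0(j)-\tfrac{f}{4}\bigl(\theta(\eta_0(j))/\eta_0(j)^2+i\,l(\eta_0(j))/\eta_0(j)^2\bigr)+O(f^2)$, which is the assertion, and $k(j)=e^{-i\pi/3}\eta(j)$. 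The one genuinely technical point --- the regularity of the Airy error terms, hence of $b$ and of $\lambda(k,f)$ (smoothness in $f$ up to $f=0$, analyticity in $k$, the $O(f)$ size with a bounded $k$-derivative) --- is exactly what Appendices~9 and 10 supply; once that is in hand, everything above is routine and identical to the positive-real-axis analysis, which is why the main difficulty has effectively been front-loaded into those appendices.
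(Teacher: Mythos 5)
Your proposal is correct and follows essentially the same route as the paper: the paper's entire proof consists of observing that \eqref{likenopole} has the identical form to \eqref{errornopole} and invoking the contraction/fixed-point construction from the positive-real-axis case, with the Airy error-term regularity supplied by the appendices. You have simply written out that contraction and the one Picard iteration explicitly, which matches the paper's intent (including the non-vanishing of $G$ via \eqref{nopole}), so there is nothing substantive to add.
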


The situation where $\frac{\psi' - ik \psi}{\psi'+ ik\psi}$ has a pole at $k_0$ can be treated in essentially the same way as was done in Theorem \ref{orderp}.

\section{Higher dimensions} \label{ndimensions}

We generalize equation (\ref{K_c}).  Writing $H_{0,f} = p_1^2 + fx_1 + p_{\perp}^2$, we write $ L^2(\mathbb {R}^d)$ as a direct integral of functions of the perpendicular momentum with values in $L^2(\mathbb{R})$.
We assume the potential $V$ is factorized as $V = V_1V_2$ where the $V_j$  are real, bounded, measurable with compact support in $B_R = \{x\in \mathbb {R^d} :|x| <R\}$.  Thus the weighted resolvent of $H_{0,f}$ has an integral kernel 

$$K_{0,f}(x_1,y_1; p_\perp,z)  = f^{-1/3}V_1(x_1,\cdot) \Big(\int \Ai(f^{1/3}(x_1-t))(t-(z-p_\perp^2)/f)^{-1}\Ai(f^{1/3}(y_1-t))dt\Big) V_2(y_1,\cdot).$$
We now analytically continue from the upper to the lower half plane and find the kernel of $(V_1(H_{0,f} - z)^{-1}V_2)_c$

$$ K_{0,f,c} = \tilde Q + K_{0,f}$$
where $\tilde Q$ is the integral kernel in the variables $(x_1,y_1)$ of an operator $Q$: 

\begin{equation} \label{Qtilde}
\tilde Q(x_1,y_1)  = 2\pi i f^{-1/3}V_1(x_1,\cdot) \Ai(f^{1/3}(x_1 -(z-p_\perp^2)/f)\Ai(f^{1/3}(y_1 -(z-p_\perp^2)/f)V(y_1,\cdot).
\end{equation}
As $f \downarrow 0$, 

$$\Ai(f^{1/3}(x_1 -(z-p_\perp^2)/f) = \frac{f^{1/6}}{2\sqrt{\pi \eta}}e^{2i\eta^3/3f} e^{-i\eta x_1 -i\pi/4}(1+O(f))$$
where $\eta = \sqrt{z- p_\perp^2}$.  Here $\text{Im}z < 0$.  In the following we think of using the Fourier transform to diagonalize $p_\perp$, so $p_\perp$ becomes $\xi_\perp$.

First assume that $\arg z \in [-\pi, -2\pi/3)$.  If $\arg (z-\xi_\perp^2) = - \pi + \phi$, with $\phi \in [0, \pi/3 -\delta]$,  $\text{Re}\ i \eta^3= -|\eta|^3 \cos (3\phi/2) = - |\eta|^3 \sin(3\delta/2)$.  Taking into account the compact support of $V_j$ which we assume to be in the ball $B_R = \{x: |x| < R\}$ and assuming $\text{Im} z < - \delta <0 $ we have for $|x_1| < R$

$$|\Ai(f^{1/3}(x_1 -(z-\xi_\perp^2)/f)| \le Cf^{1/6}|\eta|^{-1}e^{-\frac{2 |\eta|^3\sin(3\delta/2)}{3f} + |\eta|R}$$
$$\le c_1 f^{1/6}e^{-c_2/f}$$ for some positive constants $c_j$.
We thus have
\begin{theorem}
Given $\delta > 0$, if $f$ is small enough there are no resonances of $H_f$ in the region $\{z: \arg z \in [-\pi + \delta, - 2\pi/3 - \delta]  ,\text {Im} z \le - \delta < 0\}$.  
\end{theorem}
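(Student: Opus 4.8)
The plan is to show that for every $z$ in the indicated set and every sufficiently small $f>0$ the operator $I+K_{0,f,c}(z)$ is boundedly invertible. Exactly as in dimension one, $K_{0,f,c}(z)=\tilde Q+K_{0,f}(z)$ is compact (it is the analytic continuation of the compact operator $K_{0,f}(z)=V_1(H_{0,f}-z)^{-1}V_2$), and by the Fredholm alternative $z$ is a resonance of $H_f$ precisely when $I+K_{0,f,c}(z)$ fails to be invertible; so invertibility of $I+K_{0,f,c}(z)$ throughout the region proves the theorem. I would factor
$$
I+K_{0,f,c}(z)=\bigl(I+K_{0,f}(z)\bigr)\bigl(I+(I+K_{0,f}(z))^{-1}\tilde Q\bigr),
$$
so it suffices to bound $\|(I+K_{0,f}(z))^{-1}\|$ uniformly over the region and to show that $\|\tilde Q\|$ is exponentially small in $1/f$.

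For the first factor I would use $(I+K_{0,f}(z))^{-1}=I-V_1(H_f-z)^{-1}V_2$, which is valid because $H_f=-\Delta+fx_1+V$ is self-adjoint; hence $\|(H_f-z)^{-1}\|\le|\mathrm{Im}\,z|^{-1}\le\delta^{-1}$ on the region, giving $\|(I+K_{0,f}(z))^{-1}\|\le 1+\|V_1\|_\infty\|V_2\|_\infty\delta^{-1}=:C_\delta$, a bound independent of $f$ and of $z$ in the region.

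For $\tilde Q$ I would invoke the Airy bound already obtained above: for $z$ in the region, $|x_1|<R$, and every perpendicular momentum $\xi_\perp$,
$$
\bigl|\Ai(f^{1/3}(x_1-(z-\xi_\perp^2)/f))\bigr|\le c_1 f^{1/6}e^{-c_2/f},
$$
with $c_1,c_2>0$ depending only on $\delta$ and $R$; the uniformity over the unbounded region and over $\xi_\perp\ge0$ comes from the two hypotheses: writing $\eta=\sqrt{z-\xi_\perp^2}$ and $\arg(z-\xi_\perp^2)=-\pi+\phi$, the condition $\arg z\le-2\pi/3-\delta$ forces $0<\phi\le\pi/3-\delta$ for all $\xi_\perp$, so $\mathrm{Re}(i\eta^3)\le-|\eta|^3\sin(3\delta/2)$, while $\mathrm{Im}\,z\le-\delta$ forces $|\eta|^2=|z-\xi_\perp^2|\ge\delta$, and then $\tfrac{2}{3f}|\eta|^3\sin(3\delta/2)-R|\eta|=|\eta|(\tfrac{2}{3f}|\eta|^2\sin(3\delta/2)-R)\ge c_2/f$ once $f$ is small. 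Since $V_1,V_2$ are supported in $B_R$, only the values of the Airy factors for $|x_1|,|y_1|<R$ enter \eqref{Qtilde}; diagonalizing $p_\perp$ by the Fourier transform in $x_\perp$, we may write $\tilde Q=2\pi i f^{-1/3}V_1\,\Xi_R\,V_2$, where $\Xi_R$ is the direct integral over $\xi_\perp$ of the bounded rank-one operators on $L^2(\mathbb R_{x_1})$ with kernel $1_{(-R,R)}(x_1)1_{(-R,R)}(y_1)\Ai(f^{1/3}(x_1-(z-\xi_\perp^2)/f))\Ai(f^{1/3}(y_1-(z-\xi_\perp^2)/f))$. Each fiber has norm at most $2R(c_1f^{1/6}e^{-c_2/f})^2$, so $\|\Xi_R\|\le 2Rc_1^2f^{1/3}e^{-2c_2/f}$ and
$$
\|\tilde Q\|\le 2\pi f^{-1/3}\|V_1\|_\infty\|V_2\|_\infty\cdot 2Rc_1^2f^{1/3}e^{-2c_2/f}=Ce^{-2c_2/f}.
$$

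Combining the two bounds, $\|(I+K_{0,f}(z))^{-1}\tilde Q\|\le C_\delta Ce^{-2c_2/f}<1$ for $f$ small enough, so the second factor, hence $I+K_{0,f,c}(z)$, is invertible for all $z$ in the region and there are no resonances there. The main (really the only) point requiring care is the uniformity of the Airy estimate: one must make sure the decay rate $|\eta|^3\sin(3\delta/2)/f$ dominates the growth $R|\eta|$ produced by the compact support of $V$ uniformly as $|z|\to\infty$ and over all $\xi_\perp$, and it is precisely $\mathrm{Im}\,z\le-\delta$ (which keeps $|\eta|$ bounded below) together with $\arg z\le-2\pi/3-\delta$ (which keeps $\phi$, hence $\sin(3\delta/2)$, bounded below) that makes the relevant constants depend only on $\delta$ and $R$. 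Everything else is routine.
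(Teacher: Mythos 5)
Your proposal is correct and follows essentially the same route as the paper: both reduce the resonance condition to the invertibility of $I$ plus the product of the rank-one-in-each-fiber operator $Q$ (exponentially small by the Airy estimate, uniformly in $\xi_\perp$ thanks to $|z-\xi_\perp^2|\ge|\mathrm{Im}\,z|\ge\delta$ and $\arg(z-\xi_\perp^2)\le\arg z\le-2\pi/3-\delta$) with the uniformly bounded $(I+K_{0,f}(z))^{-1}=I-V_1(H_f-z)^{-1}V_2$. The only differences are cosmetic — the order of the two factors in the factorization and estimating $\|Q\|$ fiberwise rather than via Cauchy–Schwarz on the bilinear form $(\psi,Q\phi)$ — and you are in fact more explicit than the paper about the resolvent bound $\|(H_f-z)^{-1}\|\le\delta^{-1}$ and the uniformity of the Airy decay over $\xi_\perp$.
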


\begin{proof}

Resonances are points $z$ so that $\ker (I + Q(I - V_1(H_f-z)^{-1}V_2))$ is not $\{0\}$.  From the above estimates

$$|(\psi, Q\phi)| = | \int (V_1\psi(x_1), F(x_1,y_1)V_2\phi(y_1))dx_1dy_1|$$
$$\le  \int ||V_1\psi(x_1)|| c_1^2 e^{-2c_2/f}||V_2\phi(y_1)||dx_1dy_1$$
where $F(x_1,y_1)$ is $\tilde Q $ with the $V_j$'s removed and we have used
$$||F(x_1,y_1)|| \le c_1^2 e^{-2c_2/f}$$ for $|x_1|, |y_1| < R$.
Thus by the Schwarz inequality
$$|(\psi, Q\phi)| \le R ||V_1\psi|| ||V_2\phi|| c_1^2 e^{-2c_2/f} $$
which implies  $(I + Q(I - V_1(H_f-z)^{-1}V_2)$ is invertible for small $f$.
\end{proof}

Now consider the region $\arg z \in (-2\pi/3 + \delta, -\delta)$ for small $\delta >0$.  This is a region where progress should be reasonably simple compared to the remaining regions along the positive real axis and the line $\arg z = - 2\pi/3$. Unfortunately we have nothing to report about the existence of resonances in this region. But we give some information about the operator which needs to be examined to make further progress.  

We modify $\tilde Q$ slightly to define an operator $M$:

$$M\psi(x_1,\cdot) = 2\pi i f^{-1/3}1_{[-R,R]}(x_1) \int \Ai(f^{1/3}(x_1 + (p_\perp^2 - z)/f))\Ai(f^{1/3}(y_1 + (p_\perp^2 - z)/f))1_{[-R,R]}(y_1)\psi(y_1,\cdot)dy_1.$$
$M$ has the virtue that it is a multiple of a projection:

\begin{align*}
& M^2 \psi(x_1,\xi_\perp) = \\
&(2\pi if^{-1/3})^21_{[-R,R]}(x_1) \Ai(f^{1/3}(x_1 + (\xi_\perp^2 - z)/f))\int \Ai(f^{1/3}(y_1 + (\xi_\perp^2 - z)/f))\\
&1_{[-R,R]}(y_1)\Ai(f^{1/3}(y_1 + (\xi_\perp^2 - z)/f))\int \Ai(f^{1/3}(w_1 + (\xi_\perp^2 - z)/f))\\
&1_{[-R,R]}(y_1)\psi(w_1,\xi_\perp)dw_1 dy_1\\
&= F(\xi_\perp^2) M\psi(x_1,\xi_\perp) 
\end{align*}
where with  $\eta = \sqrt {z- \xi_\perp^2}$,
$$F(\xi_\perp^2) = 2\pi i f^{-1/3}\int_{-R}^R \Ai(f^{1/3}(x_1 + (\xi_\perp^2 - z)/f))^2 dx_1 = \frac{e^{4i\eta^3/3f}}{2\eta}(\sin(2\eta R)/\eta)(1+ O(f)).$$
Thus $M = F(p_\perp^2) P_f$ where $P_f^2 = P_f$ and $P_f$ commutes with $F(p_\perp^2)$.  Even though we have not indicated it, of course $F(p_\perp^2)$ also depends on $f$ and $z$.  

We note that $Q = V_1MV_2$ so using the fact that $\sigma(AB)\setminus \{0\} =\sigma(BA)\setminus \{0\}$ we have 
\begin{proposition}
The resonances of $H_f$ in the lower half plane are the points $z$ such that the operator
$$F(p_\perp^2) P_f (V-V(H_f -z)^{-1}V)P_f$$
has eigenvalue $-1$.
\end{proposition}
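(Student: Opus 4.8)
The plan is to push the resonance condition into the stated form by a short sequence of cyclic rearrangements, the identity $\sigma(AB)\setminus\{0\}=\sigma(BA)\setminus\{0\}$ for bounded operators doing all the work; the only analytic inputs are facts already available from the one dimensional treatment and its higher dimensional transcription. Fix $z$ in the open lower half plane and abbreviate $K_{0,f}=V_1(H_{0,f}-z)^{-1}V_2$ for the honest (bounded) resolvent sandwich at $\text{Im}\,z<0$, $T=V-V(H_f-z)^{-1}V$, and $M=F(p_\perp^2)P_f$, so that $K_{0,f,c}=Q+K_{0,f}$ with $Q=V_1MV_2$. First I would note that, exactly as in one dimension, $K_{0,f,c}$ is compact — one uses the compact support of the $V_j$ together with the decay of the transverse symbol $F(\xi_\perp^2)$ as $\xi_\perp\to\infty$ — so that by the Fredholm alternative $z$ is a resonance of $H_f$ precisely when $I+Q+K_{0,f}$ fails to be invertible, i.e.\ when $-1$ is an eigenvalue of $Q+K_{0,f}$. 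Since $(I+K_{0,f})^{-1}=I-V_1(H_f-z)^{-1}V_2$ is bounded for $\text{Im}\,z\neq0$, the factorization
$$I+Q+K_{0,f}=(I+K_{0,f})\bigl(I+(I+K_{0,f})^{-1}Q\bigr)$$
reduces the resonance condition to $-1\in\sigma\bigl((I+K_{0,f})^{-1}Q\bigr)$.

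Now I would unwind the products. Writing $Q=V_1MV_2$ and applying the identity $\sigma(AB)\setminus\{0\}=\sigma(BA)\setminus\{0\}$ with $A=(I+K_{0,f})^{-1}V_1M$, $B=V_2$, the condition becomes $-1\in\sigma\bigl(V_2(I+K_{0,f})^{-1}V_1M\bigr)$; and since $V_1,V_2$ are commuting multiplication operators with $V=V_1V_2$,
$$V_2(I+K_{0,f})^{-1}V_1=V_2\bigl(I-V_1(H_f-z)^{-1}V_2\bigr)V_1=V-V(H_f-z)^{-1}V=T.$$
Thus $z$ is a resonance iff $-1\in\sigma\bigl(TM\bigr)=\sigma\bigl(TF(p_\perp^2)P_f\bigr)$. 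Using $P_f^2=P_f$, the operator $TF(p_\perp^2)P_f$ is unchanged by appending $P_f$ on the right, so the same identity with $B=P_f$ moves a $P_f$ to the left: $\sigma\bigl(TF(p_\perp^2)P_f\bigr)\setminus\{0\}=\sigma\bigl(P_fTF(p_\perp^2)P_f\bigr)\setminus\{0\}$. Because $F(p_\perp^2)$ commutes with $P_f$, $P_fTF(p_\perp^2)P_f=\bigl(P_fTP_f\bigr)F(p_\perp^2)$, and one last application yields
$$-1\in\sigma\bigl(F(p_\perp^2)P_f\,T\,P_f\bigr)=\sigma\bigl(F(p_\perp^2)P_f(V-V(H_f-z)^{-1}V)P_f\bigr),$$
which is the claim. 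Since $-1\neq0$, every transfer above is between genuine (finite dimensional) eigenspaces — the eigenvector is carried along by the maps $\psi\mapsto V_2\psi$ and the like used in the proof of $\sigma(AB)\setminus\{0\}=\sigma(BA)\setminus\{0\}$ — so one really obtains that $-1$ is an eigenvalue, not merely a spectral point.

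With everything set up earlier, the rearrangements above are essentially mechanical, and the only step needing real care is the first one: establishing that the $d$-dimensional version of $K_{0,f,c}$ is a compact operator (so the Fredholm alternative and the eigenvalue characterization apply) and that the resolvent identity $(I+K_{0,f})^{-1}=I-V_1(H_f-z)^{-1}V_2$ survives the analytic continuation below the real axis. One should also verify along the way that $M=F(p_\perp^2)P_f$ (equivalently, $Q$ with the factors $V_j$ removed) is bounded and that, for a fixed $z$ with $\text{Im}\,z<0$, the transverse symbol $F(\xi_\perp^2)$ has no real zeros — which, since $\text{Im}\,(\xi_\perp^2-z)=-\text{Im}\,z>0$ keeps the argument of the Airy function in the upper half plane away from the negative real axis where it oscillates, is automatic — so that $P_f$ is bounded and the displayed operator is well defined. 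I do not expect this bookkeeping to present a genuine obstacle; the spectral algebra is the substance of the argument, and it is routine.
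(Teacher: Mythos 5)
Your proposal is correct and takes essentially the same route as the paper, whose entire proof is to write $Q=V_1MV_2$ and invoke $\sigma(AB)\setminus\{0\}=\sigma(BA)\setminus\{0\}$; you merely spell out the intermediate cyclic rearrangements (including the factorization through $I+K_{0,f}$ and the final shuffle using $P_f^2=P_f$ and $[P_f,F(p_\perp^2)]=0$) and flag the compactness and well-definedness bookkeeping that the paper leaves implicit.
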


We give some of the asymptotics of $P_f$ and $F(\xi_\perp^2)$:

\begin{lemma}
For $\text{Im} z < 0$,  $||P_f - P_0|| = O(f)$ where
$P_0$ has kernel as an operator on $L^2$ functions of the perpendicular momentum and the $x_1$ variable
$$1_{[-R,R]}(x_1)e^{-i\eta x_1}(\sin{2\eta R}/2\eta )^{-1}e^{-i\eta y_1}1_{[-R,R]}(y_1).$$
Here $\eta = \sqrt{z -\xi_\perp^2}$ is always in the 4th quandrant.  

The function $F(\xi_\perp^2)$ satisfies the following estimates when $\text{Re}z + |\text{Im}z|/\sqrt3 >0$ and $\text{Im}z<0$.
Set $\alpha^2 = \text{Re}z + |\text{Im}z|/\sqrt 3$.  Then 

$$|F| \ge c_1e^{(2/\sqrt 3)^{1/2}|\text{Im}z|^{1/2}(\alpha^2 - \xi_\perp^2)/f}$$

if $\xi_\perp^2 < \alpha^2$.

$$|F| \le c_2e^{-(2/\sqrt 3)^{1/2}|\text{Im}z|^{1/2}(\xi_\perp^2- \alpha^2)/f}$$

if $\beta > \xi_\perp^2 > \alpha^2$.

$$|F| \le c_3e^{-(1/\sqrt 3)^{1/2}|\text{Im}z|^{1/2}(\xi_\perp^2- \alpha^2)/f}$$

if $\xi_\perp^2 > \beta$.
\end{lemma}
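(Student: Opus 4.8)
The plan is to reduce everything to the asymptotics of a single Airy function $\Ai(f^{1/3}(x_1+(\xi_\perp^2-z)/f))$ with argument $w = f^{1/3}(x_1 + (\xi_\perp^2-z)/f)$, and to track carefully the real part of $\zeta = (2/3)w^{3/2}$, which controls the exponential size. First I would establish the claim about $\|P_f - P_0\| = O(f)$. Since $P_f = F(p_\perp^2)^{-1} M$ is a multiple of the rank-structured operator $M$ on each fiber (fixed $\xi_\perp$, hence fixed $\eta = \sqrt{z-\xi_\perp^2}$ in the fourth quadrant because $\text{Im}\,z<0$), I would insert the asymptotic expansion
$$
\Ai(f^{1/3}(x_1+(\xi_\perp^2-z)/f)) = \frac{f^{1/6}}{2\sqrt{\pi\eta}}e^{2i\eta^3/3f}e^{-i\eta x_1 - i\pi/4}(1+O(f))
$$
(already recorded in the excerpt, valid since as $f\downarrow 0$ the argument $w$ avoids a sector around the negative real axis for $z$ in the lower half plane) into both the numerator kernel and the normalizing integral $F(\xi_\perp^2)$. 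The common prefactors $f^{1/6}e^{2i\eta^3/3f}e^{-i\pi/4}$ and one power of $(4\pi\eta)^{-1/2}$ cancel between $M$ and $F(p_\perp^2)^{-1}$, leaving exactly the stated kernel for $P_0$ plus an $O(f)$ error; the cutoffs $1_{[-R,R]}$ keep $x_1,y_1$ bounded so the $O(f)$ is uniform, and uniformity over $\xi_\perp$ follows because $\eta$ stays bounded away from the bad sector (one should remark that for large $|\xi_\perp|$, $\eta$ has large negative imaginary part and the operator norm of the fiber is actually exponentially small, so there is no issue at infinity).

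Next I would handle the three estimates on $F(\xi_\perp^2)$. From the formula
$$
F(\xi_\perp^2) = \frac{e^{4i\eta^3/3f}}{2\eta}\Big(\frac{\sin(2\eta R)}{\eta}\Big)(1+O(f)),
$$
the modulus is governed by $|e^{4i\eta^3/3f}| = e^{-(4/3f)\,\text{Im}(\eta^3)}$ times the elementary factor $|\sin(2\eta R)|/|\eta|^2$, which is bounded above and below by constants on the relevant range of $\eta$ (away from $\eta = 0$; near $\xi_\perp^2 = \text{Re}\,z$ one notes $\eta$ still has nonzero imaginary part so $\sin(2\eta R)$ does not vanish). So the whole game is to compute $\text{Im}(\eta^3)$ as a function of $t := \xi_\perp^2$, where $\eta^2 = z - t$. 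Writing $z = \text{Re}\,z - i|\text{Im}\,z|$, I would parametrize $\eta^2 = \rho e^{i\phi}$ with $\rho = |z-t|$ and $\phi = \arg(z-t) \in (-\pi,0)$, so $\eta^3 = \rho^{3/2}e^{3i\phi/2}$ and $\text{Im}(\eta^3) = \rho^{3/2}\sin(3\phi/2)$. The quantity $\alpha^2 = \text{Re}\,z + |\text{Im}\,z|/\sqrt3$ is precisely the value of $t$ at which $\arg(z-t) = -5\pi/6$, i.e. $3\phi/2 = -5\pi/4$, the boundary between $\text{Im}(\eta^3)<0$ (so $|F|$ exponentially large) and $\text{Im}(\eta^3)>0$ (so $|F|$ exponentially small); similarly $\beta$ should be the value of $t$ where $\arg(z-t)$ crosses the value making $3\phi/2 = -3\pi/2$, accounting for the switch of constant from $(2/\sqrt3)^{1/2}$ to $(1/\sqrt3)^{1/2}$ in the exponent. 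I would then Taylor expand $\rho^{3/2}\sin(3\phi/2)$ around $t = \alpha^2$: at that point $\sin(3\phi/2) = \sin(-5\pi/4) = 1/\sqrt2$ vanishes to first order in... no — rather one computes $\frac{d}{dt}\text{Im}(\eta^3)\big|_{t=\alpha^2}$ directly, obtaining a factor proportional to $|\text{Im}\,z|^{1/2}$, which produces the linear-in-$(t-\alpha^2)$ exponents with coefficient $(2/\sqrt3)^{1/2}|\text{Im}\,z|^{1/2}$ stated, and for $t > \beta$ a genuinely different (smaller) slope giving the $(1/\sqrt3)^{1/2}$ constant. The bounds are then one-sided: for $t<\alpha^2$ a lower bound $|F|\ge c_1 e^{(2/\sqrt3)^{1/2}|\text{Im}\,z|^{1/2}(\alpha^2-t)/f}$ (using convexity/monotonicity of $\text{Im}(\eta^3)$ to the left), and for $t>\alpha^2$ the matching upper bounds.

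The main obstacle I expect is the bookkeeping in the piecewise estimate for $F$: one must verify that $\text{Im}(\eta^3) = |z-t|^{3/2}\sin\big(\tfrac32\arg(z-t)\big)$ is, as a function of $t\in(-\infty,\infty)$ with $z$ fixed in the lower half plane, negative for $t<\alpha^2$, positive and increasing past $\alpha^2$, and that its growth rate genuinely changes at the threshold $t=\beta$ (where $\beta$ needs to be identified explicitly — presumably $\beta = \text{Re}\,z + \sqrt3\,|\text{Im}\,z|$, the value where $\arg(z-t)$ reaches the angle making $\frac32\arg = -\frac{3\pi}{2}$, i.e. where $\sin(3\phi/2)$ attains its maximum $1$, beyond which the relevant comparison is with $\sqrt{t}$ rather than $\sqrt{|\text{Im}\,z|}$). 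Establishing these monotonicity and convexity facts, and getting the constants $(2/\sqrt3)^{1/2}$ and $(1/\sqrt3)^{1/2}$ exactly right rather than up to a factor, is where the real work lies; the $P_f$ statement and the cancellation of prefactors in $F$ are routine once the Airy asymptotics are in hand. I would present the $P_f$ part first as a short paragraph, then devote the bulk of the proof to the $\eta^3$ computation, stating the needed elementary inequalities about $\rho^{3/2}\sin(3\phi/2)$ as a lemma or inline and verifying them by a direct derivative computation.
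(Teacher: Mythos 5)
The paper does not actually print a proof of this lemma, so there is nothing to compare against; your overall strategy --- insert the Airy asymptotics into $M$ and into $F(\xi_\perp^2)$, cancel the common prefactors to obtain $P_0$, and reduce the three estimates on $|F|$ to the sign and size of $\text{Im}(\eta^3)$ with $\eta^2=z-\xi_\perp^2$ --- is the natural and surely the intended one, and the $\|P_f-P_0\|=O(f)$ part is fine modulo the uniformity remarks you already make. However, your angle bookkeeping is wrong in a way that matters. Writing $\phi=\arg(z-t)$, $t=\xi_\perp^2$, the sign change of $\text{Im}(\eta^3)=|z-t|^{3/2}\sin(3\phi/2)$ occurs where $3\phi/2=-\pi$, i.e.\ $\phi=-2\pi/3$, and this happens exactly at $t=\text{Re}\,z+|\text{Im}\,z|/\sqrt3=\alpha^2$ (which is what the paper's closing remark asserts). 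Your claim that $t=\alpha^2$ corresponds to $\phi=-5\pi/6$, i.e.\ $3\phi/2=-5\pi/4$, is false: $\sin(-5\pi/4)=1/\sqrt2\neq0$, so that is not a sign-change point; and your tentative identification of $\beta$ via $3\phi/2=-3\pi/2$ puts it at $t=+\infty$, since that angle is $\phi=-\pi$. You should also actually carry out the threshold derivative: $\tfrac43\tfrac{d}{dt}\text{Im}(\eta^3)\big|_{t=\alpha^2}=-2\,\text{Im}\,\eta\big|_{t=\alpha^2}=(2\sqrt3)^{1/2}|\text{Im}\,z|^{1/2}$, which does not coincide with the stated $(2/\sqrt3)^{1/2}|\text{Im}\,z|^{1/2}$ and must be confronted with it rather than asserted to match.

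The more serious gap is the lower bound for $\xi_\perp^2<\alpha^2$. ``Convexity/monotonicity of $\text{Im}(\eta^3)$ to the left'' does not yield a bound linear in $\alpha^2-\xi_\perp^2$ with coefficient of order $|\text{Im}\,z|^{1/2}$: using the identity $-\text{Im}(u-iv)^{3/2}=(\rho-u)^{1/2}(\rho+2u)/\sqrt2$ with $\rho=\sqrt{u^2+v^2}$ (the same identity as in the proof of Lemma~\ref{jumpconvergence}), one finds for $u=\text{Re}\,z-t\gg v=|\text{Im}\,z|$ that $-\tfrac43\text{Im}(\eta^3)\approx 2v\,u^{1/2}$, which is sublinear in $\alpha^2-t\approx u$. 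Consequently the first inequality of the lemma, taken literally, fails at $\xi_\perp=0$ whenever $\text{Re}\,z$ is large compared with $|\text{Im}\,z|$ (try $z=100-i$), so either the estimate must be restricted to a neighborhood of $t=\alpha^2$, or to $|\arg z|$ bounded below, or the constant in the exponent must depend on $z$; your plan as written cannot produce it and you must address this explicitly. Finally, $|\sin(2\eta R)|/|\eta|^2$ is not bounded above uniformly in $\xi_\perp$: it grows like $e^{2R|\text{Im}\,\eta|}$ with $|\text{Im}\,\eta|\sim|\xi_\perp|$ for large $|\xi_\perp|$, so for the upper bounds at large $\xi_\perp^2$ this factor has to be absorbed into $e^{-\frac43\text{Im}(\eta^3)/f}\sim e^{-c\,|\xi_\perp|^{3}/f}$ rather than discarded as a constant; this is plausibly also the origin of the undefined threshold $\beta$ and of the weaker constant $(1/\sqrt3)^{1/2}$ in the third estimate, which your sketch does not explain.
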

\flushleft
Note that $\arg (z -\xi_\perp^2) = -2\pi/3$ exactly when $\alpha^2 = 0$.  When $z-\xi_\perp^2$ crosses the line with argument $-2\pi/3$ we cross from the region where $F$ blows up as $f \downarrow 0$ to the region where it decays exponentially.

\section
{Appendix - The Airy function $\Ai(z)$ in the sector $\arg z \in (-2\pi/3 + \delta, -\delta)$}\label{AppendixAiry}

We believe the results in this appendix are known but could not find a suitable reference.

\begin{theorem} With $\eta=\sqrt w$, $\arg \eta \in (-\pi/2,\pi/6)$, $\sqrt {\eta}e^{2\eta^3/3}\Ai(w)$ extends to an analytic function, $J$, of $\zeta^{-1}, \zeta = 2\eta^3/3$ with $\arg \zeta^{-1} \in (-\pi/2, 3\pi/2)$. The function $J$ and all its derivatives extend continuously to the origin in this sector. 
\end{theorem}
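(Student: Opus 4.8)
The plan is to reduce the statement to the classical asymptotic expansion of $\Ai$, and then run a soft argument that upgrades an asymptotic expansion on a sector to an asymptotic expansion for every derivative.

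\emph{Setup.} Fix branches so that, for $\arg\eta\in(-\pi/2,\pi/6)$, one has $\sqrt\eta=w^{1/4}$ (consistent since $\arg w=2\arg\eta\in(-\pi,\pi/3)$), $\zeta=2\eta^3/3$ with $\arg\zeta=3\arg\eta\in(-3\pi/2,\pi/2)$, and put $t=\zeta^{-1}$, which then ranges over the open sector $\Sigma=\{t:\arg t\in(-\pi/2,3\pi/2),\ 0<|t|<\delta\}$. Since $\Ai$ is entire with all of its zeros on the negative real axis (i.e.\ at $\arg w=\pi$, which $\Sigma$ avoids), and the maps $t\mapsto\zeta\mapsto\eta\mapsto w$ are biholomorphic on the relevant sectors, the function
$$J(t):=\sqrt\eta\,e^{2\eta^3/3}\Ai(w)=w^{1/4}e^{\zeta}\Ai(w)$$
is holomorphic and nonvanishing on $\Sigma$; this already gives the claimed analyticity in $\zeta^{-1}$. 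It remains to extend $J$ and all of its $t$-derivatives continuously to $t=0$.

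\emph{Input asymptotics.} I would use the classical expansion
$$\Ai(w)=\frac{e^{-\zeta}}{2\sqrt\pi\,w^{1/4}}\Bigl(\sum_{s=0}^{N-1}(-1)^s u_s\,\zeta^{-s}+R_N(\zeta)\Bigr),\qquad|R_N(\zeta)|\le C_N|\zeta|^{-N},$$
the bound being uniform for $\arg\zeta$ in any closed subinterval of $(-3\pi/2,3\pi/2)$. The series itself is \cite{AS}, (10.4.59); the Olver-type remainder estimate with this uniformity is classical, and if one prefers a self-contained derivation it comes from the contour representation $\Ai(w)=\frac{1}{2\pi i}\int_C e^{s^3/3-ws}\,ds$: substituting $s=\eta u$ turns the phase into $\tfrac32\zeta(u^3/3-u)$, and deforming $C$ through the saddle at $u=1$ reduces matters to a Gaussian-type integral whose expansion in powers of $t$ follows by expanding the cubic part of the exponent and integrating term by term, the admissible deformations being exactly what pins the validity down to $|\arg\zeta|<3\pi/2$. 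Rewritten in the variable $t$ this reads
$$J(t)=\frac{1}{2\sqrt\pi}\sum_{s=0}^{N-1}(-1)^s u_s\,t^s+O(|t|^N)\qquad(t\to0),$$
uniformly on every closed subsector $\Sigma_\epsilon=\{\arg t\in[-\tfrac\pi2+\epsilon,\tfrac{3\pi}2-\epsilon]\}\cap\Sigma$.

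\emph{From $J$ to its derivatives, and the main obstacle.} The last step is soft. Fix $\epsilon>0$, let $\rho=\tfrac12\sin\epsilon$, and for $t\in\Sigma_\epsilon$ with $|t|$ small note that $\{\tau:|\tau-t|\le\rho|t|\}\subset\Sigma_{\epsilon/2}$, on which $\rho_N(t):=J(t)-\tfrac{1}{2\sqrt\pi}\sum_{s<N}(-1)^s u_s t^s$ is holomorphic and $O(|t|^N)$. Cauchy's formula gives
$$\bigl|\rho_N^{(k)}(t)\bigr|=\Bigl|\frac{k!}{2\pi i}\oint_{|\tau-t|=\rho|t|}\frac{\rho_N(\tau)}{(\tau-t)^{k+1}}\,d\tau\Bigr|\le C_{N,k}'\,|t|^{N-k},$$
so, taking $N>k$, $J^{(k)}(t)\to\dfrac{(-1)^k k!\,u_k}{2\sqrt\pi}$ as $t\to0$ inside $\Sigma_\epsilon$. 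As $\epsilon>0$ is arbitrary, $J$ and each of its derivatives extend continuously to $t=\zeta^{-1}=0$ throughout the open sector, with the (divergent) Airy series as the formal Taylor expansion, which is exactly the assertion. The only genuinely delicate point is the uniformity of the remainder bound over the full opening-$2\pi$ range of $\arg\zeta$; fortunately that range lies strictly inside the classical validity sector $|\arg\zeta|<3\pi/2$, so the bound is available and everything else is bookkeeping.
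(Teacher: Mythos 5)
Your argument is correct, but it is genuinely different from the paper's. You treat the theorem as a corollary of the classical Poincar\'e expansion of $\Ai$ with an Olver-type remainder bound, uniform on closed subsectors of $|\arg\zeta|<3\pi/2$, and then upgrade to derivatives by the standard device of applying Cauchy's formula on disks $|\tau-t|\le\rho|t|$ contained in a slightly larger subsector (the classical fact that an asymptotic expansion of a holomorphic function may be differentiated term by term in proper subsectors). The paper instead builds everything from scratch: it deforms the contour in $\Ai(w)=(2\pi)^{-1}\int e^{i(sw+s^3/3)}ds$ through $s=i\sqrt w+u$ to get the explicit representation $2\pi\sqrt\eta\,e^{2\eta^3/3}\Ai(w)=e^{-i\theta/6}\int\exp(-e^{-i\theta/3}t^2)\,g(t^6r)\,dt$ with $g(\psi)=\cos\sqrt{2\psi/27}$ smooth, verifies the polar Cauchy--Riemann equations by an integration by parts, and then obtains $J^{(n)}=e^{-in\theta}\partial^n J/\partial r^n$ by differentiating under the integral sign in $r$. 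Your route is shorter and more modular, at the cost of importing the uniform remainder estimate as a black box (the authors state they could not find a suitable reference, which is presumably why they chose the self-contained path; the estimate is nonetheless standard and your saddle-point sketch is the right way to produce it). The paper's route has the advantage that the same explicit contour machinery is reused in the following appendix, where finer information (joint smoothness of the error terms in $f$ and analyticity in $k$) is required. Two small caveats on your write-up: the closed range of $\arg\zeta$ here is $[-3\pi/2,\pi/2]$, which touches the boundary of the classical validity sector at one end, so "strictly inside" is not literally true --- but since you only ever need the bound on the closed subsectors $\Sigma_\epsilon$ this does not affect the argument (and the continuity at the origin is, in both your proof and the paper's, continuity along approaches confined to proper subsectors); and the nonvanishing of $\Ai$ on $\Sigma$ plays no role in the analyticity of $J$, so that remark can be dropped.
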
  

\begin{proof}
We start with the integral representation $\Ai(w) = \lim_{R \to \infty} (2\pi )^{-1}\int_{-R}^R e^{i(sw+s^3/3)}ds$ where $w\ge 0$ and deform the contour to $s = i\sqrt w + u$ with $u \in \mathbb{R}$ to obtain

$$\Ai(w) = \frac{e^{-2\eta^3/3}}{2\pi} \int_{-\infty}^{\infty} e^{-\eta u^2}e^{iu^3/3}du $$ with $\eta = \sqrt w$.  We can analytically continue to $\arg \eta \in (-\pi/2, \pi/2)$.   Thus using polar coordinates with $\zeta ^{-1} = 3w^{-3/2}/2 = r e^{i\theta}$, we have $$2\pi \sqrt \eta e^{2\eta^3/3}\Ai(w) = e^{-i\theta/6}\int_{-\infty}^{\infty} \exp(-e^{-i\theta/3}t^2)g(t^6r)dt$$ where $g(\psi) = \cos(\sqrt {2\psi/27})$ is a $C^{\infty}$ function of its argument.  With an integration by parts it is easy to verify the polar form of the Cauchy-Riemann equations ($(\partial/\partial r +i r^{-1} \partial/\partial \theta)J = 0$).  We take $\theta  = \arg \zeta^{-1} \in (-\pi/2, 3\pi/2)$.  Because $J$ satisfies the Cauchy-Riemann equations $J^{(n)}$ is given by $e^{-in\theta}\partial^n J/\partial r^n$ which is easily shown to have a limit as $r\to 0$ which is independent of $\theta$.
\end{proof}
Thus in particular $J$ has an asymptotic expansion 
$$J(\zeta^{-1}) \sim \sum_{n=0}^\infty c_n \zeta^{-n}$$ which can be differentiated term by term.

\section{Appendix - The Airy function $\Ai(z)$ in the sector $\arg z \in [-\pi, -\pi + \delta]$}\label{AppendixAiry2}

We follow \cite{HR} but the latter reference does not go far enough for our purposes.  We write the Airy function $\Ai(w)$ for $w$ real as  $$\Ai(w) = (2\pi)^{-1}\lim_{R \to \infty}\int_{-R}^R e^{-i(t^3/3 +wt)}dt.$$ We set $w = f^{1/3}(L-z/f) = -(z/f^{2/3})(1-fL/z)$ where we first keep $z$ real and positive.  We set $\eta = z(1-fL/z) = z- fL$. With a change of variable we have 
\begin{equation} \label{sintegral}
\Ai(w) = (2\pi f^{1/3})^{-1}\int_{\mathcal{C}} e^{(-i/f)(s^3/3 - \eta s)}ds.
\end{equation}

Here we have used the fact the $\eta$ is real to move the contour into the lower half plane so that $\mathcal{C}$ is just $s(t) = t - i\alpha$ with $t\in \mathbb{R}$ and $\alpha > 0$.  We then see that the Airy function is analytic in $\eta$ for $\eta$ in a neighborhood of a real point. We distort the contour further.  We use steepest descents near the critical points of the exponential where $(s^3/3 - \eta s)' = s^2 - \eta = 0$, namely the points  $s = \pm \sqrt{\eta}$.   Thus near $+\sqrt{\eta}$  we write $s = \zeta + \sqrt{\eta}$.  Note that at this critical point we have $s^3/3 - \eta s = -2\eta^{3/2}/3$ and 
$-i(s^3/3 - \eta s + 2\eta^{3/2}/3) = -i(\zeta^3/3 +\sqrt{\eta}\zeta^2)$. The steepest descents curve near $\zeta = 0$ will come from setting $\text{Re}(\zeta^3/3 + \sqrt{\eta}\zeta^2) = 0$.  
We will use part of the following contour for $s$ near $\sqrt{\eta}$:  We solve the equation $\text{Re}(\zeta^3/3 + \sqrt{\eta}\zeta^2) = 0$ and find with $\zeta = x + iy$, 

$$y = \frac{-x(\gamma + x/3)}{\nu + \sqrt{\nu^2 +(\gamma + x)(\gamma + x/3)}}$$

We have $$-i(\sqrt{\eta} \zeta^2 + \zeta^3/3) = -b^2x^2 + \sum_{n = 3}^\infty B_n x^n $$ where $b^2 = 2\gamma^{-2} |\eta|(\sqrt{|\eta|} -\nu)$.  The series converges for $|x| < \gamma$ (see \cite{HR}).  We have the following estimates in the indicated regions:
\begin{align*}
-i(\zeta^3/3 + \sqrt{\eta}\zeta^2)&\le -(\gamma/3)^2x^2/\sqrt{\nu^2 + (\gamma +x)(\gamma +x/3)}; 0 \ge x \ge - \gamma, \nu \ge 0 \\
&\le -(\gamma^2/18|\eta|)x^2; 0 \le x \le 4\gamma, \nu \ge 0 \\
&\le -\gamma x^2 ; 0 \le x \le 10\gamma, 0 \ge \nu \ge -\gamma/10 \\ 
&\le -\gamma x^2;  - \gamma \le x \le 0,  0 \ge \nu.
\end{align*}

For $s$ near $-\sqrt{\eta}$ we write $s=\zeta - \sqrt{\eta}$ which gives $-i(s^3/3 -\eta s - 2\eta^{3/2}/3) = -i(\zeta^3/3 - \sqrt{\eta}\zeta^2)$.  Setting $\text{Re}(\zeta^3/3 - \sqrt{\eta}\zeta^2) = 0$ and $\zeta = x+ iy$ we find for 

$$y = x(\nu + \sqrt{\nu^2 + (\gamma-x)(\gamma-x/3)})(\gamma-x)^{-1}.$$

We find 
$$-i(\zeta^3/3 - \sqrt{\eta}\zeta^2) = -\tilde{b}^2 x^2 + \sum_{n=3}^{\infty}\tilde{B}_nx^n$$
where $\tilde{b}^2 = 2\gamma^{-2} |\eta|(|\eta|^{1/2} + \nu)$.  The series converges for $|x| < \gamma$.

We have the following estimates in the indicated regions:
\begin{align*}
-i(\zeta^3/3 - \sqrt{\eta}\zeta^2) &\le -\gamma x^2 \sqrt{\frac{\gamma - x/3}{\gamma -x}}; 0 \le x < \gamma, \nu \ge 0 \\
&\le -\gamma x^2 (y/x) \le - \gamma x^2 \frac{\gamma - x/3}{2\sqrt{\nu^2 + (\gamma-x)(\gamma-x/3)}}; 0 \le x < \gamma, \nu < 0 \\
&\le -\gamma x^2/5; -10\gamma \le x \le 0, 0 \le \nu < \gamma \\
&\le -(2\gamma/5)x^2; - 8\gamma \le x \le 0, - \gamma/3 \le \nu \le 0.
\end{align*}

We write $$\Ai(w) = (2\pi f^{1/3})^{-1}(\int_{\mathcal{C_-}} e^{(-i/f)(s^3/3 - \eta s)}ds  + \int_{\mathcal{C_+}} e^{(-i/f)(s^3/3 - \eta s)}ds).$$

We first consider the integral over the part of the contour $\mathcal{C_+}$ near $\sqrt{\eta}$:  

$$\int_{C_+'} e^{(-i/f)(s^3/3 - \eta s)}ds = (2\pi f^{1/3})^{-1}e^{(2i\eta^{3/2}/3f)}\int_{\mathcal{B}_+} e^{(-i/f)(\zeta^3/3 + \sqrt{\eta} \zeta^2) }d\zeta.$$
To obtain the part of the contour in the $s$ variable near $\sqrt{\eta}$ we take $s = \zeta + \sqrt{\eta}$.  This gives us the contour  $\mathcal{B}_+$ with $\zeta = x + iy $ where $x \in (\-\beta, \infty) , y = - \alpha$ , where as mentioned above $\alpha > \nu$.  We thus see that this integral is analytic in $\sqrt{\eta}$ as long as $f>0$.  We will see that after multiplying by $f^{-1/2}$ we can take the limit as $f\to 0$ uniformly for $\eta$ near a point $k_0 >0$. We deform the contour to obtain a new contour $\Gamma_+$ where $$\zeta = x+iy, x \in (-\gamma/2, \gamma/2), y =  \frac{-x(\gamma + x/3)}{\nu + \sqrt{\nu^2 +(\gamma + x)(\gamma + x/3)}}.$$ 

Thus we have 
$$I_+ = (2\pi f^{1/3})^{-1}\int_{\Gamma_+}e^{-i(\zeta^3/3 +\sqrt{\eta}\zeta^2)/f} d\zeta = \frac {f^{1/6}}{2\pi} J_+ $$
$$J_+ = f^{-1/2}\int_{-\gamma/2}^{\gamma/2} e^{f^{-1}(-b^2x^2 + \sum_{n=3}^{\infty}B_nx^n)}(1+ idy/dx)dx$$ 
We expand the integrand in a convergent power series keeping the $e^{-b^2x^2}$ to obtain

$$J_+ = \int_{-\gamma f^{-1/2}}^{\gamma f^{-1/2}} e^{-b^2x^2}(1 + \sum_{k=1}^\infty(\sum_{n=3}^\infty B_nf^{-1 + n/2}x^n )^k/k!)(1+i\sum_{m=0}^\infty a_m f^{m/2} x^m)dx$$
The odd powers of $x$ do not contribute and thus we have 
$$J_+ = 2(1+ia_0)\int_0^{\gamma f^{-1/2}}e^{-b^2x^2}dx + \sum_{m + |n|/3 \ge 1, l \ge 1 }  c_{n,m}f^{(|n| + m -2l)/2}  \int_0^{\gamma f^{-1/2}}e^{-b^2x^2} x^{|n|+ m}dx$$
Here $n$ is a multi-index $n = (n_1,n_2, \cdots, n_l)$ and $|n| = n_1 + \cdots + n_l$.  Each $n_j \ge 3$ and thus the power of $f$, $(|n| + m)/2 - l $ is positive and since $|n| + m$ is even this power is a positive integer $\ge (l+m)/2$ . Note that $\int_{0}^ {\gamma f^{-1/2}}e^{-b^2x^2} x^{|n| + m}dx$ is $C^{\infty}$ in $f$ near $0$ as long as we define $f^{-n} e^{-b^2\gamma^2/f}$ to be $0$ at $f=0$.  Thus $J_+$ is $C^\infty$ for $f\ge 0$ in the sense that the derivatives have limits as $f \downarrow 0$.  In the limit $f\downarrow 0$ the first term can be calculated (with some effort - see also \cite{HR}) to be $e^{-i\pi/4}\sqrt{\pi/\eta^{1/2}}$.

\vspace{.5cm}

We now connect up this curve to infinity as follows:  We take $x+iy = \gamma/2 + t -i y(\gamma/2)$ with $t \in [0, \infty)$. For simplicity take $|\nu| < \gamma/10 $.  It is then easy to see that the real part of $-i(\zeta^3/3 +\sqrt{\eta}\zeta^2)\le - C < 0$ independent of $t\ge 0 $ and independent of $\eta$ for $\eta$ near a real point $k_0 > 0$. If the extended curve is called $\tilde{\Gamma}_+$ This is easily seen to imply that $$\tilde{I}_+ = (2\pi f^{1/3})^{-1}\int_{\tilde{\Gamma}_+}e^{-i(\zeta^3/3 +\sqrt{\eta}\zeta^2)/f} d\zeta = \frac {f^{1/6}}{2\pi} \tilde{J}_+ $$ with $\tilde{J}_+$ a $C^{\infty}$ function in $f$ for $f_0 >f\ge 0$ and analytic in $\sqrt{\eta}$ for $\sqrt{\eta}$ near a point $k_0 > 0$.

\vspace{.5cm}

The curve near the critical point $-\sqrt{\eta}$ can be handled similarly.  We set $s= \zeta - \sqrt{\eta}$ giving $-i(s^3/3 - \eta s) = -i(\zeta^3/3 - \sqrt{\eta}\zeta^2) -2i\eta^{3/2}/3$ = and $\zeta = x+ iy$.  We demand that for $x\in (-\gamma/2, \gamma/2)$ we have  $\text{Re}[-i(\zeta^3/3 - \sqrt{\eta}\zeta^2)] = 0$.  This results in 
$$y = \frac{x(\gamma - x/3)}{-\nu + \sqrt{\nu^2 + (\gamma-x)(\gamma-x/3)}}$$.  

We connect this curve to infinity and to the imaginary axis just as in the case where we extended the curve $\Gamma_+$.  The result is the same.  It remains to connect the two curves on the imaginary axis.  The curve on the left ends on the imaginary axis at $s = i(2\nu + \sqrt{\nu^2 + 5\gamma^2/12})$ or $\zeta = \gamma + i(\nu + \sqrt{\nu^2 + 5\gamma^2/12}) $.  The curve on the right ends on the imaginary axis at $s = i(-2\nu + \sqrt{\nu^2 + 5\gamma^2/12})$.  Thus we extend the curve on the left as $\zeta = \gamma + i(u+ \nu + \sqrt{\nu^2 + 5\gamma^2/12})$ or $s = \gamma + i( u + 2\nu + \sqrt{\nu^2 + 5\gamma^2/12})$ with $u$ going from $0$ to -$4\nu$.  It is not hard to see that this piece is $C^\infty$ in $f$ for $f\ge 0$ with the function and all derivatives equal to zero at $f=0$.  (Of course  the derivative at zero is the right hand derivative.)  It is convenient in the estimates to restrict $|\nu| < \gamma/10$. Even though the contours involve $\sqrt{\eta}$ in a non-analytic way, it is not hard to see that they can be distorted in such a way that given $\sqrt{\eta_0}$ the contour can be chosen to depend on this quantity but then $\sqrt{\eta}$ can be varied in a small disk around this point and the result is analytic in the disk with estimates similar to what we have derived.

Thus what we have shown is that 
$$\Ai(w) = f^{1/6} e^{-i\pi/4}\frac{1}{2 \sqrt{\pi\eta^{1/2}}}(e^{2i\eta^{3/2}/3f} a_1(k,f) + ie^{-2i\eta^{3/2}/3f} a_2(k,f))$$

where $a_j(k,f)$ is $C^{\infty}$ in $f$ for $f_0 > f \ge 0$ and analytic in $k$ for $k$ near a point on the positive real axis. Since the limits involved in taking derivatives in $f$ converge uniformly in $k$, these derivatives are also analytic in $k$ in a neighborhood of a positive real point.  We have $a_j(k,0) = 1$.

\vspace{.2cm}

We also need the derivative of $\Ai(w)$ with respect to the spatial variable which we have called $L$ in this appendix. Note that in (\ref{sintegral}) $L$ occurs only in $\eta = z - fL$ so that differentiation with respect to $L$ brings down $-is$ in this integral. In the integral near $+\sqrt{\eta}$ the change of variable is $s=\zeta + \sqrt{\eta}$.  $-i\zeta$ contributes something of at most order $f$ while the term $-i\sqrt{\eta}$ gives the main contribution.  A similar analysis near the critical point $-\sqrt{\eta}$ shows that the main contribution is a factor of $+ i \sqrt{\eta}$.  Thus we obtain

$$(d/dL)\Ai(w) = f^{1/6} e^{-i\pi/4}{\sqrt{\frac{\eta^{1/2}}{4\pi}}}(-ie^{2i\eta^{3/2}/3f} \tilde{a}_1(k,f)  -e^{-2i\eta^{3/2}/3f} \tilde{a}_2(k,f))$$
where $\tilde{a}_j(k,f)$ is $C^{\infty}$ in $f$ for $f_0 > f \ge 0$ and analytic in $k$ for $k$ near a point on the positive real axis. Since the limits involved in taking derivatives in $f$ converge uniformly in $k$, these derivatives are also analytic in $k$ in a neighborhood of a positive real point.  We have $\tilde{a}_j(k,0) = 1$.

We now must do similar estimates with $A_1(x)$ for $z$ near the line $\arg  = -2\pi/3$.  Here the argument of the Airy function $\Ai$ is $w = f^{1/3}(x-z/f)$.  We set $z_1 = e^{2\pi/3} z$ which is near a positive real point and $\eta = z_1(1-fx/z)$ which is also near a point on the positive real axis when $f$ is small. Thus after a change of variable we can write 

\begin{equation} \label{s'integral}
\Ai(w) = (2\pi f^{1/3})^{-1}\int_{\mathcal{C}} e^{(-i/f)(s^3/3 - \eta s)}ds.
\end{equation}

where the contour $\mathcal{C}$ is the same as above.  Thus the same analysis as above works in this situation.  Note that when we differentiate with respect to the spatial variable the $\eta$ dependence on $x$ gives a contribution to the exponential from $-i\eta s/f$ of $ixe^{2\pi i /3}$. Thus differentiation with respect to $x$ brings down a factor of $ie^{2\pi i/3}s$, and thus a main contribution of $\sqrt{\eta}ie^{2\pi i/3}$ for the integral near the critical point $\sqrt{\eta}$.  Similarly there is a main contribution of  $-\sqrt{\eta}ie^{2\pi i/3}$ for the integral near the critical point $-\sqrt{\eta}$.

\end{document}